%% file: sec0main.tex
\documentclass[11pt]{article}
\usepackage[prependcaption, colorinlistoftodos,textsize=small]{todonotes}
\usepackage{geometry,setspace,appendix,verbatim}
\usepackage{authblk}
\usepackage{graphicx}
\usepackage{subcaption}  
\usepackage{caption}
\usepackage{amsmath,amsfonts,amssymb,amsthm,bbm}
\allowdisplaybreaks
\usepackage{enumerate}
\usepackage{comment}

\usepackage{natbib}
\usepackage{hyperref}
\allowdisplaybreaks

\newcommand{\indep}{\perp \!\!\! \perp} 

\newcommand{\littleOp}{o_p}

\newcommand{\Norm}{{\bf{\rm N}}\,} 

\newcommand{\iidsim}{\stackrel{{\mathrm{iid}}}{\sim}}

\newcommand{\norm}[1]{\left\lVert #1 \right\rVert}

\newcommand{\rbr}[1]{\left(#1\right)}
\newcommand{\sbr}[1]{\left[#1\right]}
\newcommand{\cbr}[1]{\left\{#1\right\}}

\newcommand{\indist}{\xrightarrow[]{d}}
\newcommand{\inprob}{\xrightarrow[]{p}}

\def\*#1{\mathbf{#1}} 

\def\bQ{{\mathbb Q}}

\def\bR{{\mathbb R}}
\def\bE{{\mathbb E}}

\def\sP{{\mathcal P}}
\def\sQ{{\mathcal Q}}
\def\sR{{\mathcal R}}

\def\sX{{\mathcal X}}
\def\sY{{\mathcal Y}}

\def\Var{\mathrm{Var}}
\def\Cov{\mathrm{Cov}}

\newtheorem{theorem}{Theorem}[section]
\newtheorem{corollary}{Corollary}[subsection]

\newtheorem{proposition}[theorem]{Proposition}
\newtheorem{remark}{Remark}[subsection]
\renewcommand{\theremark}{%
  \ifnum\value{subsection}=0
    \thesection.\arabic{remark}
  \else
    \thesubsection.\arabic{remark}
  \fi
}
\makeatother

\newtheorem{assumption}{Assumption}[section]

\newcommand{\labest}[1]{\hat{#1}_{n}}

\newcommand{\allfest}[1]{\hat{#1}_{n+N}^f}
\newcommand{\labfest}[1]{\hat{#1}_{n}^{f}}

\newcommand{\unlabfest}[1]{\hat{#1}_{N}^{f}}

\newcommand{\target}[1]{{#1}_{0}}
\newcommand{\appipwtarget}[1]{{#1}_{0,\text{IPW}}}
\newcommand{\aucfulltarget}[1]{{#1}_{0}^{\text{full},\text{AUC}}}

\newcommand{\appiowtarget}[1]{{#1}_{0,\text{IOW}}}
\newcommand{\targetf}[1]{{#1}^{f}_{0}}
\newcommand{\opt}[1]{{#1}^{*}}
\newcommand{\ppest}[1]{\hat{#1}^{\text{PP}}}
\newcommand{\labmdest}[1]{\hat{#1}_{\text{lab}}}
\newcommand{\fulllabmdest}[1]{\hat{#1}_{\labfull}}
\newcommand{\unlabeledlabmdest}[1]{\hat{#1}_{\labunlabeled}}
\newcommand{\labeledlabmdest}[1]{\hat{#1}_{\lablabeled}}
\newcommand{\labmdestpi}[1]{\hat{#1}_{\text{lab}}^{\hat{\pi}}}
\newcommand{\fulllabmdestpi}[1]{\hat{#1}_{\frac{C}{\hat{\pi}(X)}}}

\newcommand{\labmdestf}[1]{\hat{#1}^f_{\text{lab}}}
\newcommand{\fulllabmdestf}[1]{\hat{#1}^f_{\labfull}}
\newcommand{\unlabeledlabmdestf}[1]{\hat{#1}^f_{\labunlabeled}}
\newcommand{\labeledlabmdestf}[1]{\hat{#1}^f_{\lablabeled}}
\newcommand{\labmdestpif}[1]{\hat{#1}^{f,\hat{\pi}}_{\text{lab}}}
\newcommand{\fulllabmdestpif}[1]{\hat{#1}^{f}_{\frac{C}{\hat{\pi}(X)}}}

\newcommand{\fullmdestf}[1]{\hat{#1}^{f}}
\newcommand{\unlabeledfullmdestf}[1]{\hat{#1}^{f}_\fullunlabeled}
\newcommand{\labeledfullmdestf}[1]{\hat{#1}^{f}_\fulllabeled}
\newcommand{\mdppest}[1]{\hat{#1}_{\text{MD}}^{\text{PP}}}

\newcommand{\fullest}[1]{\hat{#1}}
\newcommand{\labeledest}[1]{\hat{#1}^{\labeled}}
\newcommand{\unlabeledest}[1]{\hat{#1}^{\unlabeled}}

\newcommand{\recest}[2]{\hat{#1}_{\hat{\omega}}^{\hat{#2}}}
\newcommand{\recomega}[2]{\hat{#1}_{\omega}^{\hat{#2}}}
\newcommand{\Lab}{\mathcal{L}}
\newcommand{\Unlab}{\mathcal{U}}

\def\joint{{\mathbb{P}^*}}
\def\jointcov{{\mathbb{Q}^*}}
\def\jointf{{\mathbb{P}^*_{X,f(X)}}}
\def\marginal{{\mathbb{P}^*_{X}}}
\def\marginalalt{{\tilde{\mathbb{P}}^*_{X}}}
\def\conditional{{\mathbb{P}^*_{Y\mid X}}}

\def\jointapp{{\mathbb{P}_{R,f(X),Y}^*}}
\def\marginalapp{{\mathbb{P}^*_{R,f(X)}}}
\def\conditionalapp{{\mathbb{P}^*_{Y\mid R,f(X)}}}

\def\IFtheta{{\phi}}
\def\IFdelta{{\psi}}
\def\TPR{{\text{TPR}(\alpha)}}
\def\FPR{{\text{FPR}(\alpha)}}
\def\AUC{{\text{AUC}}}

\def\unlabeled{{\text{IOW}}}
\def\full{{\text{full}}}
\def\labeled{{C=1}}
\def\lab{{\text{lab}}}

\def\unlabeled{{C=0}}

\def\labfull{{\frac{C}{\pi(X)}}}
\def\labunlabeled{{q(X)C}}
\def\fullunlabeled{{1-\pi(X)}}
\def\lablabeled{{C}}
\def\fulllabeled{{\pi(X)}}

\begin{document}
\title{Generalized Prediction-Powered Inference, \\
with Application to Binary Classifier Evaluation }

\doublespacing

\author[1]{Runjia Zou}
\author[1,2]{Daniela Witten}
\author[1,3,4]{Brian Williamson}

\affil[1]{Department of Biostatistics, University of Washington, WA, USA}
\affil[2]{Department of Statistics, University of Washington, WA, USA}
\affil[3]{Biostatistics Division, Kaiser Permanente Washington Health Research Institute, WA, USA}
\affil[4]{Vaccine and Infectious Disease Division, Fred Hutchinson Cancer Center, WA, USA}

\date{}

\maketitle

\begin{abstract}
In the partially-observed outcome setting, a recent set of proposals known as ``prediction-powered inference" (PPI) involve (i) applying a pre-trained machine learning model to predict the response, and then (ii) using these predictions to obtain an estimator of the parameter of interest with  asymptotic variance  no greater than that which would be obtained using only the labeled observations. While existing PPI proposals consider estimators arising from  M-estimation, in this paper we generalize PPI to any  regular asymptotically linear estimator. Furthermore, by situating PPI within the context of an existing rich literature on missing data and semi-parametric efficiency theory, 
we show that while PPI does not achieve the semi-parametric efficiency lower bound outside of very restrictive and unrealistic scenarios, it can be viewed as a computationally-simple alternative to proposals in that literature. We exploit connections to that literature to propose modified PPI estimators that can handle three distinct forms of covariate distribution shift. Finally, we illustrate these developments by constructing PPI estimators of true positive rate, false positive rate, and area under the curve via numerical studies.
\end{abstract}

\noindent\textbf{Keywords:}  AUC, semiparametric efficiency, missing data, covariate distribution shift, black-box prediction models.
\newpage
\input{sec1intro}
\input{sec2generalizedPPI}
\input{sec3covshift}

\input{sec4estpi}

\input{sec5binarymeric}

\input{sec6simulations}
\input{sec7realdata}

\input{sec8discussion}

\section*{Acknowledgments}
We thank R. Yates Coley for helpful conversations that contributed to the development of this work. 
This work was funded by NIH R01MH125821 and NIH R37AI131771 to BDW, and by 
ONR N00014-23-1-2589, NSF DMS
2322920 and  2514344, and
NIH 5P30DA048736 to DW.
The content is solely the responsibility of the authors and does not necessarily represent the official views of the funding agencies.
\bibliography{ref}
\appendix 
\input{secAppendix}

\end{document}

%% file: sec1intro.tex
\section{Introduction}
\label{sec:intro}
Consider random variables $Y \in \sY$ and $X \in \sX$, where $(X,Y)\sim \joint$. The goal is to make inference on a functional of $\joint$ in a setting
where we have access to $n$ independent observations of $(X,Y) \sim \joint = \marginal\conditional $, and an additional $N$ independent observations of $X \sim \marginal$, where $n \ll N$. We will also consider the setting of \textit{covariate distribution shift}, where the additional $N$ observations follow a different marginal distribution.
 Some examples are as follows:

\begin{list}{}{}
\item{\bf Example 1:} 
Social media companies have access to the text of  billions of user comments ($X$) per day. However, a definitive label ($Y$) for whether or not a comment is problematic   requires  time-consuming and expensive hand-labeling  by a human. 
 \item{\bf Example 2:}
 Obtaining clinical measurements ($X$) for a patient --- for instance, blood pressure, heart rate, etc. --- is often relatively low-cost. However, obtaining accurate clinical labels ($Y$) for each patient --- for instance, disease diagnosis --- may be much more expensive. 
 \end{list}
Of course, we could simply conduct inference on  $\target{\theta} := \theta(\joint)$ using the $n$ observations of $(X,Y)$. However, we wish to use the additional $N$ observations of $X$  to improve inference on $\target{\theta}$ (i.e., to reduce the variance).  Broadly speaking, this problem has been well-studied through the lens of missing data, semi-parametric efficiency, causal inference, and survey sampling  \citep{robins1994estimation,robins1995semiparametric,tsiatis2006semiparametric,bickel1993efficient,chernozhukov2018double,deville1992calibration,breslow2009improved}. 

Recently, a set of proposals in the machine learning literature, collectively referred to as \emph{prediction-powered inference} (PPI), have centered on the setting where we  have access to a function $f(\cdot): \sX \rightarrow \sY$ (e.g., a prediction model), such that $Y \approx f(X)$; this function is either deterministic or estimated based on an additional independent dataset (so that we can view it as deterministic conditional on that other data).  In either case, we assume that the operating characteristics of $f(\cdot)$ are unknown. For instance, in the context of Example 1, $f(\cdot)$ might be a large language model that predicts whether a given comment is offensive; 
and in the context of Example 2, it might be a decision tree that predicts disease risk from clinical covariates.
We define $\Lab_n:=\cbr{(X_i, f(X_i), Y_i)}_{i=1}^n$ to be the ``labeled" dataset, and $\, \Unlab_N := \cbr{(X_i, f(X_i))}_{i=n+1}^{n+N} $ to be the ``unlabeled" dataset. The key idea of PPI is to augment an estimator constructed only using $\Lab_n$ with a rectifier that incorporates both $f$ and $\Unlab_N$.

In the original PPI paper, \cite{angelopoulos2023prediction}  proposed an estimator that incorporates both $\Lab_n$ and  $\Unlab_N$ into the estimation of $\theta(\joint)$ in the context of M-estimation. Extensive follow-up work in the PPI setting has led to improved M-estimators or Z-estimators \citep{angelopoulos2023ppi++, zrnic2024cross,gan2024prediction,miao2025assumption}. 

Despite this recent interest in PPI, substantial gaps remain: 
\begin{enumerate}
    \item \emph{Existing PPI estimators are defined in terms of M-estimators or Z-estimators; this limits the application of PPI.} 
    While \cite{miao2024task} and \cite{gronsbell2026another} provide preliminary steps towards extending PPI beyond M- and Z- estimation, their proposals lack theoretical grounding and applicability beyond least squares, respectively.  
    \item \emph{Existing PPI estimators do not provide a general and theoretically-justified strategy for handling the possibility of covariate distribution shift between $\Lab_n$ and $\Unlab_N$.}  
   
 While methods in \cite{gronsbell2026another} and \cite{zrnic2024active} can address this problem in certain limited scenarios, they do not provide a theoretically-justified set of solutions to target the various parameters that may be of interest in the context of  covariate distribution shift. 
\end{enumerate}
The main contributions of this paper are as follows:

\begin{itemize} 
\item We develop a general framework for PPI, in which the target of inference $\theta(\joint)$ need not arise as the solution to a population-level M-estimation  problem. Given a regular asymptotically linear estimator (ALE) of $\theta(\joint)$  based on $\Lab_n$, we can devise another ALE of  $\theta(\joint)$ based on $\Lab_n$ and $\Unlab_N$ that is guaranteed to have variance no higher than the original estimator. We also present the analytic form of the variance. 

\item The existing PPI literature does not clarify the motivation for the form of the rectifier used. We provide novel theoretical results justifying the form of the rectifier used in PPI, and we clarify that the PPI estimator will not achieve the semiparametric efficiency bound \citep{robins1994estimation,robins1995semiparametric} except under strong conditions.

\item We connect our proposed PPI framework to the rich and well-developed literature of missing-data methods  \citep{robins1994estimation,robins1995semiparametric,chernozhukov2018double}. Specifically, we show that PPI estimators can be view as part of the augmented inverse probability weighted estimator class. While PPI estimators will in general not achieve the semi-parametric efficiency lower bound in the missing data setting, they have the advantage of straightforward computation for a variety of targets of inference. This complements recent work by \cite{xu2025unified}, who studied efficient semi-parametric estimators in the semi-supervised (rather than the missing data) setting. 

\item Inspired by the missing data framework, we propose PPI estimators that can accommodate   covariate distribution  shift. We incorporate both $\Lab_n$ and $\Unlab_N$ to conduct inference for  three distinct targets that may arise in the context of distribution shift. Two of these targets, functionals only of the labeled or unlabeled data,  have not been previously considered in the  PPI literature, and proposals for the third (a functional of the combined labeled and unlabeled data) lack theoretical guarantees achieved by our estimator.

\item We illustrate our ideas in the context of binary classifier evaluation. Specifically, we propose PPI estimators of the  true positive rate (TPR), false positive rate (FPR), and area under the curve (AUC); these three quantities are not typically formulated through the lens of Z- or M-estimation, and thus could not be easily handled using the prior PPI literature. 
\end{itemize}

The rest of this paper is organized as follows. In Section~\ref{sec:method}, we show that any ALE that can be computed on $\Lab_n$ can be easily extended to obtain a PPI estimator that incorporates $\Unlab_N$; further,  we connect this proposal to  the missing data literature. In Sections~\ref{sec:covariate-shift} and \ref{sec:estpi}, we exploit this connection to propose PPI estimators that are suitable in the context of covariate distribution shift, for three distinct targets of inference. 
In Section~\ref{sec:binaryclassifier}, we apply these ideas to develop PPI estimators of TPR, FPR, and AUC. We study the performance of our estimators on  simulated data in Section~\ref{sec:simulation} and apply them to wine quality data \citep{cortez2009modeling} in Section~\ref{sec:realdata}. We provide concluding remarks in Section~\ref{sec:discussion}. All theoretical results are proven in the Appendix. Scripts for replicating our numerical results are available at \url{https://github.com/rz001010/generalizedPPI}.

%% file: sec2generalizedPPI.tex
\section{A generalized PPI estimator} \label{sec:method}

\subsection{A rectified regular ALE}

Our interest lies in a functional $\theta: \sP \rightarrow \bR$, with target $\target{\theta} := \theta(\joint)$, where $\joint \in \sP$. Assume that $(X_1,Y_1), \ldots, (X_n, Y_n)  \iidsim
 \marginal\conditional$, and that $X_{n+1},\ldots,X_{n+N} \iidsim
 \marginal$.  Let $\labest{\theta}:=\hat{\theta}\rbr{(X_i,Y_i)_{i=1}^n}$ denote   a regular ALE of $\target{\theta}$   with influence function $\IFtheta(x,y)$. That is,
\begin{equation}
\labest{\theta} - \target{\theta} = \frac{1}{n} \sum_{i=1}^n \IFtheta{(X_i, Y_i)} + \littleOp(n^{-1/2}).
\label{eq:labestALE}
\end{equation} 
This estimator only uses the $n$ labeled observations. To incorporate the unlabeled observations $X_{n+1},\ldots,X_{n+N}$, we first  define $\target{\delta}:=\delta\rbr{\marginal}$, for which $\labest{\delta}:=\hat{\delta}\rbr{(X_i)_{i=1}^n}$  and  $\hat{\delta}_{n+N}:=\hat{\delta}\rbr{(X_i)_{i=1}^{n+N}}$ are regular ALEs with influence function $\psi(x)$,  so that  
\begin{equation}
\labest{\delta} - \target{\delta} = \frac{1}{n} \sum_{i=1}^n \IFdelta(X_i) + o_p(n^{-1/2}) \;\;\;  \text{and} \;\;\; \hat{\delta}_{n+N} - \target{\delta} = \frac{1}{n+N} \sum_{i=1}^{n+N} \IFdelta(X_i) + o_p(n^{-1/2}).
\label{eq:recestALE}
\end{equation}
We then  define the  \textit{rectified} estimator
\begin{equation}
\recest{\theta}{\delta}:= \labest{\theta}+\hat{\omega}\rbr{\hat{\delta}_{n+N}-\labest{\delta}}, \label{eq:recest}
\end{equation}
where we will refer to $\hat{\delta}_{n+N}-\labest{\delta}$ as the \emph{rectifier}. 
The following result guarantees that if $\hat\omega$ is chosen carefully, then this estimator will improve upon (and never worsen) the asymptotic variance of $\labest{\theta}$. 

\begin{theorem}[Improving $\hat\theta_n$ with a rectifier] 
\label{thm:recest} 
 Suppose that $\hat{\omega}\xrightarrow{p}\omega$ and  $\frac{n}{N}\rightarrow\lambda$. 
The following hold:
\begin{enumerate}[(i)]
    \item Define \begin{equation}\label{eq:recest_var}
    \rbr{\sigma_{\omega}^{\IFdelta}}^2: =\bE[\IFtheta(X,Y)^2]-\frac{2\omega}{1+\lambda}\bE[\IFtheta(X,Y)\IFdelta(X)]+\frac{\omega^2}{1+\lambda}\bE[\IFdelta(X)^2].\end{equation}
    Then 
    \begin{equation*}\sqrt{n}\rbr{\recest{\theta}{\delta}-\target{\theta}}\indist \Norm\rbr{0, \rbr{\sigma_{\omega}^{\IFdelta}}^2}.
    \end{equation*}   
    \item  Define
    $\opt{\omega}:= \frac{\bE[\IFtheta(X,Y)\IFdelta(X)]}{\bE[\IFdelta(X)^2]}$. When viewed as a function of $\omega$, $(\sigma_{\omega}^{\IFdelta})^2$ achieves its minimal value of $$  (\sigma_{\opt{\omega}}^{\IFdelta})^2=\bE[\IFtheta(X,Y)^2]-\frac{\rbr{\bE[\IFtheta(X,Y)\psi(X)]}^2}{(1+\lambda)\bE[\psi(X)^2]}$$  when  $\omega=\opt{\omega}$. 
This is  no greater than $\bE[\IFtheta(X,Y)^2]$, the asymptotic variance of $\hat{\theta}_n$.
    \item Define $\opt{\IFdelta}(X):=\bE[\IFtheta(X,Y)\mid X]$.  When viewed as a function of $\psi(\cdot)$, $(\sigma_{\omega^*}^{\IFdelta})^2$ achieves its minimal value of $$(\sigma_{\opt{\omega}}^{\opt{\IFdelta}})^2:=\bE[\IFtheta(X,Y)^2]-\frac{1}{1+\lambda}\Var(\bE[\IFtheta(X,Y)\mid X])$$ when   
 $\opt{\IFdelta}(X)=\bE[\IFtheta(X,Y)\mid X]$. 
\end{enumerate}
\end{theorem}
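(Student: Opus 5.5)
The plan is to reduce everything to a single linear expansion and then optimize a quadratic form. First, combine \eqref{eq:labestALE} and \eqref{eq:recestALE}. Write $\bar\psi_{L}:=\frac1n\sum_{i=1}^n\IFdelta(X_i)$ and $\bar\psi_{U}:=\frac1N\sum_{i=n+1}^{n+N}\IFdelta(X_i)$. Then
\begin{equation*}
\hat\delta_{n+N}-\labest{\delta}=\frac{n}{n+N}\bar\psi_L+\frac{N}{n+N}\bar\psi_U-\bar\psi_L+o_p(n^{-1/2})=\frac{N}{n+N}\rbr{\bar\psi_U-\bar\psi_L}+o_p(n^{-1/2}).
\end{equation*}
Both sample means are $O_p(n^{-1/2})$, since $N\ge n$ eventually. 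Because $\hat\omega\inprob\omega$, Slutsky gives
\begin{equation*}
\sqrt n\rbr{\recest{\theta}{\delta}-\target{\theta}}=\sqrt n\,\bar\phi_L+\omega\frac{N}{n+N}\sqrt n\rbr{\bar\psi_U-\bar\psi_L}+o_p(1),
\end{equation*}
where $\bar\phi_L:=\frac1n\sum_{i=1}^n\IFtheta(X_i,Y_i)$. Also $\frac{N}{n+N}\to\frac{1}{1+\lambda}$.

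The labeled block $(\sqrt n\bar\phi_L,\sqrt n\bar\psi_L)$ is independent of $\sqrt n\bar\psi_U$. The CLT applies to each block, using mean-zero influence functions with finite second moments (regularity). The labeled pair is jointly normal with covariance entries $\bE[\IFtheta^2]$, $\bE[\IFtheta\IFdelta]$ and $\bE[\IFdelta^2]$. The unlabeled term $\sqrt n\bar\psi_U=\sqrt{n/N}\sqrt N\bar\psi_U$ is asymptotically normal with variance $\lambda\bE[\IFdelta^2]$.

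The limiting variance is then
\begin{equation*}
\bE[\IFtheta^2]-\frac{2\omega}{1+\lambda}\bE[\IFtheta\IFdelta]+\frac{\omega^2}{(1+\lambda)^2}(1+\lambda)\bE[\IFdelta^2].
\end{equation*}
This matches \eqref{eq:recest_var}, proving (i). This bookkeeping is the main place care is needed: the $\frac{N}{n+N}$ factor and the $\lambda$ from the unlabeled variance combine into a single $\frac{1}{1+\lambda}$. I also need to say whether $\lambda=0$ is allowed; the argument still works, with the unlabeled term vanishing. One caveat is that the $o_p(n^{-1/2})$ remainder for $\hat\delta_{n+N}$ is assumed in \eqref{eq:recestALE}, so nothing extra is needed there.

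For (ii), note that $(\sigma^\IFdelta_\omega)^2$ is a convex quadratic in $\omega$ whenever $\bE[\IFdelta^2]>0$. Setting the derivative to zero gives $\opt\omega$. Substituting it back gives $\bE[\IFtheta^2]-\frac{(\bE[\IFtheta\IFdelta])^2}{(1+\lambda)\bE[\IFdelta^2]}$, which is at most $\bE[\IFtheta^2]$. That upper bound is also the value at $\omega=0$, i.e.\ the variance of $\labest{\theta}$.

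For (iii), I must maximize $(\bE[\IFtheta\IFdelta])^2/\bE[\IFdelta^2]$ over $\IFdelta$. The first step is to condition on $X$, which gives $\bE[\IFtheta\IFdelta]=\bE[\opt\IFdelta(X)\IFdelta(X)]$. Cauchy--Schwarz then bounds the ratio by $\bE[\opt\IFdelta(X)^2]$, with equality when $\IFdelta\propto\opt\IFdelta$. Finally, $\bE[\opt\IFdelta(X)]=\bE[\IFtheta]=0$, so $\bE[\opt\IFdelta(X)^2]=\Var(\bE[\IFtheta\mid X])$. This yields the stated minimal value. The ratio is scale invariant, so choosing $\IFdelta=\opt\IFdelta$ itself attains the minimum.
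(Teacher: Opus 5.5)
Your proposal is correct and follows essentially the same route as the paper's proof. Both arguments linearize the rectified estimator and apply the CLT to the independent labeled and unlabeled blocks, with $\frac{N}{n+N}\to\frac{1}{1+\lambda}$. Both absorb $\hat\omega-\omega$ via Slutsky, minimize the quadratic in $\omega$, and in (iii) project $\phi$ onto functions of $X$ before applying Cauchy--Schwarz. One small fix: $\bar\psi_U=O_p(n^{-1/2})$ holds because $N^{-1/2}=\sqrt{n/N}\,n^{-1/2}$ with $n/N\to\lambda<\infty$, not because $N\ge n$ eventually, which fails when $\lambda>1$.
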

We prove Theorem \ref{thm:recest} in  Appendix \ref{sec:recestproof}.
Parts (i) and (ii) of Theorem~\ref{thm:recest} reveal that $\recest{\theta}{\delta}$ is consistent for $\theta_0$ and that the asymptotic variance of $\recest{\theta}{\delta}$ will be no higher than that of $\hat\theta_n$, for \emph{any} regular ALEs $\hat\delta_n$ and $\hat\delta_{n+N}$ in \eqref{eq:recestALE}, provided that $\hat\omega \xrightarrow{p} \omega^*$. 

 Furthermore, part (iii) reveals that the asymptotic variance of $\recest{\theta}{\delta}$ is minimized when the regular ALE $\hat\delta_n$ has influence function $\IFdelta(X)=\bE[\IFtheta(X,Y)\mid X]$.  This suggests that in constructing \eqref{eq:recest}, we may wish for
 $\hat\delta_n$ and $\hat\delta_{n+N}$ to have  influence function $\bE[\IFtheta(X,Y) \mid X ]$. Indeed,  this strategy has deep roots in the missing data literature \citep{robins1994estimation,robins1995semiparametric,chernozhukov2018double} and is also considered in  \cite{xu2025unified} in the context of semi-supervised learning; see Section~\ref{sec:missingdata} for further discussion. 
 
 However, motivated by the PPI literature, in this paper we take a different strategy for constructing $\hat\delta_n$ and $\hat\delta_{n+N}$. As we will see, though generally it will not achieve the semi-parametric efficiency bound  \citep[unlike, for instance,][]{robins1994estimation,robins1995semiparametric,xu2025unified,testa2025semiparametric}, our strategy has advantages from a practical and computational perspective. 
 
 \subsection{A rectifer motivated by PPI} \label{sec:rectifier}
 
 %
 %
%
%
%
Motivated by the PPI proposals of \cite{angelopoulos2023prediction} and \cite{angelopoulos2023ppi++}, 
 suppose that we have access to a pre-trained prediction model $f(\cdot): \mathcal{X} \rightarrow \mathcal{Y}$. 
 %
Recalling the definition of  $\labest{\theta}$ in \eqref{eq:labestALE}, we define 
\begin{equation}
\labfest{\theta} := \hat\theta\rbr{(X_i,f(X_i))_{i=1}^n}\;\;\; \text{and} \;\;\; \allfest{\theta} := \hat\theta\rbr{(X_i,f(X_i))_{i=1}^{n+N}}. 
\label{eq:theta-f}
\end{equation}
Noting that $f(\cdot)$ is independent of the  data, $\Lab_n$ and $\Unlab_N$, 
 it follows that $\labfest{\theta}$ and $\allfest{\theta}$ are regular ALEs of $\targetf{\theta}:=\theta(\jointf)$  with influence function $x\rightarrow \IFtheta(x,f(x))$, provided that $\targetf{\theta} < \infty$ is well-defined and  there is no first-order degeneracy in $\labfest{\theta}-\targetf{\theta}$ and $\allfest{\theta}-\targetf{\theta}$. 

We now revisit the estimator \eqref{eq:recest}, this time using $\hat{\theta}^f_{n+N} - \labfest{\theta}$ as the rectifier, i.e.,
\begin{equation} 
\ppest{\theta}:=\labest{\theta}+\hat{\omega}\rbr{\allfest{\theta}-\labfest{\theta}}.
\label{eq:PPI}
\end{equation} 
In the existing PPI literature (see, e.g., \cite{angelopoulos2023prediction,angelopoulos2023ppi++}), the rectifier involves $\unlabfest{\theta}$ instead of $\allfest{\theta}$;  here we use $\allfest{\theta}$ in anticipation of Section 3. However, the theoretical results in this section also hold when using $\unlabfest{\theta}$ as long as $\hat{\omega}$ is defined appropriately (Appendix \ref{sec:onlyN}).

\begin{proposition}[Asymptotically linear PPI]\label{prop:PPest}

Suppose that  
  $\frac{n}{N}\rightarrow\lambda$ and that
 \begin{equation}\label{eq:omegaf}
\hat\omega \inprob    \omega:= \frac{\bE[\IFtheta(X,Y)\IFtheta(X,f(X))]}{\bE[\IFtheta(X,f(X))^2]}.
\end{equation}
Then
    \begin{equation}
    \sqrt{n}\rbr{\ppest{\theta}-\target{\theta}}\indist \Norm \rbr{0,\bE[\IFtheta(X,Y)^2]-\frac{\rbr{\bE[\IFtheta(X,Y)\IFtheta(X,f(X))]}^2}{(1+\lambda)\bE[\IFtheta(X,f(X))^2]}}.
    \label{eq:PPI-asymptotic}
    \end{equation}
\end{proposition}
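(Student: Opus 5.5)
The plan is to obtain Proposition~\ref{prop:PPest} as a direct specialization of Theorem~\ref{thm:recest}. We take $\delta := \theta(\cdot)$ applied to the joint law of $(X,f(X))$, i.e.\ $\target{\delta} := \targetf{\theta} = \theta(\jointf)$. Since $f$ is fixed (independent of $\Lab_n$ and $\Unlab_N$), $(X,f(X))$ is a deterministic function of $X$, so $\targetf{\theta}$ is a functional of $\marginal$ alone. We then set $\labest{\delta} := \labfest{\theta}$ and $\hat\delta_{n+N} := \allfest{\theta}$. With these choices the rectified estimator \eqref{eq:recest} coincides exactly with $\ppest{\theta}$ in \eqref{eq:PPI}.

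The first step is to verify the hypotheses \eqref{eq:recestALE} of Theorem~\ref{thm:recest} with influence function $\IFdelta(x) := \IFtheta(x,f(x))$. The ALE property \eqref{eq:labestALE} of $\hat\theta$ holds at every distribution in $\sP$ at which it is regular, with influence function depending on that distribution. Applying it to the i.i.d.\ samples $(X_i,f(X_i))$ drawn from $\jointf$ gives
\[
\labfest{\theta}-\targetf{\theta} = \frac{1}{n}\sum_{i=1}^n \IFtheta(X_i,f(X_i)) + \littleOp(n^{-1/2}),
\]
and the analogous expansion over $n+N$ observations, whose remainder is $\littleOp((n+N)^{-1/2}) = \littleOp(n^{-1/2})$. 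This is precisely the claim made in the text preceding \eqref{eq:PPI}, under the stated proviso that $\targetf{\theta}$ is finite and well defined and that there is no first-order degeneracy. The main obstacle sits here: one must interpret ``influence function $x\mapsto\IFtheta(x,f(x))$'' as the influence function of the same functional $\theta$ evaluated at $\jointf$ rather than at $\joint$. I would make this explicit as an assumption, namely that $\hat\theta$ is a regular ALE at both $\joint$ and $\jointf$, with $\IFtheta(\cdot,\cdot)$ denoting the respective influence functions, exactly as the paper does in Section~\ref{sec:rectifier}.

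The second step is to invoke Theorem~\ref{thm:recest}(i) with $\hat\omega \inprob \omega$, where $\omega$ is given in \eqref{eq:omegaf}. That value is exactly $\opt{\omega} = \bE[\IFtheta(X,Y)\IFdelta(X)]/\bE[\IFdelta(X)^2]$ for our choice of $\IFdelta$, so part (ii) applies. Substituting into \eqref{eq:recest_var}, the cross term and the quadratic term combine, since
\[
-\frac{2(\opt\omega)^2\bE[\IFdelta^2]}{1+\lambda}+\frac{(\opt\omega)^2\bE[\IFdelta^2]}{1+\lambda} = -\frac{(\bE[\IFtheta\IFdelta])^2}{(1+\lambda)\bE[\IFdelta^2]},
\]
which yields the variance in \eqref{eq:PPI-asymptotic}. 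We also need $\bE[\IFtheta(X,f(X))^2]>0$ so that $\omega$ is well defined; this is again the non-degeneracy proviso.

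If one prefers not to rely on part (ii) as a black box, a short self-contained check is available. By Slutsky's lemma, the quantity $\sqrt n(\ppest\theta-\target\theta)$ equals
\[
\frac{1}{\sqrt n}\sum_{i\le n}\bigl(\IFtheta(X_i,Y_i)-\omega\IFdelta(X_i)\bigr)+\frac{\omega\sqrt n}{n+N}\sum_{i\le n+N}\IFdelta(X_i)+\littleOp(1),
\]
using that $\allfest\theta-\labfest\theta = \littleOp(1)\cdot$ is $O_p(n^{-1/2})$. The labeled and unlabeled pieces are independent, and the central limit theorem applied with $n/(n+N)\to\lambda/(1+\lambda)$ reproduces \eqref{eq:recest_var}.
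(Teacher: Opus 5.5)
Your proposal is correct and matches the paper's proof. The paper likewise takes $\hat\delta_n=\labfest{\theta}$ and $\hat\delta_{n+N}=\allfest{\theta}$ with influence function $\psi(x)=\phi(x,f(x))$ (under the regularity and non-degeneracy proviso of Section~\ref{sec:rectifier}) and reads the result off parts (i) and (ii) of Theorem~\ref{thm:recest}; your optional self-contained check is just the proof of Theorem~\ref{thm:recest}(i) specialized, provided you first split the full-sample sum into its labeled and unlabeled parts before invoking independence.
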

Thus, the asymptotic variance of $\ppest{\theta}$ is less than that of  $\hat\theta_n$ by the amount $\frac{\rbr{\bE[\IFtheta(X,Y)\IFtheta(X,f(X))]}^2}{(1+\lambda)\bE[\IFtheta(X,f(X))^2]}$. 


We know from part (iii) of Theorem~\ref{thm:recest} that a rectifier with influence function $\psi(X) = \bE[\phi(X,Y) \mid X]$ achieves the minimal asymptotic variance out of estimators of the form \eqref{eq:recest} (with $\hat\omega$ as specified in part (ii) of Theorem~\ref{thm:recest}).  It is straightforward to see that $\phi(X, f(X))=\bE[\phi(X,Y) \mid X]$, and thus the estimator $\ppest{\theta}$ achieves this minimal variance, when both of the following two conditions hold: 
\begin{assumption}(Oracle situation for PPI)\label{assmp:oracle_case}

    \begin{enumerate}
\item The influence function $\IFtheta(X,Y)$ is conditionally linear in the second term, that is $\bE\sbr{\IFtheta(X,Y)\mid X}=\IFtheta\rbr{X,\bE\sbr{Y\mid X}}$, and  
\item $f(X)=\bE\sbr{Y \mid X}$.
\end{enumerate}
\end{assumption}

The first of these conditions is satisfied for many estimators of interest: examples include standard estimators of the  population mean, coefficients in a linear regression,  and  coefficients in a generalized linear model \citep{xu2025unified}, as well as  TPR, FPR, and AUC  considered in Section~\ref{sec:binaryclassifier}. However, the second of these conditions requires that the machine learning model $f(\cdot)$ perfectly capture the true mean $\bE[Y \mid X]$.  Because this assumption will almost never hold in practice, \emph{the PPI estimator $\ppest{\theta}$ will generally not achieve the optimal variance guaranteed by Theorem~\ref{thm:recest}}. Nonetheless, Proposition \ref{prop:PPest} guarantees that $\ppest{\theta}$ is always consistent for the target $\target{\theta}$ and has asymptotic variance no greater than $\labest{\theta}$.

We close this section with a comment on the ways in which our generalized PPI estimator \eqref{eq:PPI} differs from the prior PPI literature:
\begin{itemize}
\item Unlike the majority of prior work that provides theoretical guarantees, we do not require that the target arise from a population-level M-estimation problem. 
\item While the approach of \cite{miao2024task} does not center on an M-estimator, it relies on the bootstrap for variance estimation and computation of $\hat\omega$ in \eqref{eq:PPI}. Our approach employs an analytic variance calculation, substantially decreasing computation time.
\end{itemize}

\subsection{Connections to the missing data framework}\label{sec:missingdata}
 

We now 
re-formulate some of the ideas in the previous sections through the lens of missing data. We 
introduce $C_i$, a binary indicator variable that equals $1$ if the response $Y_i$ is observed, and $0$ otherwise. We further assume that $C\indep (Y, X)$; that is, the setting is missing completely at random. 
Then, $\Lab_n \cup \Unlab_N$ is of the form $\rbr{X_i, C_i, C_iY_i}_{i=1}^{n+N}$. While Theorem \ref{thm:recest} and Proposition \ref{prop:PPest} assumed that $\frac{n}{N}\rightarrow \lambda$, in what follows we will instead assume that $\pi:=\Pr(C=1)=\frac{\lambda}{1+\lambda}$.

For a general target  $\target{\theta}$, defining $\hat\pi:=\frac{1}{n+N}\sum_{i=1}^{n+N} C_i$, and following the logic in Section~\ref{sec:rectifier}, for a properly defined weight $\hat{\omega}_{\text{MD}}$, a PPI estimator can be constructed as
\begin{align}
    \mdppest{\theta}-\target{\theta}=\frac{1}{n+N}\sum_{i=1}^{n+N}\rbr{\frac{C_i}{\hat{\pi}}\IFtheta(X_i,Y_i)+\hat{\omega}_{\text{MD}} \left(1-\frac{C_i}{\hat{\pi}}\right)\IFtheta(X_i,f(X_i))},\label{eq:aipw-f}
\end{align} where the subscript ``MD'' stands for ``missing data''.
The next result (proven in Appendix \ref{sec:proofprop2.3}) establishes that  $\mdppest{\theta} \overset{d}{=}  \ppest{\theta}$ where $\ppest{\theta}$ was defined in \eqref{eq:PPI}.

\begin{proposition}
    \label{prop:aipw-f}
    Recall the definition of  $\omega$ in \eqref{eq:omegaf}. 
Suppose that $\hat\pi=\frac{1}{n+N}\sum_{i=1}^{n+N} C_i$, $\hat{\omega}_{\text{MD}}\xrightarrow{p}\omega$, and $\hat\omega\xrightarrow{p}\omega$. Then, $\mdppest{\theta}\overset{d}{=} \ppest{\theta}$. 
\end{proposition}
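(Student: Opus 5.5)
The claim $\mdppest{\theta}\overset{d}{=}\ppest{\theta}$ should be read as equality of limiting distributions: both $\sqrt{n}(\mdppest{\theta}-\target{\theta})$ and $\sqrt{n}(\ppest{\theta}-\target{\theta})$ converge to the same normal law. By Proposition~\ref{prop:PPest}, the limit for $\ppest{\theta}$ is already known to be \eqref{eq:PPI-asymptotic}. So the plan is to identify the $\sqrt{n}$-scaled linear representation of $\mdppest{\theta}$ and show it coincides with that of $\ppest{\theta}$ up to $o_p(1)$.

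First, expand $\ppest{\theta}$ using \eqref{eq:labestALE} and the ALE property of $\labfest{\theta},\allfest{\theta}$ with influence function $\IFtheta(x,f(x))$. Since $\hat\omega-\omega=o_p(1)$ and the rectifier is $O_p(n^{-1/2})$, this gives
\begin{align*}
\ppest{\theta}-\target{\theta}=\frac1n\sum_{i=1}^n\IFtheta(X_i,Y_i)+\omega\Big(\frac{1}{n+N}\sum_{i=1}^{n+N}\IFtheta(X_i,f(X_i))-\frac1n\sum_{i=1}^n\IFtheta(X_i,f(X_i))\Big)+o_p(n^{-1/2}).
\end{align*}
Next, rewrite \eqref{eq:aipw-f} in the indexing of Section~\ref{sec:rectifier}. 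Because $\hat\pi=\frac{1}{n+N}\sum_i C_i$ is exactly the labeled fraction, we have $\frac{1}{n+N}\sum_i \frac{C_i}{\hat\pi}g_i=\frac1n\sum_{i:C_i=1}g_i$ identically for any $g_i$. Hence, up to relabeling,
\begin{align*}
\mdppest{\theta}-\target{\theta}=\frac1n\sum_{C_i=1}\IFtheta(X_i,Y_i)+\hat\omega_{\text{MD}}\Big(\frac{1}{n+N}\sum_{i}\IFtheta(X_i,f(X_i))-\frac1n\sum_{C_i=1}\IFtheta(X_i,f(X_i))\Big).
\end{align*}
This is exactly the leading term above, except that $\hat\omega_{\text{MD}}$ replaces $\omega$. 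The bracket is $O_p(n^{-1/2})$ by the CLT, and $\hat\omega_{\text{MD}}-\omega=o_p(1)$, so that replacement costs only $o_p(n^{-1/2})$.

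Finally, handle the different sampling schemes. In the MD setting $n$ is random and $C\indep(X,Y)$; conditional on $n$, the labeled units are i.i.d.\ $\joint$ and the unlabeled units are i.i.d.\ $\marginal$, matching Section~\ref{sec:method}. Moreover $n/N\to\pi/(1-\pi)=\lambda$ almost surely. So the two linear representations have the same joint law of summands, and by Slutsky both converge to the normal limit in \eqref{eq:PPI-asymptotic}. Equivalently, one can compute the variance of the MD influence function $\frac{C}{\pi}\IFtheta(X,Y)+\omega(1-\frac{C}{\pi})\IFtheta(X,f(X))$, after accounting for the estimation of $\hat\pi$, and rescale by $n/(n+N)\to\pi$.

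The main obstacle is this last step. It requires being precise about what ``$\overset{d}{=}$'' means and handling the random labeled sample size. I would condition on $(C_i)$ and invoke Proposition~\ref{prop:PPest} along sequences with $n/N\to\lambda$, then uncondition via dominated convergence of characteristic functions.
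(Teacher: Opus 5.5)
Your argument is correct, but it is organized differently from the paper's proof.

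The paper stays entirely in the i.i.d.\ $(X_i,C_i,Y_i)_{i=1}^{n+N}$ model. It uses $\hat\pi\inprob\pi=\lambda/(1+\lambda)$ together with $\bE[C\IFtheta(X,Y)]=\bE[C\IFtheta(X,f(X))]=0$ to replace $\hat\pi$ by $\pi$ and $\hat{\omega}_{\text{MD}}$ by $\omega$, at a cost of $o_p((n+N)^{-1/2})$. It then applies the ordinary CLT to the resulting i.i.d.\ sum, computes $\sigma^2_{\text{MD}}$ explicitly from the moments of $C$, and rescales by $n/(n+N)\to\pi$. This recovers the variance formula of Theorem~\ref{thm:recest}(i), and hence \eqref{eq:PPI-asymptotic} at the $\omega$ of \eqref{eq:omegaf}.

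You instead use the exact identity $\frac{1}{n+N}\sum_i \frac{C_i}{\hat\pi}g_i=\frac1n\sum_{C_i=1}g_i$. This shows that \eqref{eq:aipw-f} is literally the linearization of $\ppest{\theta}$, with $\hat{\omega}_{\text{MD}}$ in place of $\omega$. As a result, you need neither the variance algebra nor any first-order analysis of $\hat\pi$.

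The price is that you must transfer the fixed-$(n,N)$ result to a random labeled sample size. Your conditioning-on-$(C_i)$ plus dominated-convergence-of-characteristic-functions step does this correctly. Strictly, what you apply conditionally is the CLT for the linear term from the proof of Theorem~\ref{thm:recest}(i), not Proposition~\ref{prop:PPest} itself, because the conditional law you identified is that of the linear term.

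The paper's route gets the random $n$ for free, since $n+N$ is the deterministic index. Yours gives a sharper structural statement: conditionally on the labeling pattern, the two linear representations coincide in law, not merely in limiting variance. It also makes transparent why estimating $\pi$ by the sample fraction costs nothing.
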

The PPI estimator in \eqref{eq:aipw-f} belongs to the class of general augmented inverse propensity weighted (AIPW) estimators. It is natural to wonder how  $\mdppest{\theta}$  connects to the well-studied efficient AIPW estimator 
$\hat{\theta}_{\text{AIPW}}$ \citep{robins1994estimation,robins1995semiparametric,tsiatis2006semiparametric}, which can be written as 
\begin{align}
    \hat{\theta}_{\text{AIPW}}-\target{\theta}=\frac{1}{n+N}\sum_{i=1}^{n+N}\rbr{\frac{C_i}{\hat{\pi}}\IFtheta(X_i,Y_i)+\left(1-\frac{C_i}{\hat{\pi}}\right)\widehat{\bE}[\IFtheta(X_i,Y_i)|X_i]},\label{eq:aipw}
\end{align}
 provided  that $\phi(x,y)$ is the \emph{efficient} influence function and 
 $\hat{\bE}[\phi(X,Y) \mid X = x]$ is a ``good enough'' estimator (in the sense of  \cite{robins1994estimation,robins1995semiparametric,tsiatis2006semiparametric}) of $\bE[\phi(X,Y)\mid X = x]$. Then, the estimator in \eqref{eq:aipw}
 achieves the semiparametric efficiency bound.

 In general, \eqref{eq:aipw-f} will be less efficient than  \eqref{eq:aipw}, unless the (unrealistic) Assumption \ref{assmp:oracle_case} is satisfied. Indeed, the connection between $\ppest{\theta}$ and $\mdppest{\theta}$ has been noted by \cite{gronsbell2026another} and \cite{zrnic2024active} in the special cases of estimating linear regression coefficients and  M-estimation. Proposition~\ref{prop:aipw-f} formalizes and generalizes this connection.

Knowing that the estimator $\ppest{\theta}$ will not achieve the optimal variance outside of the (unrealistic) Assumption ~\ref{assmp:oracle_case}, and that there exist estimators along the lines of \eqref{eq:aipw} that asymptotically achieve the optimal variance under certain assumptions \citep{robins1994estimation,xu2025unified,testa2025semiparametric}, why would we use $\ppest{\theta}$? The reasons are as follows: 
\begin{itemize}
\item For the estimators proposed by \cite{robins1994estimation,xu2025unified,testa2025semiparametric} to be efficient, they require, for any estimand under consideration: (i) a derivation of the efficient influence function, and (ii) a good enough $\hat{\bE}[\phi(X,Y) \mid X = x]$, where $\IFtheta(x,y)$ is the \emph{efficient} influence function. By contrast, the PPI approach begins with an (at-hand) estimator, and its rectifier is easy to compute. It does not require deriving the efficient influence function or solving an estimation equation,  puts no assumptions on the function $f$ that we use to construct the rectifier in \eqref{eq:PPI}, and does not require estimation of $\bE[\phi(X,Y) \mid X = x]$. 
\item Despite the asymptotic guarantees of 
\cite{robins1994estimation,xu2025unified,testa2025semiparametric}, these estimators are not guaranteed to outperform  $\ppest{\theta}$ in finite samples. 
%
%
\item Both the mathematical derivations underlying $\ppest{\theta}$ \eqref{eq:PPI} and its construction are quite straightforward and accessible to a non-specialist. 
\end{itemize}

The connection  detailed in Proposition~\ref{prop:aipw-f} between $\ppest{\theta}$ in \eqref{eq:PPI} and the missing data framework suggests that the machinery for covariate distribution shift that has been developed in the missing data setting \citep{robins1994estimation,chen2000unified,westreich2017transportability,dahabreh2020extending}  could be used to extend $\ppest{\theta}$. This is the topic of Section~\ref{sec:covariate-shift}. 

%% file: sec3covshift.tex
\section{PPI under covariate distribution shift}\label{sec:covariate-shift}
We have so far assumed that the distribution of $X$ is the same for the labeled and unlabeled data: in other words,  $X_1,\ldots,X_{n+N} \iidsim \marginal$. We now consider the possibility of \textit{covariate distribution shift} \citep{horvitz1952generalization,robins1994estimation,angelopoulos2023prediction}. 

 \cite{xu2025unified} point out that using $\Lab_n \cup \Unlab_N$ for inference, as opposed to just using the labeled data $\Lab_n$, is beneficial only when the target parameter depends on the distribution of $X$: that is, when $\theta\rbr{\joint}=\theta\rbr{\marginal\conditional}$ is not equal to 
$\theta\rbr{\marginalalt\conditional}$ for $\marginal \neq \marginalalt$. Thus, in the setting where PPI has the potential to improve upon inference using $\Lab_n$, covariate distribution shift will affect the target parameter, and therefore needs to be adequately addressed.

 In the previous section, our target of inference was $\theta_0 = \theta(\joint)$.  Later in this section, we will apply the functional delta method, which requires Hadamard differentiability. Therefore, to set ourselves up for this notationally, from this point forward we will assume that  the target can be written as $\theta_0 = \Phi(\target{F}^*)$, where the functional $\Phi$ maps from the space of cumulative distribution functions (CDFs) to the real line $\mathbb{R}$ and is Hadamard differentiable with respect to supremum norm \citep{shao2008mathematical}, and where $\target{F}^*$ is the CDF of $(X,Y)$ relative to the population of interest. 

We will now expand upon the notation from Section~\ref{sec:missingdata}. Our data consist of  $\rbr{X_i,C_i,C_iY_i}_{i=1}^{n+N}$, where $C_i$ is an indicator variable that equals $1$ if the  $i$th observation  is labeled; the ideal (but unobserved) full data is  $\rbr{X_i,C_i,Y_i}_{i=1}^{n+N}\iidsim \jointcov \in \sQ$.  
In the setting of covariate distribution shift, there are multiple possible targets of interest: 
 \begin{enumerate} \item The  parameter with respect to the full data distribution $\jointcov_{X,Y}$ (both labeled and unlabeled data), with corresponding CDF $\target{F}(x,y)=\jointcov_{X,Y}(X\leq x,Y\leq y)$, which we will refer to as 
 \begin{equation}\label{eq:targetfull}
     \target{\theta}:=\Phi(\target{F}).
 \end{equation} 
 \item The  parameter with respect to the distribution of the unlabeled data $\jointcov_{X,Y\mid C=0}$, with corresponding CDF $\target{F}^\unlabeled(x,y)=\jointcov_{X,Y\mid C=0}(X\leq x,Y\leq y)$, which we will refer to as \begin{equation}\label{eq:targetunlab}
     \target{\theta}^{\unlabeled}:=\Phi(\target{F}^\unlabeled).
 \end{equation}
 \item  The  parameter with respect to the distribution of the labeled data $\jointcov_{X,Y\mid C=1}$, with corresponding CDF $\target{F}^\labeled(x,y)=\jointcov_{X,Y\mid C=1}(X\leq x,Y\leq y)$, which we will refer to as 
 \begin{equation} \label{eq:targetlab}
     \target{\theta}^{\labeled}:=\Phi(\target{F}^\labeled).
 \end{equation}
  \end{enumerate}
 We will consider these three targets in Sections~\ref{sec:covdistall}, \ref{sec:covdistunlab}, and \ref{sec:covdistlab}, respectively. In practice, any of these targets can be of scientific interest, as we illustrate using our two examples from Section~\ref{sec:intro}. 
\begin{list}{}{}
\item{\bf Example 1, continued:} A social media company may hand-label user comments  only for a (designed) selected subset of accounts (e.g., accounts with many followers). Thus, the distribution of $X$ between the labeled and unlabeled data may differ.
\begin{itemize}
    \item The overall proportion of problematic comments on the platform is $\target{\theta}$. 
    \item The proportion of problematic content among comments that were never labeled is  $\target{\theta}^{\unlabeled}$. 
    \item The proportion  of problematic content among comments that have been labeled  is  $\target{\theta}^{\labeled}$. 
\end{itemize}
We hope to incorporate the $X$ values for the non-labeled comments to improve inference. 
\item{\bf Example 2, continued:}  Suppose 
$Y$ corresponds to a rare disease outcome.
Then,  measuring $Y$ on a small simple random sample from the total population   is often not efficient.  Instead, one could measure $Y$ on a (designed) non-uniform  sample, where sampling is based on  $X$. The target can be the mean of $Y$, e.g., the incidence rate of a rare disease. 
\begin{itemize}
    \item  The incidence rate in the total population is $\target{\theta}$.
    \item 
    The incidence rate among people without $Y$ measured is $\target{\theta}^{\unlabeled}$. 
    \item  The incidence rate among people with $Y$ measured is $\target{\theta}^{\labeled}$. 
\end{itemize}
\end{list}
We now introduce some additional notation and assumptions. 
Define $\pi(X)=\Pr(C=1\mid X)$. We assume the following:
\begin{assumption}\label{assmp:covdistshift}
(Assumptions for PPI under covariate distribution shift)
\begin{itemize}
    \item Independent and identically distributed: $\rbr{X_i,C_i,Y_i}_{i=1}^{n+N}\iidsim \jointcov \in \sQ$.
    \item Conditional independence: $C\indep Y\mid X$.
    \item Overlap: $\epsilon < \pi(X) < 1 - \epsilon$ for some $\epsilon > 0$.
\end{itemize}
\end{assumption} 
We assume that $\pi(X)$ is known: e.g., the missingness is determined by the study design. In Section \ref{sec:estpi} we discuss the case where $\pi(X)$ needs to be estimated.

By construction, the distribution of $(X,Y)$ in $\Lab_n$ is $\jointcov_{X,Y\mid C=1}$, while the distribution of $(X,Y)$ in $\Unlab_N$ is $\jointcov_{X,Y\mid C=0}$. As in Section~\ref{sec:method}, we suppose that we have access to a deterministic function $f$ such that $f(X) \approx Y$, and suppose $\target{F}^{f}(x,y):=\jointcov_{X,f(X)}(X\leq x,f(X)\leq y)$, $\target{F}^{\unlabeled,f}(x,y):=\jointcov_{X,f(X)\mid C=0}(X\leq x,f(X)\leq y)$, and $\target{F}^{\labeled,f}(x,y):=\jointcov_{X,f(X)\mid C=1}(X\leq x,f(X)\leq y)$ are well-defined CDFs.

Recall that $\target{F}^*$ is the CDF of
$(X, Y)$ in the population of interest, i.e., it equals either $\target{F}$, $\target{F}^\unlabeled$, or $\target{F}^\labeled$. Given  $m$ i.i.d.  draws from the population of interest and $F_{m}(x,y):=\frac{1}{m}\sum_{i=1}^{m}I(X_i\leq x,Y_i\leq y)$, it follows that $\Phi(F_{m})$ is a regular ALE of the target, $\Phi(F_{0}^*)$, with influence function  \begin{equation} \label{eq:generalif_fdm}
    \phi(x,y):=\dot{\Phi}(\target{F}^*;\delta_{x,y}-\target{F}^*),
\end{equation} where  $\delta_{x,y}(u_1,u_2)= I(x\leq u_1,y\leq u_2)$, and $\dot{\Phi}$ is the Gateaux derivative of $\Phi$. 

In Sections \ref{sec:covdistall}, \ref{sec:covdistunlab}, and \ref{sec:covdistlab},  we will use this insight to obtain PPI estimators of $\Phi(\target{F})$, $\Phi(\target{F}^\unlabeled)$, and $\Phi(\target{F}^\labeled)$.
 Results for regular ALEs that do not take the special form given in Sections \ref{sec:covdistall}, \ref{sec:covdistunlab}, and \ref{sec:covdistlab} can be found in Section \ref{sec:covdistgeneral}. 

\subsection{PPI for targets of the full data distribution}\label{sec:covdistall}

To accommodate covariate distribution shift when the target is $\target{\theta}$ in \eqref{eq:targetfull}, we apply inverse probability weighting (IPW) \citep{horvitz1952generalization,hejazi2021efficient}. 
That is, we inverse-weight the labeled observations by the probability of being labeled, and combine this with predictions on the unlabeled observations. 
Define \begin{align*}
     &F_\labfull(x,y):=\frac{\sum_{i=1}^{n+N}\frac{C_i}{\pi(X_i)}I(X_i\leq x,Y_i\leq y)}{\sum_{i=1}^{n+N}\frac{C_i}{\pi(X_i)}} \text{ and } \fulllabmdest{\theta}:=\Phi\rbr{F_\labfull}. \\
       &F^{f}_\labfull(x,y):=\frac{\sum_{i=1}^{n+N}\frac{C_i}{\pi(X_i)}I(X_i\leq x,f(X_i)\leq y)}{\sum_{i=1}^{n+N}\frac{C_i}{\pi(X_i)}} \text{ and }  \fulllabmdestf{\theta}:=\Phi\rbr{F^{f}_\labfull}. \\
       &F^{f}(x,y):=\frac{1}{n+N}\sum_{i=1}^{n+N}I(X_i\leq x,f(X_i)\leq y) \text{ and } \fullmdestf{\theta}:=\Phi\rbr{F^{f}}.  
\end{align*}

Similar to the estimator proposed in \eqref{eq:PPI}, we construct the PPI estimator as
\begin{equation}\label{eq:estfull}
    \fullest{\theta}:= \fulllabmdest{\theta}+\hat{\omega}\rbr{\fullmdestf{\theta}-\fulllabmdestf{\theta}}.
\end{equation}

\begin{remark}\label{rem:full_fdm}
    $\fulllabmdest{\theta}$ is a regular ALE of $\target{\theta}$ \eqref{eq:targetfull} with influence function evaluated at a single observation  $\varphi_\lab(X_i,Y_i,C_i):=\frac{C_i}{\pi(X_i)}\IFtheta(X_i,Y_i)$; $\fulllabmdestf{\theta}$ is a regular ALE with influence function evaluated at a single observation $\varphi_\lab(X_i,f(X_i),C_i):=\frac{C_i}{\pi(X_i)}\IFtheta(X_i,f(X_i))$; and $\fullmdestf{\theta}$ is a regular ALE  with influence function evaluated at a single observation $\varphi(X_i,f(X_i),C_i):=\IFtheta(X_i,f(X_i))$, where $\IFtheta$ is defined in \eqref{eq:generalif_fdm}. These claims are established in Appendix \ref{app:fdm-ale}.
\end{remark}

Our next result establishes the asymptotic behavior of $\fullest{\theta}$ \eqref{eq:estfull}. 
\begin{theorem}[PPI for targets of the full data distribution]\label{thm:ppiipw}
Under Assumption \ref{assmp:covdistshift} and the further assumption  that $\hat{\omega}\inprob \omega$, where
     \begin{equation}
     \label{eq:omega_full}
         \omega:=\frac{\Cov\left(\frac{C_i}{\pi(X_i)}\phi(X_i,Y_i),\rbr{\frac{C_i}{\pi(X_i)}-1}\phi(X_i,f(X_i))\right)}{\Var\left(\rbr{1-\frac{C_i}{\pi(X_i)}}\phi(X_i,f(X_i))\right)},
     \end{equation} it follows that 
$$\sqrt{n+N}\rbr{\fullest{\theta}-\target{\theta}}\indist N(0,\sigma^2),$$ where $\sigma^2=\Var\left(\frac{C_i}{\pi(X_i)}\phi(X_i,Y_i)\right)-\frac{\Cov\left(\frac{C_i}{\pi(X_i)}\phi(X_i,Y_i),\rbr{1-\frac{C_i}{\pi(X_i)}}\phi(X_i,f(X_i))\right)^2}{\Var\left(\rbr{1-\frac{C_i}{\pi(X_i)}}\phi(X_i,f(X_i))\right)}$.
\end{theorem}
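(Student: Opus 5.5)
The plan is to reduce Theorem~\ref{thm:ppiipw} to a sum of i.i.d. terms over all $n+N$ observations $(X_i,C_i,Y_i)$ using the asymptotic linearity in Remark~\ref{rem:full_fdm}, and then to apply the central limit theorem and Slutsky's lemma. We work with $\sqrt{n+N}$ rather than $\sqrt{n}$. Under Assumption~\ref{assmp:covdistshift} the triples are i.i.d. from $\jointcov$, so every estimator is a functional of an empirical-type CDF built from one i.i.d. sample. By Remark~\ref{rem:full_fdm}, which we take as established via the functional delta method (Hadamard differentiability of $\Phi$ together with the ratio form of the weighted empirical CDFs), the three estimators admit the expansions
\begin{align*}
\fulllabmdest{\theta}-\target{\theta}&=\frac{1}{n+N}\sum_{i=1}^{n+N}\frac{C_i}{\pi(X_i)}\phi(X_i,Y_i)+o_p\big((n+N)^{-1/2}\big),\\
\fulllabmdestf{\theta}-\targetf{\theta}&=\frac{1}{n+N}\sum_{i=1}^{n+N}\frac{C_i}{\pi(X_i)}\phi(X_i,f(X_i))+o_p\big((n+N)^{-1/2}\big),\\
\fullmdestf{\theta}-\targetf{\theta}&=\frac{1}{n+N}\sum_{i=1}^{n+N}\phi(X_i,f(X_i))+o_p\big((n+N)^{-1/2}\big).
\end{align*}
Here $\targetf{\theta}=\Phi(\target{F}^f)$, and the influence functions are centered at $\target{F}$ or $\target{F}^f$ respectively. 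The key observation is that $\bE[C/\pi(X)\mid X]=1$, so $F_{C/\pi}$ and $F^f_{C/\pi}$ are consistent for $\target{F}$ and $\target{F}^f$. This is why both $f$-based estimators share the common centering $\targetf{\theta}$.

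Subtracting the second expansion from the third cancels the unknown $\targetf{\theta}$ and gives
\[
\fullmdestf{\theta}-\fulllabmdestf{\theta}=\frac{1}{n+N}\sum_i\Big(1-\frac{C_i}{\pi(X_i)}\Big)\phi(X_i,f(X_i))+o_p\big((n+N)^{-1/2}\big).
\]
This sum is itself $O_p((n+N)^{-1/2})$. Since $\hat\omega\inprob\omega$, we can replace $\hat\omega$ by $\omega$ at cost $o_p((n+N)^{-1/2})$. Consequently
\[
\sqrt{n+N}\big(\fullest{\theta}-\target{\theta}\big)=\frac{1}{\sqrt{n+N}}\sum_i\Big[\frac{C_i}{\pi(X_i)}\phi(X_i,Y_i)+\omega\Big(1-\frac{C_i}{\pi(X_i)}\Big)\phi(X_i,f(X_i))\Big]+o_p(1).
\]
The summands are i.i.d. with mean zero. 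The first term has mean zero because $C\indep Y\mid X$ gives $\bE[\frac{C}{\pi(X)}\phi(X,Y)]=\bE[\phi(X,Y)]=0$ under $\target{F}$. The second term has mean zero because it has conditional mean zero given $X$. Overlap bounds $1/\pi(X)$ by $1/\epsilon$, which together with finite second moments of $\phi$ yields finite variance. The CLT therefore gives a normal limit with variance $V(\omega)=\Var(A)+2\omega\Cov(A,B)+\omega^2\Var(B)$, where $A=\frac{C}{\pi(X)}\phi(X,Y)$ and $B=(1-\frac{C}{\pi(X)})\phi(X,f(X))$.

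The final step is algebraic. The $\omega$ in \eqref{eq:omega_full} equals $-\Cov(A,B)/\Var(B)$, because $\Cov(A,(\frac{C}{\pi}-1)\phi^f)=-\Cov(A,B)$. This value minimizes $V$, and plugging it in gives $\Var(A)-\Cov(A,B)^2/\Var(B)=\sigma^2$, as claimed.

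The main obstacle is the first step, namely rigorously justifying the expansions in Remark~\ref{rem:full_fdm} for the normalized IPW-weighted CDFs. The plan there is to write $F_{C/\pi}$ as a ratio of two empirical processes and show it converges uniformly at rate $\sqrt{n+N}$ to $\target{F}$, with Donsker limit whose linear representation is $\frac{C}{\pi(X)}(\delta_{X,Y}-\target{F})$. Applying Hadamard differentiability of $\Phi$ and linearity of the derivative then yields the influence function $\frac{C}{\pi(X)}\phi(X,Y)$. Since the paper defers this to Appendix~\ref{app:fdm-ale}, we would cite it and focus on the combination argument above.
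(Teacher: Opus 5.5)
Your proposal is correct and follows the paper's route. The paper's proof takes the influence functions from Remark~\ref{rem:full_fdm} and reruns the argument of Theorem~\ref{thm:recest}: linearize, replace $\hat\omega$ by $\omega$, apply the CLT, and plug in the optimal weight. Your one adaptation, a single CLT over all $n+N$ i.i.d.\ triples applied to $\frac{C}{\pi(X)}\phi(X,Y)+\omega\bigl(1-\frac{C}{\pi(X)}\bigr)\phi(X,f(X))$ instead of two independent CLTs, is handled correctly, as is the sign in \eqref{eq:omega_full}, which gives $\omega=-\Cov(A,B)/\Var(B)$.
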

\begin{remark}
The forms of $\omega$ in \eqref{eq:omega_full} and the asymptotic variance decrease connect closely with the influence functions $\varphi_\lab(X_i,Y_i,C_i)$, $\varphi_\lab(X_i,f(X_i),C_i)$, and $\varphi(X_i,f(X_i),C_i)$ in Remark \ref{rem:full_fdm}, in the sense that  
    \begin{equation}\label{eq:omega_if}
        \omega=\frac{\Cov\Big(\varphi_\lab(X_i,Y_i,C_i),\varphi_\lab(X_i,f(X_i),C_i)-\varphi(X_i,f(X_i),C_i)\Big)}{\Var\Big(\varphi(X_i,f(X_i),C_i)-\varphi_\lab(X_i,f(X_i),C_i)\Big)},
    \end{equation}
and the asymptotic variance of $\fullest{\theta}$ \eqref{eq:estfull} improves on that of  $\fulllabmdest{\theta}$ by an amount
\begin{equation}\label{eq:var_improvement_if}
    \frac{\Cov\Big(\varphi_\lab(X_i,Y_i,C_i),\varphi_\lab(X_i,f(X_i),C_i)-\varphi(X_i,f(X_i),C_i)\Big)^2}{\Var\Big(\varphi(X_i,f(X_i),C_i)-\varphi_\lab(X_i,f(X_i),C_i)\Big)},
\end{equation}
where the asymptotic variance of $\fulllabmdest{\theta}$  follows from its influence function in Remark \ref{rem:full_fdm}.
\end{remark}

When Assumption \ref{assmp:oracle_case} is satisfied ---  that is, when
 $\phi(x,y)$ is linear in its second term, and
   $f(X)=\bE[Y\mid X]$ --- 
it follows from Theorem~\ref{thm:ppiipw} that 
\begin{equation}\label{eq:oracle_full}
    \sigma^2=\Var\left(\frac{C_i}{\pi(X_i)}\phi(X_i,Y_i)\right)-\bE\sbr{\rbr{\frac{1}{\pi(X_i)}-1}\bE[\phi(X_i,Y_i)\mid X_i]^2}, 
\end{equation}
which aligns with the semi-parametric efficiency bound \citep{robins1994estimation,van2000asymptotic,tsiatis2006semiparametric} provided that $\phi(X,Y)$ is the efficient influence function. Of course, as noted earlier, Assumption~\ref{assmp:oracle_case} is highly unrealistic. 

\cite{zrnic2024active} consider estimation of $\target{\theta}$ in the context of M-estimation: however, their estimator omits the  $\hat{\omega}$ term in \eqref{eq:estfull} and consequently may have asymptotic variance  greater than that of $\fulllabmdest{\theta}$. 
For regular ALEs, \cite{kluger2025prediction} provide an alternative estimator of $\target{\theta}$ that, like ours, is guaranteed to have asymptotic variance no greater than that of $\fulllabmdest{\theta}$; however, the first term inside their rectifier only uses the unlabeled data, i.e., in 
\eqref{eq:estfull} they replace $\hat\theta_f$ with an estimator involving only the unlabeled observations. Consequently, even under Assumption~\ref{assmp:oracle_case}, their estimator does not achieve the semi-parametric efficiency bound. (A straightforward proof can be found in Section~\ref{sec:ptd_comparison}.) 

By contrast,  \cite{testa2025semiparametric}  take an entirely different approach: rather than incorporating $f(X)$ into the estimator as in \eqref{eq:estfull}, they instead derive  the efficient influence function, and acquire a ``good enough'' estimator of its conditional mean, and therefore achieve the semi-parametric efficiency bound in \eqref{eq:oracle_full}. However, estimating the conditional mean of an influence function can be challenging when labeled data are limited, and so in finite samples our PPI-motivated approach may be preferable.  

\subsection{PPI for targets of the unlabeled data distribution}\label{sec:covdistunlab}
To accommodate covariate distribution shift when the target is $\target{\theta}^{\unlabeled}$ \eqref{eq:targetunlab}, we apply inverse odds weighting (IOW) \citep{westreich2017transportability,dahabreh2020extending}, which involves inverse weighting the labeled observations by the odds of being labeled. 

  We define the inverse odds of being labeled as  $q(X):=\frac{\Pr(C=0\mid X)}{\Pr(C=1\mid X)}$.
Further, define
\begin{align*}
    &F^\unlabeled_\labunlabeled(x,y):=\frac{\sum_{i=1}^{n+N}q(X_i)C_iI(X_i\leq x,Y_i\leq y)}{\sum_{i=1}^{n+N}q(X_i)C_i},\; \unlabeledlabmdest{\theta}:=\Phi\rbr{F^\unlabeled_\labunlabeled},\\
    &F^{\unlabeled,f}_\labunlabeled(x,y):=\frac{\sum_{i=1}^{n+N}q(X_i)C_iI(X_i\leq x,f(X_i)\leq y)}{\sum_{i=1}^{n+N}q(X_i)C_i},\; \unlabeledlabmdestf{\theta}:=\Phi\rbr{F^{\unlabeled,f}_\labunlabeled},\\
    &F^{\unlabeled,f}_\fullunlabeled(x,y):=\frac{\sum_{i=1}^{n+N}(1-\pi(X_i))I(X_i\leq x,f(X_i)\leq y)}{\sum_{i=1}^{n+N}(1-\pi(X_i))},\; \unlabeledfullmdestf{\theta}:=\Phi\rbr{F^{\unlabeled,f}_\fullunlabeled}.
\end{align*}

We propose our PPI estimator for $\target{\theta}^{\unlabeled}$ \eqref{eq:targetunlab} as
\begin{equation}\label{eq:unlabest}
    \unlabeledest{\theta}:=\unlabeledlabmdest{\theta}+\hat{\omega}^\unlabeled\rbr{\unlabeledfullmdestf{\theta}-\unlabeledlabmdestf{\theta}}.
\end{equation}

\begin{remark}\label{rem:unlab_fdm}
    $\unlabeledlabmdest{\theta}$ is a regular ALE of $\target{\theta}^{\unlabeled}$ \eqref{eq:targetunlab} with the influence function evaluated at a single observation  $\varphi_\lab(X_i,Y_i,C_i):=\frac{q(X_i)C_i}{\Pr(C=0)}\IFtheta(X_i,Y_i)$; $\unlabeledlabmdestf{\theta}$ is a regular ALE with the influence function evaluated at a single observation $\varphi_\lab(X_i,f(X_i),C_i):=\frac{q(X_i)C_i}{\Pr(C=0)}\IFtheta(X_i,f(X_i))$; and $\unlabeledfullmdestf{\theta}$ is a regular ALE  with the influence function evaluated at a single observation  $\varphi(X_i,f(X_i),C_i):=\frac{1-\pi(X_i)}{\Pr(C=0)}\IFtheta(X_i,f(X_i))$, where $\IFtheta$ is defined in \eqref{eq:generalif_fdm}. 
\end{remark}
Our next result states the asymptotic behaviour of $\unlabeledest{\theta}$ \eqref{eq:unlabest}. 


\begin{theorem}[PPI for targets of the unlabeled data distribution]\label{thm:ppiiow} Under Assumption \ref{assmp:covdistshift} and the further assumption that $\hat{\omega}^\unlabeled\inprob \omega^\unlabeled$, where 
    \begin{equation}\label{eq:omega_unlab}
        \omega^\unlabeled:=\frac{\Cov\Big(q(X_i)C_i\phi(X_i,Y_i),\sbr{q(X_i)C_i-(1-\pi(X_i))}\phi(X_i,f(X_i))\Big)}{\Var\Big(\sbr{1-\pi(X_i)-q(X_i)C_i}\phi(X_i,f(X_i))\Big)},
    \end{equation}
    it follows that
    $$\sqrt{n+N}\rbr{\unlabeledest{\theta}-\target{\theta}^\unlabeled}\xrightarrow{d}N\rbr{0,\sigma_\unlabeled^2},$$ where $\sigma_\unlabeled^2=\frac{\Var\rbr{q(X_i)C_i\IFtheta(X_i,Y_i)}}{\Pr(C=0)^2}-\frac{\Cov\Big(q(X_i)C_i\phi(X_i,Y_i),\;\sbr{1-\pi(X_i)-q(X_i)C_i}\phi(X_i,f(X_i))\Big)^2}{\Pr(C=0)^2\Var\Big(\sbr{1-\pi(X_i)-q(X_i)C_i}\phi(X_i,f(X_i))\Big)}$.
\end{theorem}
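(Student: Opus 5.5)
The plan mirrors the proof of Theorem~\ref{thm:ppiipw}: linearize all three plug-in estimators using the influence functions in Remark~\ref{rem:unlab_fdm}, combine them into a single i.i.d.\ sum over all $n+N$ observations, and then apply the central limit theorem and Slutsky's lemma.

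\emph{Step 1: linearize each estimator.} I would first justify the expansions in Remark~\ref{rem:unlab_fdm} via the functional delta method, arguing as for Remark~\ref{rem:full_fdm}. The key observation concerns the weighted empirical CDF $F^\unlabeled_\labunlabeled$, which is a ratio of two sample means. Its numerator, $\frac{1}{n+N}\sum_i q(X_i)C_i I(X_i\le x,Y_i\le y)$, converges uniformly to $\bE[q(X)C\,I(X\le x,Y\le y)]$. By Assumption~\ref{assmp:covdistshift},
\[
\bE[q(X)\pi(X)\,I(\cdot)]=\bE[(1-\pi(X))\,I(\cdot)]=\Pr(C=0)\,\target{F}^\unlabeled(x,y).
\]
Here the conditional independence $C\indep Y\mid X$ is used, and overlap makes $q$ bounded. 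The denominator similarly converges to $\Pr(C=0)$.

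A Donsker argument then gives, in $\ell^\infty$,
\[
\sqrt{n+N}\bigl(F^\unlabeled_\labunlabeled-\target{F}^\unlabeled\bigr)=\frac{1}{\sqrt{n+N}}\sum_i \frac{q(X_i)C_i}{\Pr(C=0)}\bigl(\delta_{X_i,Y_i}-\target{F}^\unlabeled\bigr)+o_p(1).
\]
Hadamard differentiability of $\Phi$ and linearity of $\dot\Phi(\target{F}^\unlabeled;\cdot)$ then yield
\[
\unlabeledlabmdest{\theta}-\target{\theta}^\unlabeled=\frac{1}{n+N}\sum_i \varphi_\lab(X_i,Y_i,C_i)+o_p\bigl((n+N)^{-1/2}\bigr).
\]
The same computation gives the expansion of $\unlabeledlabmdestf{\theta}-\Phi(\target{F}^{\unlabeled,f})$ with $f(X_i)$ in place of $Y_i$. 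Here the identity holds without any conditional-independence requirement, because $f(X)$ is a function of $X$.

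It also gives the expansion of $\unlabeledfullmdestf{\theta}-\Phi(\target{F}^{\unlabeled,f})$, with weights $1-\pi(X_i)$ and influence function $\varphi(X_i,f(X_i),C_i)$. Both $f$-estimators target the same quantity $\Phi(\target{F}^{\unlabeled,f})$, so this centering cancels in the rectifier.

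\emph{Step 2: combine and pass to the limit.} Writing $\hat\omega^\unlabeled=\omega^\unlabeled+o_p(1)$ and noting that the rectifier is $O_p((n+N)^{-1/2})$, Slutsky's lemma gives
\[
\sqrt{n+N}\bigl(\unlabeledest{\theta}-\target{\theta}^\unlabeled\bigr)=\frac{1}{\sqrt{n+N}}\sum_i\Bigl[\varphi_\lab(X_i,Y_i,C_i)+\omega^\unlabeled\bigl(\varphi-\varphi_\lab\bigr)(X_i,f(X_i),C_i)\Bigr]+o_p(1).
\]
The summands are i.i.d.\ with mean zero, since each influence function has mean zero. The CLT then gives a normal limit with variance
\[
\Var(\varphi_\lab)+2\omega^\unlabeled\Cov(\varphi_\lab,\varphi-\varphi_\lab)+(\omega^\unlabeled)^2\Var(\varphi-\varphi_\lab).
\]

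Next I would substitute the explicit forms. The centering term $\IFtheta$ has mean zero, which lets me drop the $\target{F}$ parts inside the covariances. The factors $\Pr(C=0)^{-1}$ pull out of every term. With $\omega^\unlabeled$ as in \eqref{eq:omega_unlab}, the expression equals
\[
\Var(a)/\Pr(C=0)^2-2\omega\,\Cov(a,b)/\Pr(C=0)^2+\omega^2\Var(b)/\Pr(C=0)^2,
\]
where
\[
a=q(X)C\,\IFtheta(X,Y),\qquad b=\bigl[1-\pi(X)-q(X)C\bigr]\IFtheta(X,f(X)).
\]
This quadratic is minimized at $\omega=\Cov(a,b)/\Var(b)$, which matches \eqref{eq:omega_unlab}. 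Substituting gives exactly $\sigma_\unlabeled^2$.

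\emph{Main obstacle.} I expect the hardest part to be the rigorous delta-method step for a ratio-weighted empirical process. This requires showing that the weighted indicator classes are Donsker, which follows because $q$ is bounded under overlap. It also requires handling the random denominator, which I would treat as a second application of the delta method to the map $(G,g)\mapsto G/g$.

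A secondary point needs care: checking that $\Cov(a,b)$ in \eqref{eq:omega_unlab} carries the correct sign convention. The numerator is written with $q(X)C-(1-\pi(X))$, while $b$ uses $1-\pi(X)-q(X)C$. The sign therefore has to match the rectifier $\unlabeledfullmdestf{\theta}-\unlabeledlabmdestf{\theta}$, whose influence function is $\varphi-\varphi_\lab\propto b$. I would track this sign carefully to confirm that the minimizing $\omega$ appears with the stated sign.
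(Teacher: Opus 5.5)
Your proposal is correct and follows the paper's route: the paper takes the influence functions of Remark~\ref{rem:unlab_fdm}, forms the single i.i.d.\ expansion $\varphi_{\text{lab}}+\omega(\varphi-\varphi_{\text{lab}})$, and finishes with the CLT and Slutsky, exactly as in the proof of Theorem~\ref{thm:recest}. The one slip is the sign you flagged: since $\varphi-\varphi_{\text{lab}}=b/\Pr(C=0)$, the cross term is $+2\omega\Cov(a,b)$, not $-2\omega\Cov(a,b)$, so the minimizer is $-\Cov(a,b)/\Var(b)$. That is exactly \eqref{eq:omega_unlab}, whose numerator is $\Cov(a,-b)$, and substituting it yields $\sigma_{C=0}^2$.
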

We note that  plugging  $\varphi_\lab(X_i,Y_i,C_i)$, $\varphi_\lab(X_i,f(X_i),C_i)$, and $\varphi(X_i,f(X_i),C_i)$  defined in Remark \ref{rem:unlab_fdm} into \eqref{eq:omega_if}  yields \eqref{eq:omega_unlab}.  Similarly, plugging these influence functions into \eqref{eq:var_improvement_if} yields the asymptotic variance improvement  (relative to $\unlabeledlabmdest{\theta}$) seen in Theorem \ref{thm:ppiiow}.

$\unlabeledest{\theta}$ \eqref{eq:unlabest} enables one to utilize the available data on Y from a non-target population to improve inference on the target population.

\subsection{PPI for targets of the labeled data distribution}\label{sec:covdistlab}

We now consider the target $\target{\theta}^\labeled$ \eqref{eq:targetlab}. Of course, this target can be recovered using the labeled data alone. However, as in Section \ref{sec:method}, we can utilize the unlabeled data to reduce variance. 
Define 
\begin{align*}
&F^\labeled_\lablabeled(x,y):=\frac{\sum_{i=1}^{n+N}C_iI(X_i\leq x,Y_i\leq y)}{\sum_{i=1}^{n+N}C_i},\; \labeledlabmdest{\theta}:=\Phi\rbr{F^\labeled_\lablabeled}.\\
&F^{\labeled,f}_\lablabeled(x,y):=\frac{\sum_{i=1}^{n+N}C_iI(X_i\leq x,f(X_i)\leq y)}{\sum_{i=1}^{n+N}C_i},\; \labeledlabmdestf{\theta}:=\Phi\rbr{F^{\labeled,f}_\lablabeled}.\\
&F^{\labeled,f}_\fulllabeled(x,y):=\frac{\sum_{i=1}^{n+N}\pi(X_i)I(X_i\leq x,f(X_i)\leq y)}{\sum_{i=1}^{n+N}\pi(X_i)},\; \labeledfullmdestf{\theta}:=\Phi\rbr{F^{\labeled,f}_\fulllabeled}.
\end{align*}

We propose a PPI estimator for $\target{\theta}^{\labeled}$ in \eqref{eq:targetlab} as
\begin{equation}\label{eq:labest}
    \labeledest{\theta}:=\labeledlabmdest{\theta}+\hat{\omega}^\labeled\rbr{\labeledfullmdestf{\theta}-\labeledlabmdestf{\theta}}.
\end{equation}

\begin{remark}\label{rem:lab_fdm}
    $\labeledlabmdest{\theta}$ is a regular ALE of $\target{\theta}^{\labeled}$ \eqref{eq:targetlab} with the influence function evaluated at a single observation  $\varphi_\lab(X_i,Y_i,C_i):=\frac{C_i}{\Pr(C=1)}\IFtheta(X_i,Y_i)$; $\labeledlabmdestf{\theta}$ is a regular ALE with the influence function evaluated at a single observation $\varphi_\lab(X_i,f(X_i),C_i):=\frac{C_i}{\Pr(C=1)}\IFtheta(X_i,f(X_i))$; and $\labeledfullmdestf{\theta}$ is a regular ALE  with the influence function evaluated at a single observation  $\varphi(X_i,f(X_i),C_i):=\frac{\pi(X_i)}{\Pr(C=1)}\IFtheta(X_i,f(X_i))$. 
\end{remark}

Our next result establishes the asymptotic behavior of 
$\labeledest{\theta}$ \eqref{eq:labest}.



\begin{theorem}[PPI for targets of the labeled data distribution]\label{thm:ppiow} Under Assumption \ref{assmp:covdistshift} and the further assumption that  $\hat{\omega}^\labeled\inprob \omega^\labeled$, where
   \begin{equation}\label{eq:omega_lab}
       \omega^\labeled:=\frac{\Cov\Big(C_i\phi(X_i,Y_i),\sbr{C_i-\pi(X_i)}\phi(X_i,f(X_i))\Big)}{\Var\Big(\sbr{\pi(X_i)-C_i}\phi(X_i,f(X_i))\Big)},
   \end{equation} it follows that 
    $$\sqrt{n+N}\rbr{\labeledest{\theta}-\target{\theta}^\labeled}\xrightarrow{d}N\rbr{0,\sigma_\labeled^2},$$ where $\sigma_\labeled^2=\frac{\Var\rbr{C_i\IFtheta(X_i,Y_i)}}{\Pr(C=1)^2}-\frac{\Cov\Big(C_i\phi(X_i,Y_i),\;\sbr{\pi(X_i)-C_i}\phi(X_i,f(X_i))\Big)^2}{\Pr(C=1)^2\Var\Big(\sbr{\pi(X_i)-C_i}\phi(X_i,f(X_i))\Big)}$.
\end{theorem}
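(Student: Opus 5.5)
We will follow the same route as for Theorem~\ref{thm:ppiipw}: write each of the three component estimators in \eqref{eq:labest} as a sample average of its influence function from Remark~\ref{rem:lab_fdm}, plus an $o_p((n+N)^{-1/2})$ remainder. We then apply the multivariate central limit theorem, Slutsky's lemma for $\hat\omega^\labeled$, and finally simplify the variance.

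\textbf{Asymptotic linearity of the components.} Each of $F^\labeled_\lablabeled$, $F^{\labeled,f}_\lablabeled$ and $F^{\labeled,f}_\fulllabeled$ is a ratio of weighted empirical CDFs, with weights $C_i$ or $\pi(X_i)$. Under Assumption~\ref{assmp:covdistshift} the weights are bounded. A weighted Donsker/Glivenko--Cantelli argument for the class of indicators $I(X\le x, Y\le y)$ shows that $\sqrt{n+N}(F^\labeled_\lablabeled-\target{F}^\labeled)$ converges weakly in $\ell^\infty$. The same holds for the two $f$-based CDFs, whose limits are $\target{F}^{\labeled,f}$ in both cases. This uses $C\indep Y\mid X$ and $\bE[\pi(X)g(X)]/\bE[\pi(X)]=\bE[g(X)\mid C=1]$.

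The linearization of the ratio gives the empirical-process influence functions $C_i(\delta_{X_i,Y_i}-\target{F}^\labeled)/\Pr(C=1)$ and its $f$- and $\pi$-analogues. Hadamard differentiability of $\Phi$ and the functional delta method then yield the three expansions in Remark~\ref{rem:lab_fdm}. In each expansion $\IFtheta$ is centered at $\target{F}^\labeled$, so that $\bE[C\IFtheta(X,Y)]=0$. I take these expansions as given, since Appendix~\ref{app:fdm-ale} handles the analogous claims for the full-data target.

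\textbf{Joint limit and Slutsky.} Because $\labeledfullmdestf{\theta}$ and $\labeledlabmdestf{\theta}$ share the limit $\Phi(\target{F}^{\labeled,f})$, the rectifier satisfies
\[
\labeledfullmdestf{\theta}-\labeledlabmdestf{\theta}=\frac{1}{n+N}\sum_{i=1}^{n+N}\frac{\pi(X_i)-C_i}{\Pr(C=1)}\IFtheta(X_i,f(X_i))+o_p\rbr{(n+N)^{-1/2}}.
\]
This sum is $O_p((n+N)^{-1/2})$. Hence replacing $\hat\omega^\labeled$ by $\omega^\labeled$ costs only $o_p((n+N)^{-1/2})$, so that
\[
\sqrt{n+N}\rbr{\labeledest{\theta}-\target{\theta}^\labeled}=\frac{1}{\sqrt{n+N}}\sum_{i=1}^{n+N}\frac{C_i\IFtheta(X_i,Y_i)+\omega^\labeled(\pi(X_i)-C_i)\IFtheta(X_i,f(X_i))}{\Pr(C=1)}+o_p(1).
\]
The summands are i.i.d. with mean zero, because $\bE[(\pi(X)-C)\mid X]=0$. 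The univariate CLT then gives normality with variance
\[
\Pr(C=1)^{-2}\sbr{\Var(A)+2\omega^\labeled\Cov(A,B)+(\omega^\labeled)^2\Var(B)},
\]
where $A=C\IFtheta(X,Y)$ and $B=(\pi(X)-C)\IFtheta(X,f(X))$.

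\textbf{Variance simplification.} By \eqref{eq:omega_lab}, $\omega^\labeled=-\Cov(A,B)/\Var(B)$. Substituting this into the quadratic gives
\[
\Var(A)-\frac{\Cov(A,B)^2}{\Var(B)},
\]
divided by $\Pr(C=1)^2$, which is exactly $\sigma_\labeled^2$. This is the same computation as in Theorem~\ref{thm:recest}(ii) and Theorem~\ref{thm:ppiipw}; the sign convention in \eqref{eq:omega_lab} is what makes the minimizing weight appear.

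\textbf{Main obstacle.} The main difficulty is the first step: rigorously justifying the delta method for the randomly normalized weighted CDFs. Each needs a joint expansion of numerator and denominator, and a check that $f$ being fixed, independent of the data, keeps the indicator class Donsker. I would handle this by viewing each weighted CDF as a Hadamard-differentiable ratio map applied to the pair (weighted empirical process, weighted mean), and composing it with $\Phi$ via the chain rule.
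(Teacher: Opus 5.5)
Your proposal is correct and follows essentially the paper's route. The paper likewise takes the influence functions of Remark~\ref{rem:lab_fdm}, obtained from the weighted-CDF functional delta method argument of Appendix~\ref{app:fdm-ale}, combines them, handles $\hat\omega^\labeled$ by Slutsky, and plugs in the minimizing weight as in the proof of Theorem~\ref{thm:recest}; your single i.i.d.\ sum over all $n+N$ observations and your explicit check that $\omega^\labeled=-\Cov(A,B)/\Var(B)$ are simply a cleaner write-up of that same argument.
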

Again,  plugging  $\varphi_\lab(X_i,Y_i,C_i)$, $\varphi_\lab(X_i,f(X_i),C_i)$, and $\varphi(X_i,f(X_i),C_i)$  defined in Remark \ref{rem:lab_fdm} into \eqref{eq:omega_if}  yields \eqref{eq:omega_lab}.  Similarly, plugging these influence functions into \eqref{eq:var_improvement_if} yields the asymptotic variance improvement  (relative to $\labeledlabmdest{\theta}$) seen in Theorem \ref{thm:ppiow}.

%% file: sec4estpi.tex
\section{Covariate shift with unknown labeling mechanism}\label{sec:estpi}

We assumed in Section~\ref{sec:covariate-shift} that the labeling mechanism, $\pi(X)$, is known. 
However, this may not be the case in practice. For instance, in the context of Example 2,  individuals living in high-income areas with better access to healthcare resources may be more likely to be labeled because they are better able to attend medical appointments. In this setting, our interest may lie in $\target{\theta}$ in \eqref{eq:targetfull}, a summary of the full population distribution.



In this section, we extend the ideas in Section~\ref{sec:covariate-shift} to allow for unknown $\pi(X)$. Specifically, we estimate this quantity from  $(C_i,X_i,C_iY_i)_{i=1}^{n+N}$. Here, we only consider the target $\target{\theta}$ \eqref{eq:targetfull} from Section \ref{sec:covdistall}. Estimators for the targets $\target{\theta}^\unlabeled$ \eqref{eq:targetunlab} and $\target{\theta}^\labeled$ \eqref{eq:targetlab} can be derived analogously.

 By construction, $\fulllabmdest{\theta}$ from  Section \ref{sec:covdistall} can be viewed as a functional of $\bQ_{n+N}$ and $\pi$, where $\bQ_{n+N}$ is the empirical law of $(C_i,X_i, Y_i)_{i=1}^{n+N}$. Therefore, we can write $\fulllabmdest{\theta}:=T(\bQ_{n+N},\pi)$. Similarly, the target $\target{\theta}$ \eqref{eq:targetfull} can be viewed as $T(\jointcov,\pi)$, where $\jointcov$ was defined as the joint distribution of $\rbr{X_i,C_i,Y_i}_{i=1}^{n+N}$ in the beginning of Section \ref{sec:covariate-shift}. Suppose the functional $T:\sQ \times \Pi \rightarrow \bR$ is jointly Hadamard differentiable tangentially to $L_2^0(\jointcov)\times L_2(\mathbb{Q}_X^*)$, where $\bQ_{n+N}, \jointcov\in \sQ$ and $\pi,\hat{\pi}\in \Pi$. 

We now provide an assumption on the quality of the estimator  $\hat{\pi}$. 
\begin{assumption} \label{assmp:estpi}
    Let $\hat{\pi}\in \Pi$ denote the estimator of $\pi$. We assume \[
    \hat{\pi}(x)-\pi(x)=\frac{1}{n+N}\sum_{i=1}^{n+N}\phi_\pi(C_i,X_i;x)+\littleOp\left(\frac{1}{\sqrt{n+N}}\right).
    \]
\end{assumption}
Assumption \ref{assmp:estpi} requires $\hat{\pi}$ to  be a regular ALE of $\pi$, with influence function $\phi_\pi$. This can be achieved if one is willing to assume some adequate structure on $\pi$. For example, if $\pi$ follows a parametric model and $\hat{\pi}$ is the maximum likelihood estimator from a correctly-specified parametric model, then Assumption \ref{assmp:estpi} is satisfied. 

We introduce the following notation:
\begin{align*}
    \fulllabmdestpi{\theta} := T(\bQ_{n+N},\hat{\pi}), \text{ and }
    \fulllabmdestpif{\theta}:= T(\bQ^f_{n+N},\hat{\pi}), \text{ where $\bQ^f_{n+N}$ is the empirical law of $(C_i,X_i,f(X_i))$. }
\end{align*}

\begin{remark}
    \label{prop:covshiftestpifull}
    $\fulllabmdestpi{\theta}$ is a regular ALE of $\target{\theta}$ (defined in \eqref{eq:targetfull}) with influence function $\varphi_\lab(c,x,y):=\frac{c}{\pi(x)}\phi(x,y)-\bE\sbr{\frac{C\phi(X,Y)}{\pi(X)^2}\phi_\pi(c,x;X)}$, and $\fulllabmdestpif{\theta}$ is also a regular ALE with influence function $\varphi_\lab(c,x,f(x)):=\frac{c}{\pi(x)}\phi(x,f(x))-\bE\sbr{\frac{C\phi(X,f(X))}{\pi(X)^2}\phi_\pi(c,x;X)}$, where $\IFtheta$ is defined in \eqref{eq:generalif_fdm} and $\phi_\pi$ is defined in Assumption \ref{assmp:estpi}. These results are proven in Appendix~\ref{app:proofrem4.1}. 
    Comparing these influence functions to those for $\fulllabmdest{\theta}$ and $\fulllabmdestf{\theta}$ in Remark \ref{rem:full_fdm}, estimating $\hat{\pi}$ contributes a first-order term to the influence functions, which can potentially lead to a reduction in variance relative to $\fulllabmdest{\theta}$. 
    \end{remark}

We propose a PPI estimator for $\target{\theta}$ as 
\begin{equation}\label{eq:ppiestpi}
    \hat{\theta}_{\hat{\pi}}:=\fulllabmdestpi{\theta}+\hat{\omega}\rbr{\hat{\theta}^f-\fulllabmdestpif{\theta}}.
\end{equation}
Our next result establishes  that its   asymptotic variance will be no worse than that of $\fulllabmdestpi{\theta}$, the estimator that does not utilize $\{f(X)\}_{i=1}^{n+N}$.

\begin{theorem}\label{thm:covshiftallestpi}(PPI for targets of the full data distribution with $\hat{\pi}$) Under Assumptions \ref{assmp:covdistshift}, \ref{assmp:estpi}, and the additional assumption that   $\hat{\omega}\xrightarrow{p}\omega$, where
     \begin{equation}
         \omega:=\frac{\Cov\Big(\varphi_\lab(C,X,Y),\varphi_\lab(C,X,f(X))-\IFtheta(X,f(X))\Big)}{\Var\Big(\varphi_\lab(C,X,f(X))-\IFtheta(X,f(X))\Big)},
     \end{equation} it follows that
\[
\sqrt{n+N}\rbr{\hat{\theta}_{\hat{\pi}}-\target{\theta}}\xrightarrow{d}N(0,\sigma^2),
\]
and $\sigma^2=\Var(\varphi_\lab(C,X,Y))-\frac{\Cov\Big(\varphi_\lab(C,X,Y),\;\IFtheta(X,f(X))-\varphi_\lab(C,X,f(X))\Big)^2}{\Var\Big(\IFtheta(X,f(X))-\varphi_\lab(C,X,f(X))\Big)}$, where $\varphi_\lab(C,X,Y)$ and $\varphi_\lab(C,X,f(X))$ are defined in Remark \ref{prop:covshiftestpifull}.   
\end{theorem}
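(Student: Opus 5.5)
The plan is to reduce Theorem~\ref{thm:covshiftallestpi} to a joint asymptotically linear expansion of the three components of $\hat{\theta}_{\hat{\pi}}$ in \eqref{eq:ppiestpi}. Then we apply Slutsky's lemma for $\hat\omega$ and optimize the resulting variance, exactly as in Theorem~\ref{thm:recest} and Theorem~\ref{thm:ppiipw}.

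First, I take Remark~\ref{prop:covshiftestpifull} as given. It says that $\fulllabmdestpi{\theta}-\target{\theta}$ and $\fulllabmdestpif{\theta}-\targetf{\theta}$ are, up to $\littleOp((n+N)^{-1/2})$, equal to the averages $\frac{1}{n+N}\sum_i \varphi_\lab(C_i,X_i,Y_i)$ and $\frac{1}{n+N}\sum_i\varphi_\lab(C_i,X_i,f(X_i))$. Here $\targetf{\theta}:=\Phi(\target{F}^f)$. If I were to justify this, I would apply the joint Hadamard differentiability of $T$ together with the chain rule. The derivative in the $\bQ$ direction gives the IPW term $\frac{c}{\pi(x)}\phi$. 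The derivative in the $\pi$ direction, evaluated at $\hat\pi-\pi$ and combined with Assumption~\ref{assmp:estpi}, gives $-\bE[C\phi/\pi^2\,\phi_\pi(c,x;X)]$ after exchanging the sum and the expectation. I would also invoke Remark~\ref{rem:full_fdm} for $\hat\theta^f=\Phi(F^f)$, whose influence function is $\IFtheta(x,f(x))$; it does not involve $\pi$ and is centered at the same $\targetf{\theta}$. The key observation is that $\fulllabmdestpif{\theta}$ and $\hat\theta^f$ estimate the same $\targetf{\theta}$. This holds because $C\indep Y\mid X$ and, more relevantly, because the IPW-weighted CDF of $(X,f(X))$ recovers $\target{F}^f$ under overlap. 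Hence the rectifier is centered at zero:
\[
\hat\theta^f-\fulllabmdestpif{\theta}=\frac{1}{n+N}\sum_{i=1}^{n+N}\Big(\IFtheta(X_i,f(X_i))-\varphi_\lab(C_i,X_i,f(X_i))\Big)+\littleOp\big((n+N)^{-1/2}\big).
\]

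Second, the rectifier is $O_p((n+N)^{-1/2})$ and $\hat\omega\inprob\omega$, so $\hat\omega(\hat\theta^f-\fulllabmdestpif{\theta})=\omega(\hat\theta^f-\fulllabmdestpif{\theta})+\littleOp((n+N)^{-1/2})$. Therefore
\[
\hat{\theta}_{\hat{\pi}}-\target{\theta}=\frac{1}{n+N}\sum_{i}\Big(\varphi_\lab(C_i,X_i,Y_i)+\omega\big[\IFtheta(X_i,f(X_i))-\varphi_\lab(C_i,X_i,f(X_i))\big]\Big)+\littleOp\big((n+N)^{-1/2}\big).
\]
The summands are i.i.d.\ under Assumption~\ref{assmp:covdistshift}. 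They have mean zero, since each influence function is mean zero, and finite variance, since overlap bounds $1/\pi$. The CLT then gives asymptotic normality with variance $V(\omega)=\Var(A)+2\omega\Cov(A,B)+\omega^2\Var(B)$, where $A=\varphi_\lab(C,X,Y)$ and $B=\IFtheta(X,f(X))-\varphi_\lab(C,X,f(X))$. Plugging in the stated $\omega=-\Cov(A,B)/\Var(B)$ yields $\sigma^2=\Var(A)-\Cov(A,B)^2/\Var(B)$, which is the claimed formula. This is the same quadratic minimization as in Theorem~\ref{thm:recest}(ii), and it shows $\sigma^2\le\Var(A)$.

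I expect the main obstacle to be the first step. Specifically, I must verify that the $\hat\pi$-estimation correction terms enter $\fulllabmdestpi{\theta}$ and $\fulllabmdestpif{\theta}$ with the correct form, jointly with the same $\hat\pi$. I must also verify that the remainder from plugging in the random $\hat\pi$ is negligible, so that all three expansions hold simultaneously on the common sample. I would handle this by stacking the expansions into a single vector and using joint Hadamard differentiability with the functional delta method. Nothing beyond Remark~\ref{prop:covshiftestpifull} and Remark~\ref{rem:full_fdm} should be needed, since every expansion is a linear average over the same i.i.d.\ observations.
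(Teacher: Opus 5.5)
Your proposal is correct and follows essentially the paper's route: take the expansions of Remark~\ref{prop:covshiftestpifull} (and Remark~\ref{rem:full_fdm} for $\hat\theta^f$), absorb $\hat\omega-\omega$ via Slutsky, apply the CLT to the single i.i.d.\ average of $A+\omega B$, and minimize the quadratic in $\omega$ as in Theorem~\ref{thm:recest}(ii); your sign bookkeeping $\omega=-\Cov(A,B)/\Var(B)$ matches the stated weight. The ``joint stacking'' you flag as the main obstacle is not actually needed: the three expansions hold on the same sample, each up to $\littleOp((n+N)^{-1/2})$, so any fixed linear combination of them automatically has that form as well.
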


The ideas in this section connect to the  AIPW estimators with estimated weights of \cite{robins1994estimation}. As in previous sections, even if $\IFtheta(x,y)$ is the efficient influence function, the estimator in \eqref{eq:ppiestpi} will still not be efficient, unless the (unrealistic) Assumption~\ref{assmp:oracle_case} is satisfied. However, here, \emph{we require neither assumptions on the quality of $f(X)$  nor an estimate of $\bE[\IFtheta(X,Y)\mid X]$}. Avoiding these requirements results in a loss of efficiency relative to the AIPW estimator that uses a ``good enough'' $\hat{\bE}[\IFtheta(X,Y)\mid X]$ in the rectifier (as was seen also in Sections \ref{sec:method} and \ref{sec:covariate-shift}). Additional details are provided in Appendix \ref{sec:appcomaipw}. However, in the spirit of this paper, the loss of efficiency  may be worthwhile, as it enables the straightforward utilization of the black-box machine learning model $f(X)$ in real-world settings where its quality cannot be guaranteed.

%% file: sec5binarymeric.tex
\section{Application to binary classifier metrics}
\label{sec:binaryclassifier}

In this section, we use the results from Sections~\ref{sec:method} and \ref{sec:covariate-shift} to develop PPI estimators for several common binary classification metrics: true positive rate (TPR), false positive rate (FPR), and area under the receiver operating characteristic curve (AUC). Suppose that $Y \in \{0,1\}$. The goal is to evaluate a biomarker $ R \in \sR$.
Given a threshold $\alpha \in [0,1]$,  the TPR, FPR, and AUC of $R$  are defined as 
\begin{align}
    \target{\theta}^\TPR &:= \Pr(R>\alpha\mid Y=1)=\frac{\bE[I(R>\alpha)Y]}{\bE[Y]}, \label{eq:tpr} \\
    \target{\theta}^\FPR &:= \Pr(R>\alpha\mid Y=0)=\frac{\bE[I(R>\alpha)(1-Y)]}{\bE[(1-Y)]}, \text{\;\;\;and}  \label{eq:fpr} \\
    \target{\theta}^\AUC &:= \int \Pr(R>\alpha\mid Y=1)d\Pr(R\leq\alpha\mid Y=0). \label{eq:auc}
\end{align}
As in previous sections, we have access to a deterministic prediction model $f$ such that $f(X) \approx Y$. Here, we assume $f(X)\geq 0$ and $\bE[f(X)]\big(1-\bE[f(X)]\big)\neq 0$ to satisfy the regularity conditions for the PPI estimator construction, as described in Section \ref{sec:method}. These assumptions are mild for a binary classification model. Without loss of generality, we assume that the vector $X$  contains $R$ as one of its elements. 

TPR, FPR, and AUC are not typically viewed through the lens of M-estimation, so the methods in \cite{angelopoulos2023ppi++, miao2025assumption,zrnic2024cross}, and \citet{gan2024prediction} do not directly apply. Instead, we will apply the ideas presented in earlier sections in order to develop PPI analogs of these quantities; to do this, we will make use of their influence functions, which can be derived via the functional delta method \cite[see, e.g.,][]{ledell2015computationally}.
%


\subsection{PPI for binary classification evaluation without covariate distribution shift}\label{sec:binary_no_covshift}

Suppose that $\rbr{R_i,f(X_i),Y_i}_{i=1}^n\iidsim\jointapp$, $\rbr{R_i,f(X_i)}_{i=n+1}^{n+N}\iidsim\marginalapp$, and $\jointapp=\conditionalapp\marginalapp$.
The following corollaries of Proposition~\ref{prop:PPest} provide PPI estimators for TPR, FPR, and AUC, respectively and describe their asymptotic distributions. 

\begin{corollary}[A PPI estimator for $\TPR$]\label{cor:tpr_ppi}
Define $\labest{\theta}:=\frac{\sum_{i=1}^nI(R_i>\alpha)Y_i}{\sum_{i=1}^nY_i}$ 
     and $\labfest{\theta}:=\frac{\sum_{i=1}^nI(R_i>\alpha)f(X_i)}{\sum_{i=1}^nf(X_i)}$. Let $\IFtheta^\TPR(R_i,Y_i)$ and  $\IFtheta^\TPR(R_i,f(X_i))$ denote their influence functions (evaluated at a single observation), respectively. Furthermore, let  $\allfest{\theta}:=\frac{\sum_{i=1}^{n+N} I(R_i>\alpha)f(X_i)}{\sum_{i=1}^{n+N} f(X_i)}.$ 


Then,  $\IFtheta^\TPR(R_i,Y_i)=\frac{I(R_i>\alpha)Y_i}{\bE\sbr{Y}}-\frac{\bE \sbr{I\rbr{R>\alpha}Y}Y_i}{\bE\sbr{Y}^2}$ and  $\IFtheta^\TPR(R_i,f(X_i))=\frac{I(R_i>\alpha)f(X_i)}{\bE\sbr{f(X)}}-\frac{\bE \sbr{I\rbr{R>\alpha}f(X)}f(X_i)}{\bE\sbr{f(X)}^2}$. Furthermore, defining 
\begin{equation*}
\ppest{\theta}:=\labest{\theta}+\hat{\omega}\rbr{\allfest{\theta}-\labfest{\theta}},
\label{eq:app-PPI}
\end{equation*} and supposing  
 that $\frac{n}{N}\rightarrow\lambda$ and 
$\hat{\omega}\xrightarrow{p}\frac{\bE[\IFtheta^\TPR(R,Y)\IFtheta^\TPR(R,f(X))]}{\bE[\IFtheta^\TPR(R,f(X))^2]},$
it follows that 
\begin{equation*}
    \sqrt{n}\rbr{\ppest{\theta}-\target{\theta}^{\TPR}}\indist \Norm\rbr{0,\bE[\IFtheta^{\TPR}(R,Y)^2]-\frac{\rbr{\bE[\IFtheta^{\TPR}(R,Y)\IFtheta^{\TPR}(R,f(X))]}^2}{(1+\lambda)\bE[\IFtheta^{\TPR}(R,f(X))^2]}}.
    \label{eq:PPI-asymptotic-tpr}
\end{equation*}
\end{corollary}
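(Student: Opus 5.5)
The plan is to obtain the corollary as a direct instance of Proposition~\ref{prop:PPest}. That requires three things. First, $\labest{\theta}$ is a regular ALE of $\target{\theta}^\TPR$ with the stated influence function $\IFtheta^\TPR(R,Y)$. Second, $\labfest{\theta}$ and $\allfest{\theta}$ are both regular ALEs of $\targetf{\theta}=\bE[I(R>\alpha)f(X)]/\bE[f(X)]$, each with influence function $\IFtheta^\TPR(R,f(X))$. Third, these are the same estimating map $\hat\theta$ applied to $(X_i,Y_i)$ and to $(X_i,f(X_i))$ respectively, exactly as in \eqref{eq:theta-f}. Once these hold, the conclusion \eqref{eq:PPI-asymptotic} with $\IFtheta=\IFtheta^\TPR$ is precisely the displayed limit.

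For the influence function I would write $\labest{\theta}=g(\bar A_n,\bar B_n)$ with $A_i=I(R_i>\alpha)Y_i$, $B_i=Y_i$ and $g(a,b)=a/b$. The multivariate CLT, which applies since $A,B\in\{0,1\}$ have finite second moments, combined with the ordinary delta method at $(\bE A,\bE B)$ requires $\bE[Y]>0$. The gradient of $g$ there is $(1/\bE B,\,-\bE A/\bE B^2)$, so
\[
\labest{\theta}-\target{\theta}^\TPR=\frac1n\sum_{i=1}^n\rbr{\frac{A_i}{\bE[Y]}-\frac{\bE[A]B_i}{\bE[Y]^2}}+\littleOp(n^{-1/2}).
\]
This is the claimed $\IFtheta^\TPR(R_i,Y_i)$, and it is mean zero. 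Regularity follows because the estimator is a smooth (Hadamard differentiable) function of sample means. Equivalently, I could cast $\Phi(F)=\int I(r>\alpha)y\,dF/\int y\,dF$ and invoke \eqref{eq:generalif_fdm}, but the finite-dimensional delta method is simpler.

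The same computation with $B_i=f(X_i)$ gives $\IFtheta^\TPR(R_i,f(X_i))$ for $\labfest{\theta}$ on $n$ observations and for $\allfest{\theta}$ on $n+N$ observations. Since $f$ is fixed and the $X_i$ are i.i.d.\ $\marginal$ across both samples, both estimators are ALEs around the same $\targetf{\theta}$. This step needs the moment conditions $\bE[f(X)^2]<\infty$ and $\bE[f(X)]\neq 0$. The paper's standing assumptions $f\ge0$ and $\bE[f(X)](1-\bE[f(X)])\neq0$ give $\bE[f(X)]>0$, and I would add square integrability of $f(X)$ (automatic if $f$ is a probability, $f\in[0,1]$). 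Finally, $\bE[\IFtheta^\TPR(R,f(X))^2]>0$ is needed for $\omega$ in \eqref{eq:omegaf} to be well defined. This is the no-first-order-degeneracy condition mentioned in Section~\ref{sec:rectifier}, and I would state it as an assumption or note it holds unless $I(R>\alpha)$ is a.s.\ constant on $\{f(X)>0\}$.

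The main obstacle is only bookkeeping: checking that the estimators fit the template \eqref{eq:labestALE}--\eqref{eq:recestALE}, with $\delta_0=\targetf{\theta}$ a functional of $\marginal$ and $\IFdelta(x)=\IFtheta^\TPR(r,f(x))$, and that the nondegeneracy conditions hold. With $\hat\omega\inprob\omega$ and $n/N\to\lambda$ assumed in the statement, Proposition~\ref{prop:PPest} (equivalently Theorem~\ref{thm:recest}(i)) then delivers the result. One detail I would double-check is that the remainder for $\allfest{\theta}$ is $\littleOp((n+N)^{-1/2})\subseteq\littleOp(n^{-1/2})$, which is what Theorem~\ref{thm:recest} uses.
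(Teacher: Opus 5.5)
Your proposal is correct and follows the paper's argument. The paper obtains this corollary by deriving the influence functions of the ratio estimators (citing the functional delta method, which for a ratio of sample means reduces to your bivariate delta-method calculation) and then invoking Proposition~\ref{prop:PPest}, equivalently Theorem~\ref{thm:recest}(i)--(ii) with $\IFdelta(x)=\IFtheta^\TPR(r,f(x))$. Your checks on $\bE[Y]>0$, $\bE[f(X)]>0$, square integrability of $f(X)$, and nondegeneracy of $\IFtheta^\TPR(R,f(X))$ spell out regularity conditions the paper only alludes to at the start of Section~\ref{sec:rectifier}.
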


\begin{corollary}[A PPI estimator for $\FPR$]\label{cor:fpr_ppi}
Define $\labest{\theta}:=\frac{\sum_{i=1}^nI(R_i>\alpha)(1-Y_i)}{\sum_{i=1}^n(1-Y_i)}$ 
     and $\labfest{\theta}:=\frac{\sum_{i=1}^nI(R_i>\alpha)(1-f(X_i))}{\sum_{i=1}^n(1-f(X_i))}$. Let $\IFtheta^\FPR(R_i,Y_i)$ and  $\IFtheta^\FPR(R_i,f(X_i))$ denote their influence functions (evaluated at a single observation), respectively. Furthermore, define  $\allfest{\theta}:=\frac{\sum_{i=1}^{n+N} I(R_i>\alpha)(1-f(X_i))}{\sum_{i=1}^{n+N} (1-f(X_i))}.$ 
 
Then, $\IFtheta^\FPR(R_i,Y_i)=\frac{I(R_i>\alpha)(1-Y_i)}{\bE\sbr{1-Y}}-\frac{\bE \sbr{I\rbr{R>\alpha}(1-Y)}(1-Y_i)}{\bE\sbr{1-Y}^2}$ and  $\IFtheta^\FPR(R_i,f(X_i))=\frac{I(R_i>\alpha)(1-f(X_i))}{\bE\sbr{1-f(X)}}-\frac{\bE \sbr{I\rbr{R>\alpha}(1-f(X))}(1-f(X_i))}{\bE\sbr{1-f(X)}^2}$. Furthermore, defining 
\begin{equation*}
\ppest{\theta}:=\labest{\theta}+\hat{\omega}\rbr{\allfest{\theta}-\labfest{\theta}},
\label{eq:app-PPI}
\end{equation*} and supposing  
 that $\frac{n}{N}\rightarrow\lambda$ and 
$\hat{\omega}\xrightarrow{p}\frac{\bE[\IFtheta^\FPR(R,Y)\IFtheta^\FPR(R,f(X))]}{\bE[\IFtheta^\FPR(R,f(X))^2]},$
it follows that 
\begin{equation*}
    \sqrt{n}\rbr{\ppest{\theta}-\target{\theta}^{\FPR}}\indist \Norm\rbr{0,\bE[\IFtheta^{\FPR}(R,Y)^2]-\frac{\rbr{\bE[\IFtheta^{\FPR}(R,Y)\IFtheta^{\FPR}(R,f(X))]}^2}{(1+\lambda)\bE[\IFtheta^{\FPR}(R,f(X))^2]}}.
    \label{eq:PPI-asymptotic-tpr}
\end{equation*}
\end{corollary}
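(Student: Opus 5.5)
The corollary follows from Proposition~\ref{prop:PPest} once we check its hypotheses. Two things need to be shown: that $\labest{\theta}$ is a regular ALE of $\target{\theta}^\FPR$ with the stated influence function, and that $\labfest{\theta}$ and $\allfest{\theta}$ are regular ALEs of $\targetf{\theta}$ with influence function $\IFtheta^\FPR(R,f(X))$. After that, the stated convergence of $\hat\omega$ and $n/N\to\lambda$ are exactly the remaining hypotheses of Proposition~\ref{prop:PPest}, and the variance formula follows.

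First, I will write the FPR functional as a ratio of means, $\theta = g(\mu_1,\mu_2) = \mu_1/\mu_2$. Here $\mu_1 = \bE[I(R>\alpha)(1-Y)]$ and $\mu_2 = \bE[1-Y]$. The labeled estimator is $g(\bar A_n,\bar B_n)$, where $A_i = I(R_i>\alpha)(1-Y_i)$ and $B_i = 1-Y_i$. Both coordinates are bounded, so they have finite second moments. Since $\mu_2 \neq 0$, $g$ is continuously differentiable at $(\mu_1,\mu_2)$, and the delta method gives
\[
\labest{\theta} - \target{\theta}^\FPR = \frac{1}{n}\sum_{i=1}^n\left(\frac{A_i-\mu_1}{\mu_2} - \frac{\mu_1(B_i-\mu_2)}{\mu_2^2}\right) + \littleOp(n^{-1/2}).
\]
The constant terms cancel, leaving $\frac{A_i}{\mu_2} - \frac{\mu_1 B_i}{\mu_2^2}$. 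This is the stated $\IFtheta^\FPR(R_i,Y_i)$. Regularity holds because this is a smooth function of sample means.

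Next, the same argument applies with $Y_i$ replaced by $f(X_i)$. Because $f$ is deterministic and independent of the data, $\labfest{\theta}$ and $\allfest{\theta}$ are smooth functions of sample means of $I(R_i>\alpha)(1-f(X_i))$ and $1-f(X_i)$. The $n+N$ observations are i.i.d.\ from $\marginalapp$, so both estimators are ALEs of $\targetf{\theta}=\bE[I(R>\alpha)(1-f(X))]/\bE[1-f(X)]$ with influence function $\IFtheta^\FPR(R,f(X))$. For $\allfest{\theta}$ the remainder is $o_p((n+N)^{-1/2})$, which is also $o_p(n^{-1/2})$.

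The main obstacle is justifying these regularity conditions for the $f$-version. We need $\bE[1-f(X)]\neq 0$, which follows from the assumption $\bE[f(X)](1-\bE[f(X)])\neq0$. We also need a finite second moment of $f(X)$, which I would assume, or take from boundedness of $f$ for a classifier. The first-order nondegeneracy required before Proposition~\ref{prop:PPest} is handled by requiring $\bE[\IFtheta^\FPR(R,f(X))^2]>0$, which is implicit in the definition of the limit of $\hat\omega$. With these in place, Proposition~\ref{prop:PPest} applies with $\IFtheta=\IFtheta^\FPR$ and gives the displayed limit distribution.
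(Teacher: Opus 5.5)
Your proposal is correct and follows the paper's route. The paper also treats this as a direct corollary of Proposition~\ref{prop:PPest}: it obtains $\IFtheta^\FPR$ by a delta-method linearization of the ratio of means (it cites the functional delta method, which here reduces to your finite-dimensional ratio argument), takes the $f$-based estimators to be ALEs of $\targetf{\theta}$ with influence function $\IFtheta^\FPR(R,f(X))$ under $\bE[1-f(X)]\neq 0$ and nondegeneracy, and then reads off the variance. Your explicit checks of the moment and nonzero-denominator conditions are slightly more careful than the paper, which leaves these under its stated regularity assumptions.
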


\begin{corollary}[A PPI estimator for AUC]\label{cor:auc_ppi}
Define $\labest{\theta}:=\int \frac{\sum_{i=1}^nI(R_i>\alpha)Y_i}{\sum_{i=1}^nY_i}d\rbr{\frac{\sum_{i=1}^nI(R_i\leq\alpha)(1-Y_i)}{\sum_{i=1}^n(1-Y_i)}}$ 
     and $\labfest{\theta}:=\int \frac{\sum_{i=1}^nI(R_i>\alpha)f(X_i)}{\sum_{i=1}^nf(X_i)}d\rbr{\frac{\sum_{i=1}^nI(R_i\leq\alpha)(1-f(X_i))}{\sum_{i=1}^n(1-f(X_i))}}$. Let $\IFtheta^\AUC(R_i,Y_i)$ and  $\IFtheta^\AUC(R_i,f(X_i))$ denote their influence functions (evaluated at a single observation), respectively. Furthermore, define 
     $\allfest{\theta}:=\int \frac{\sum_{i=1}^{n+N}I(R_i>\alpha)f(X_i)}{\sum_{i=1}^{n+N}f(X_i)}d\rbr{\frac{\sum_{i=1}^{n+N}I(R_i\leq\alpha)(1-f(X_i))}{\sum_{i=1}^{n+N}(1-f(X_i))}}.$
 
Then, \[\IFtheta^\AUC(R_i,Y_i)=\frac{1-Y_i}{\bE[1-Y]}\left(1-\frac{\bE[I(R\leq\alpha)Y]}{\bE[Y]}\right)\bigg|_{\alpha=R_i}+\frac{Y_i}{\bE[Y]}\frac{\bE[I(R\leq\alpha)(1-Y)]}{\bE[1-Y]}\bigg|_{\alpha=R_i}\nonumber\]
\[-\left(\frac{1-Y_i}{\bE[1-Y]}+\frac{Y_i}{\bE[Y]}\right)
   \int \frac{\bE[I(R>\alpha)Y]}{\bE[Y]}d\rbr{\frac{\bE[I(R\leq \alpha)(1-Y)]}{\bE[1-Y]}},\] 
and  \[\IFtheta^\AUC(R_i,f(X_i))=\frac{1-f(X_i)}{\bE[1-f(X)]}\left(1-\frac{\bE[I(R\leq\alpha)f(X)]}{\bE[f(X)]}\right)\bigg|_{\alpha=R_i}+\frac{f(X_i)}{\bE[f(X)]}\frac{\bE[I(R\leq\alpha)(1-f(X))]}{\bE[1-f(X)]}\bigg|_{\alpha=R_i}\nonumber\]\[-\left(\frac{1-f(X_i)}{\bE[1-f(X)]}+\frac{f(X_i)}{\bE[f(X)]}\right)
   \int \frac{\bE[I(R>\alpha)f(X)]}{\bE[f(X)]}d\rbr{\frac{\bE[I(R\leq \alpha)(1-f(X))]}{\bE[1-f(X)]}}.\] 
   Furthermore, defining 
\begin{equation*}
\ppest{\theta}:=\labest{\theta}^\AUC+\hat{\omega}\rbr{\allfest{\theta}-\labfest{\theta}},
\label{eq:app-PPI}
\end{equation*} and supposing  
 that $\frac{n}{N}\rightarrow\lambda$ and 
$\hat{\omega}\xrightarrow{p}\frac{\bE[\IFtheta^\AUC(R,Y)\IFtheta^\AUC(R,f(X))]}{\bE[\IFtheta^\AUC(R,f(X))^2]},$
it follows that 
\begin{equation*}
    \sqrt{n}\rbr{\ppest{\theta}-\target{\theta}^{\AUC}}\indist \Norm\rbr{0,\bE[\IFtheta^{\AUC}(R,Y)^2]-\frac{\rbr{\bE[\IFtheta^{\AUC}(R,Y)\IFtheta^{\AUC}(R,f(X))]}^2}{(1+\lambda)\bE[\IFtheta^{\AUC}(R,f(X))^2]}}.
    \label{eq:PPI-asymptotic-tpr}
\end{equation*}
\end{corollary}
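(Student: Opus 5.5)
The plan is to obtain Corollary~\ref{cor:auc_ppi} as a direct application of Proposition~\ref{prop:PPest}. Proposition~\ref{prop:PPest} needs three ingredients. First, $\labest{\theta}^\AUC$ must be a regular ALE of $\target{\theta}^\AUC$ with influence function $\IFtheta^\AUC(R,Y)$. Second, $\labfest{\theta}$ and $\allfest{\theta}$ must be regular ALEs of $\targetf{\theta}$, the AUC functional evaluated at the law of $(R,f(X))$, with influence function $\IFtheta^\AUC(R,f(X))$. Third, $\hat\omega$ must converge to the stated ratio. The third is assumed. By the discussion preceding \eqref{eq:PPI}, the second follows from the first once the functional is written as a map of the law of $(R,W)$ with $W\in\{Y,f(X)\}$. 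So the work is to express AUC as a Hadamard differentiable functional and compute its derivative.

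To do this, I write AUC as $\Phi(G)=\int \frac{A_G(\alpha)}{m_G}\,d\frac{B_G(\alpha)}{1-m_G}$. Here $G$ is the joint distribution of $(R,W)$, and
\[
A_G(\alpha)=\int I(r>\alpha)w\,dG,\qquad B_G(\alpha)=\int I(r\le\alpha)(1-w)\,dG,\qquad m_G=\int w\,dG.
\]
The maps $G\mapsto (A_G,B_G,m_G)$ are linear and continuous into $\ell^\infty(\bR)\times\ell^\infty(\bR)\times\bR$. Since $f(X)\ge 0$ is unbounded only in principle, I will treat $W$ as taking values in $[0,1]$, or otherwise require a finite second moment. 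The ratio maps are Hadamard differentiable because $\bE[W](1-\bE[W])\neq 0$, which is exactly the stated regularity assumption. The map $(a,b)\mapsto\int a\,db$ is Hadamard differentiable at pairs with $b$ of bounded variation, tangentially to suitable sets. This is the standard Wilcoxon/Mann--Whitney argument (van der Vaart, Lemma 20.10), so I would invoke it rather than reprove it. The chain rule then gives Hadamard differentiability of $\Phi$, and the functional delta method yields asymptotic linearity of $\Phi(G_n)$.

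Next I compute the influence function by perturbing $G$ toward a point mass $\delta_{(r_i,w_i)}$ and differentiating. Write $\mathrm{TPR}(\alpha)=A/m$ and $1-\mathrm{FPR}(\alpha)=B/(1-m)$. The derivative has three parts:
\begin{itemize}
\item perturbing $B$ gives $\int \mathrm{TPR}(\alpha)\,d\frac{I(r_i\le \alpha)(1-w_i)}{1-m}=\frac{1-w_i}{1-m}\mathrm{TPR}(r_i)$, which equals $\frac{1-w_i}{1-m}\rbr{1-\frac{\bE[I(R\le\alpha)W]}{\bE W}}\big|_{\alpha=r_i}$;
\item perturbing $A$ gives $\frac{w_i}{m}\int I(r_i>\alpha)\,d(1-\mathrm{FPR})=\frac{w_i}{m}\frac{\bE[I(R\le\alpha)(1-W)]}{\bE[1-W]}\big|_{\alpha=r_i}$, up to the left-limit/tie convention;
\item perturbing the two normalizers $m$ and $1-m$ gives $-\rbr{\frac{w_i}{m}+\frac{1-w_i}{1-m}}\target{\theta}^\AUC$.
\end{itemize}
Summing these three parts gives the displayed formula for $\IFtheta^\AUC$. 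Substituting $W=Y$ and $W=f(X)$ produces the two stated expressions.

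The main obstacle is the Hadamard differentiability of the integral map, and specifically the handling of ties in $R$. With ties, the evaluation "$|_{\alpha=R_i}$" requires a consistent choice between $\le$ and $<$. I would first assume $R$ is continuous. Alternatively, I would restrict to the tangent set of continuous limits, where the empirical process limits live under continuity of $R$. With differentiability in hand, I verify that the influence functions have mean zero and finite variance, which holds because all terms are bounded. Proposition~\ref{prop:PPest} then delivers the stated limiting normal distribution with variance $\bE[\IFtheta^\AUC(R,Y)^2]-\frac{(\bE[\IFtheta^\AUC(R,Y)\IFtheta^\AUC(R,f(X))])^2}{(1+\lambda)\bE[\IFtheta^\AUC(R,f(X))^2]}$.
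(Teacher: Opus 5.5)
Your proposal is correct and follows essentially the same route as the paper. The paper presents this result as a direct corollary of Proposition~\ref{prop:PPest}, obtaining the AUC influence function via the functional delta method; your appeal to the Wilcoxon-type Hadamard differentiability lemma and your explicit Gateaux-derivative computation fill in details the paper leaves implicit, including the continuity-of-$R$/tie caveat it does not mention. One bookkeeping remark: your three bullet contributions are exactly right if you differentiate along $\delta_{(r_i,w_i)}$ with normalizers $\int w\,dG$ and $\int (1-w)\,dG$, which is legitimate because $\Phi$ is invariant to rescaling $G$; if you instead write the normalizer as $1-m_G$ and use direction $\delta_{(r_i,w_i)}-G$, each piece picks up extra $\pm\Phi(G)$ terms that cancel, giving the same formula.
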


\subsection{PPI for binary classification metrics with covariate distribution shift}\label{sec:appcovshift}
Next, we consider the covariate distribution shift setting of Section \ref{sec:covdistall}, in the contexts of TPR and AUC. Thus, the targets are defined as in \eqref{eq:tpr} and \eqref{eq:auc}, where the probabilities are taken with respect to the full data distribution, as described in Section \ref{sec:covariate-shift}.  Results for FPR \eqref{eq:fpr} or for the scenarios in Sections \ref{sec:covdistunlab} and \ref{sec:covdistlab} can be derived analogously. 

Recall that we have assumed, without loss of generality,  that $R$ is an element of $X$.
 Define $\pi(X)=\Pr(C=1\mid X)$; in this illustration, we assume $\pi$ is known and satisfies Assumption \ref{assmp:covdistshift}. The next two results are corollaries of Theorem \ref{thm:ppiipw}. When $\pi$ is unknown, similar results follow from Theorem \ref{thm:covshiftallestpi}.

\begin{corollary}[A PPI estimator for $\TPR$ of the full distribution]\label{cor:tpr_ppi_cov}
    Define
    $\fulllabmdest{\theta}:=\frac{\sum_{i=1}^{n+N}\frac{C_i}{\pi(X_i)}I(R_i\leq\alpha)Y_i}{\sum_{i=1}^{n+N}\frac{C_i}{\pi(X_i)}Y_i}$, $\fulllabmdestf{\theta}:=\frac{\sum_{i=1}^{n+N}\frac{C_i}{\pi(X_i)}I(R_i\leq\alpha)f(X_i)}{\sum_{i=1}^{n+N}\frac{C_i}{\pi(X_i)}f(X_i)}$, and 
    $\fullmdestf{\theta}:=\frac{\sum_{i=1}^{n+N}I(R_i\leq\alpha)f(X_i)}{\sum_{i=1}^{n+N}f(X_i)}$.
    
Suppose
$\hat{\omega}\inprob\frac{\Cov\left(\frac{C_i}{\pi(X_i)}\IFtheta^\TPR(R_i,Y_i),\left(\frac{C_i}{\pi(X_i)}-1\right)\IFtheta^\TPR(R_i,f(X_i))\right)}{\Var\left(\rbr{1-\frac{C_i}{\pi(X_i)}}\IFtheta^\TPR(R_i,f(X_i))\right)}$, where $\IFtheta^\TPR(R_i,Y_i)$ and $\IFtheta^\TPR(R_i,f(X_i))$ are given in Corollary \ref{cor:tpr_ppi}.  
Define\[
    \fullest{\theta}:=\fulllabmdest{\theta}+\hat{\omega}\rbr{\fullmdestf{\theta}-\fulllabmdestf{\theta}}.
\]
Under Assumption \ref{assmp:covdistshift},
we have
$$\sqrt{n+N}\rbr{\fullest{\theta}-\target{\theta}^\TPR}\indist N(0,\sigma^2),$$ where $\sigma^2=\Var\rbr{\frac{C_i}{\pi(X_i)}\IFtheta^\TPR(R_i,Y_i)}-\frac{\Cov\rbr{\frac{C_i}{\pi(X_i)}\IFtheta^\TPR(R_i,Y_i),\rbr{1-\frac{C_i}{\pi(X_i)}}\IFtheta^\TPR(R_i,f(X_i))}^2}{\Var\rbr{\rbr{1-\frac{C_i}{\pi(X_i)}}\IFtheta^\TPR(R_i,f(X_i))}}$.
\end{corollary}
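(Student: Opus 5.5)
The plan is to treat the statement as a direct instance of Theorem~\ref{thm:ppiipw}. Write the TPR as $\target{\theta}^\TPR=\Phi(\target{F})$, where $\Phi(F)=\int I(r>\alpha)\,y\,dF(x,y)/\int y\,dF(x,y)$. This is a ratio of two linear functionals of $F$. Then $\fulllabmdest{\theta}=\Phi(F_\labfull)$, $\fulllabmdestf{\theta}=\Phi(F^f_\labfull)$ and $\fullmdestf{\theta}=\Phi(F^f)$, which are exactly the three estimators appearing in \eqref{eq:estfull}. (I read the indicator as the one matching the definition \eqref{eq:tpr} of the target. If it is $I(R\le\alpha)$, the same argument applies with $\Phi$ replaced accordingly.)

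The first step is to verify that $\Phi$ meets the requirements of Section~\ref{sec:covariate-shift}: it must be Hadamard differentiable with respect to the supremum norm, and its Gateaux derivative must give the influence function $\IFtheta^\TPR$ from Corollary~\ref{cor:tpr_ppi}.

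\begin{itemize}
\item Since $Y\in\{0,1\}$ and $f(X)\ge 0$ with $\bE[f(X)]\neq 0$, the denominators $\bE[Y]$ and $\bE[f(X)]$ are bounded away from zero.
\item The maps $F\mapsto\int I(r>\alpha)y\,dF$ and $F\mapsto\int y\,dF$ are linear. When restricted to distributions with bounded support in the $y$-coordinate, they are continuous in a suitable norm. Hence $\Phi$ is a smooth composition of these maps with $(a,b)\mapsto a/b$ on $b>0$, and the chain rule gives Hadamard differentiability.
\item Differentiating at $\target{F}$ in the direction $\delta_{x,y}-\target{F}$ gives
\[
\dot\Phi(\target F;\delta_{x,y}-\target F)=\frac{I(r>\alpha)y}{\bE[Y]}-\frac{\bE[I(R>\alpha)Y]\,y}{\bE[Y]^2}.
\]
This is $\IFtheta^\TPR(r,y)$. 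The same map evaluated at $(x,f(x))$ gives $\IFtheta^\TPR(r,f(x))$, up to centering at the $f$-version of the target.
\end{itemize}

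The second step is to invoke Remark~\ref{rem:full_fdm}, which is proven in Appendix~\ref{app:fdm-ale}. It gives that the three estimators are regular ALEs with influence functions $\frac{C}{\pi(X)}\IFtheta^\TPR(R,Y)$, $\frac{C}{\pi(X)}\IFtheta^\TPR(R,f(X))$ and $\IFtheta^\TPR(R,f(X))$. The last two share the same centering $\Phi(\target F^f)$, so the rectifier is centered at zero. Substituting these influence functions into Theorem~\ref{thm:ppiipw}, together with Assumption~\ref{assmp:covdistshift} and the assumed limit of $\hat\omega$, gives the stated $\omega$ and $\sigma^2$ verbatim.

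The main obstacle I expect is the Hadamard differentiability in the sup norm. The functional involves the unbounded integrand $y$, and for the $f$-version the integrand is $f(x)$, which is not bounded a priori. Sup-norm control of $F$ does not directly control $\int y\,dF$ when $y$ is unbounded. My first attempt would be to bypass the functional delta method: because $\Phi$ is a ratio of weighted sample means, I would apply the ordinary delta method to the pair of IPW-weighted averages $\bigl(\frac{1}{n+N}\sum \frac{C_i}{\pi(X_i)}I(R_i>\alpha)Y_i,\ \frac{1}{n+N}\sum\frac{C_i}{\pi(X_i)}Y_i\bigr)$, and likewise for the $f$-versions.

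\begin{itemize}
\item The overlap condition in Assumption~\ref{assmp:covdistshift} bounds the weights $C/\pi(X)$.
\item Assuming $\bE[f(X)^2]<\infty$ gives finite second moments.
\item The multivariate CLT, applied jointly to all six averages, combined with the delta method then yields the joint asymptotic linearity that the proof of Theorem~\ref{thm:ppiipw} requires.
\end{itemize}
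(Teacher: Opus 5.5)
Your proposal is correct and is essentially the paper's argument: the corollary is Theorem~\ref{thm:ppiipw} instantiated with the TPR functional, with Remark~\ref{rem:full_fdm} supplying the influence functions $\frac{C}{\pi(X)}\IFtheta^\TPR(R,Y)$, $\frac{C}{\pi(X)}\IFtheta^\TPR(R,f(X))$ and $\IFtheta^\TPR(R,f(X))$, and your fallback of applying the ordinary delta method to ratios of IPW-weighted means is a valid, more elementary way to obtain them; it also makes explicit that for the $f$-versions the derivative is taken at $\target{F}^f$, so the constants become $\bE[f(X)]$ and $\bE[I(R>\alpha)f(X)]$, which is more than a ``recentering'' of $\IFtheta^\TPR(R,Y)$ evaluated at $Y=f(X)$. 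Your reading of the indicator as $I(R_i>\alpha)$ is the right one: with the literal $I(R_i\le\alpha)$ the influence functions only flip sign, leaving $\omega$ and $\sigma^2$ unchanged, but the estimator is then consistent for $1-\target{\theta}^\TPR$, so that version of the statement would be false as written rather than provable ``with $\Phi$ replaced accordingly.''
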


\begin{corollary}[A PPI estimator for AUCof the full distribution]\label{cor:auc_ppi_cov}
    Define
    $\fulllabmdest{\theta}:=\int (1-F_{\labfull,1}(\alpha))dF_{\labfull,0}(\alpha)$, $\fulllabmdestf{\theta}:=\int (1-F^{f}_{\labfull,1}(\alpha))dF^{f}_{\labfull,0}(\alpha)$, and 
    $\fullmdestf{\theta}:=\int (1-F^{f}_{1}(\alpha))dF^{f}_{0}(\alpha)$, where
\begin{align*}
    F_{\labfull,1}(\alpha)&:= \frac{\sum_{i=1}^{n+N}\frac{C_i}{\pi(X_i)}I(R_i\leq\alpha)Y_i}{\sum_{i=1}^{n+N}\frac{C_i}{\pi(X_i)}Y_i},
    &F_{\labfull,0}(\alpha):=\frac{\sum_{i=1}^{n+N}\frac{C_i}{\pi(X_i)}I(R_i\leq\alpha)(1-Y_i)}{\sum_{i=1}^{n+N}\frac{C_i}{\pi(X_i)}(1-Y_i)},\\
    F^{f}_{\labfull,1}(\alpha)&:= \frac{\sum_{i=1}^{n+N}\frac{C_i}{\pi(X_i)}I(R_i\leq\alpha)f(X_i)}{\sum_{i=1}^{n+N}\frac{C_i}{\pi(X_i)}f(X_i)},
    &F^{f}_{\labfull,0}(\alpha):=\frac{\sum_{i=1}^{n+N}\frac{C_i}{\pi(X_i)}I(R_i\leq\alpha)(1-f(X_i))}{\sum_{i=1}^{n+N}\frac{C_i}{\pi(X_i)}(1-f(X_i))},\\
    F^{f}_{1}(\alpha)&:= \frac{\sum_{i=1}^{n+N}I(R_i\leq\alpha)f(X_i)}{\sum_{i=1}^{n+N}f(X_i)},
    &F^{f}_{0}(\alpha):=\frac{\sum_{i=1}^{n+N}I(R_i\leq\alpha)(1-f(X_i))}{\sum_{i=1}^{n+N}(1-f(X_i))}.
\end{align*}
Suppose
$\hat{\omega}\inprob\frac{\Cov\left(\frac{C_i}{\pi(X_i)}\IFtheta^\AUC(R_i,Y_i),\left(\frac{C_i}{\pi(X_i)}-1\right)\IFtheta^\AUC(R_i,f(X_i))\right)}{\Var\left(\rbr{1-\frac{C_i}{\pi(X_i)}}\IFtheta^\AUC(R_i,f(X_i))\right)}$, where $\IFtheta^\AUC(R_i,Y_i)$ and  $\IFtheta^\AUC(R_i,f(X_i))$ are given in Corollary \ref{cor:auc_ppi}.
Define \[
    \fullest{\theta}:=\fulllabmdest{\theta}+\hat{\omega}\rbr{\fullmdestf{\theta}-\fulllabmdestf{\theta}}.
\]
Under Assumption \ref{assmp:covdistshift},
we have
$$\sqrt{n+N}\rbr{\fullest{\theta}-\target{\theta}^\AUC}\indist N(0,\sigma^2),$$ where $\sigma^2=\Var\rbr{\frac{C_i}{\pi(X_i)}\IFtheta^\AUC(R_i,Y_i)}-\frac{\Cov\rbr{\frac{C_i}{\pi(X_i)}\IFtheta^\AUC(R_i,Y_i),\rbr{1-\frac{C_i}{\pi(X_i)}}\IFtheta^\AUC(R_i,f(X_i))}^2}{\Var\rbr{\rbr{1-\frac{C_i}{\pi(X_i)}}\IFtheta^\AUC(R_i,f(X_i))}}$.
\end{corollary}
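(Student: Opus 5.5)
This result is a direct specialization of Theorem~\ref{thm:ppiipw}, so the plan is to check that its hypotheses hold and that the influence function $\IFtheta$ in \eqref{eq:generalif_fdm} is $\IFtheta^\AUC$ from Corollary~\ref{cor:auc_ppi}.

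First I would write the AUC as $\target{\theta}^\AUC=\Phi(\target{F})$ with
\[
\Phi(F):=\int \rbr{1-F_1(\alpha)}\,dF_0(\alpha),
\]
where $F$ is a CDF of $(R,Y)$ (equivalently of $(X,Y)$, since $R$ is a coordinate of $X$). Here $F_1(\alpha)=\int_{r\le\alpha} y\,dF(r,y)\big/\int y\,dF$, and $F_0$ is defined analogously with weights $1-y$. With $F=F_\labfull$, $F=F^f_\labfull$ and $F=F^f$, the quantities $F_{\labfull,1},F_{\labfull,0}$, $F^f_{\labfull,1},F^f_{\labfull,0}$ and $F^f_1,F^f_0$ in the statement are exactly these weighted ratios; the common normalizer $\sum_i C_i/\pi(X_i)$ cancels. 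Hence $\fulllabmdest{\theta}$, $\fulllabmdestf{\theta}$ and $\fullmdestf{\theta}$ coincide with the estimators of Section~\ref{sec:covdistall} for this $\Phi$.

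Next I would verify that $\Phi$ is Hadamard differentiable with respect to the supremum norm. I would write $\Phi$ as a composition of three maps:
\begin{enumerate}[(a)]
\item the linear, bounded maps $F\mapsto \int y\,dF$ and $F\mapsto \int I(r\le \alpha)\,y\,dF$, viewed as elements of $\ell^\infty$;
\item division, which is differentiable because $\bE[Y]\in(0,1)$, and for the $f$-versions because $\bE[f(X)](1-\bE[f(X)])\neq 0$ and $f\ge 0$;
\item the map $(F_1,F_0)\mapsto\int(1-F_1)\,dF_0$, which is Hadamard differentiable tangentially to continuous directions by the standard Wilcoxon/Mann--Whitney functional result (van der Vaart, Lemma~20.10).
\end{enumerate}
The chain rule then gives Hadamard differentiability of $\Phi$. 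This is the main obstacle: the integral functional is differentiable only tangentially to a suitable subspace, so I would need either to assume that $R$ has a continuous distribution or to argue that ties are handled. I would cite the functional delta method results used in \citep{ledell2015computationally}.

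Then I would compute the Gateaux derivative $\dot\Phi(\target{F};\delta_{r,y}-\target{F})$ term by term. Perturbing $F_0$ in the direction of its influence $\frac{1-y}{\bE[1-Y]}\rbr{I(r\le\cdot)-F_0}$ gives
\[
\frac{1-y}{\bE[1-Y]}\rbr{1-F_1(r)}-\frac{1-y}{\bE[1-Y]}\,\theta.
\]
Perturbing $F_1$, and integrating by parts against $dF_0$, gives
\[
\frac{y}{\bE[Y]}\,F_0(r)-\frac{y}{\bE[Y]}\,\theta.
\]
These two contributions reproduce $\IFtheta^\AUC(R_i,Y_i)$ of Corollary~\ref{cor:auc_ppi}. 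Replacing $Y$ by $f(X)$ throughout yields $\IFtheta^\AUC(R_i,f(X_i))$; this is valid because $f$ is fixed, so the $f$-CDFs are again evaluations of the same $\Phi$.

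Finally, Remark~\ref{rem:full_fdm} supplies the influence functions $\frac{C}{\pi(X)}\IFtheta^\AUC(R,Y)$, $\frac{C}{\pi(X)}\IFtheta^\AUC(R,f(X))$ and $\IFtheta^\AUC(R,f(X))$. Assumption~\ref{assmp:covdistshift} holds by hypothesis, and the limit of $\hat\omega$ is exactly \eqref{eq:omega_full} with $\phi=\IFtheta^\AUC$. Theorem~\ref{thm:ppiipw} therefore gives asymptotic normality of $\sqrt{n+N}\rbr{\fullest{\theta}-\target{\theta}^\AUC}$ with the stated $\sigma^2$.
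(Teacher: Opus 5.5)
Your proposal is correct and takes the same approach as the paper, which presents this result as a direct corollary of Theorem~\ref{thm:ppiipw} (via Remark~\ref{rem:full_fdm}), with $\IFtheta=\IFtheta^\AUC$ taken from Corollary~\ref{cor:auc_ppi} and derived by the functional delta method \citep{ledell2015computationally}. Your explicit checks of Hadamard differentiability and of the Gateaux derivative supply details the paper leaves implicit, and the continuity of $R$ that you flag is indeed needed for the stated $\IFtheta^\AUC$: with ties, the contribution from perturbing the $Y=1$ component involves $\Pr(R<r\mid Y=0)$ rather than $\Pr(R\leq r\mid Y=0)$.
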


\subsection{The special case of evaluating a prediction model}

\label{rem:fequalsR}

In Sections \ref{sec:binary_no_covshift} and \ref{sec:appcovshift}, we assumed that $f(X)$ was an arbitrary machine learning model and that our goal was to evaluate the performance of a biomarker $R$. Here, instead, we suppose that our goal is to estimate the performance of $f(X)$ as a prediction model for $Y$. That is, in this subsection we take $f(X)=R$. 


 In this setting, the target (TPR, FPR, or AUC) itself involves $f(X)$. Under modest regularity conditions (see the beginning of Section~\ref{sec:rectifier}), and provided that the influence functions $\IFtheta^{\TPR}(f(X),f(X))$ and $\IFtheta^{\TPR}(f(X),Y)$ are not orthogonal (analogously for FPR and AUC), the PPI estimator will provide a reduction in the asymptotic variance relative to the estimator that uses only labeled data. This reduction will tend to be larger when $f(X)$ is a good approximation of $Y$. This indicates that the PPI estimators of TPR, FPR, and AUC do not actually require an additional black-box prediction model to achieve an improvement in the context of evaluating a given prediction model. 
    

    However, the same is not true for all estimands. To illustrate this, we consider the mean squared error (MSE), 
    defined as $\target{\theta}^\text{MSE}(Y,R):=E\left[ (Y-R)^2 \right]$. 
    A PPI estimator of MSE motivated by Proposition~\ref{prop:PPest} will improve upon the asymptotic variance of the labeled estimator,  provided that the influence functions $\IFtheta^{\text{MSE}}(R,Y)$ and $\IFtheta^{\text{MSE}}(R,f(X))$ are not orthogonal,  where $\IFtheta^{\text{MSE}}(R,f(X))=(f(X)-R)^2-E\left[ (f(X)-R)^2 \right]$. However, in the setting of $f(X)=R$, because $\IFtheta^{\text{MSE}}(f(X),f(X))=0$, a PPI estimator using $f(X)$ in its rectifier term will never improve the estimation of the MSE of $f(X)$, compared to the labeled data estimator, no matter how well $f(X)$ approximates $Y$. To see this directly, note that a PPI estimator for MSE takes the form
$$\frac{1}{n}\sum_{i=1}^n (Y_i - R_i)^2 + \hat\omega \left( \frac{1}{n+N}\sum_{i=1}^{n+N}(f(X_i)-R_i)^2-\frac{1}{n}\sum_{i=1}^n (f(X_i)-R_i)^2 \right), $$ for some suitably-defined $\hat\omega$. In the case of $f(X)=R$,  the  rectifier is exactly zero.

%% file: sec6simulations.tex
\section{Simulations}\label{sec:simulation}
\subsection{Simulations for PPI without covariate distribution shift}\label{sec:sim_nocovshift}

We now numerically investigate the proposed PPI estimators for TPR, FPR, and AUC derived in Section \ref{sec:binaryclassifier}. As a point of comparison, we also investigate the PPI estimator for the mean, $\bE(Y)$. Additional simulation results under a different data generation process, and illustrating Section \ref{rem:fequalsR}, are in Appendix \ref{sec:xtrasims}.


\subsubsection{Data generation} \label{sec:datagen}
We generate data with covariates $\textbf{X}:=\rbr{R,X_1,X_2,X_3,X_4,X_5}^\top\in \bR^6$ and outcome $Y \in \cbr{0,1}$ following\\
$\begin{pmatrix}
R  \\
X_1\\
X_2\\
X_3\\
X_4\\
X_5
\end{pmatrix}\iidsim\text{MVN}\rbr{\textbf{0},\Sigma}$, where $\Sigma=\begin{bmatrix}
1 & 0.5 & 0.3 & 0.2 & 0 & 0 \\
0.5 & 1 & 0.4 & 0.3 & 0 & 0 \\
0.3 & 0.4 & 1 & 0.2 & 0 & 0 \\
0.2 & 0.3 & 0.2 & 1 & 0 & 0 \\
0 & 0 & 0 & 0 & 1 & 0 \\
0 & 0 & 0 & 0 & 0 & 2
\end{bmatrix}$, 
$Y\mid \textbf{X}\overset{\text{ind}}{\sim} \text{Bernoulli}(\Pr(Y=1\mid \textbf{X}))$, and
logit$(\Pr(Y=1\mid \textbf{X}))=-0.5+R-0.9X_1^2+0.6|X_2|+0.5X_3^3+1.5R X_3-0.7X_1 X_2$.

We investigate five candidate prediction models for $f$:
\begin{itemize}
    \item $f(\textbf{X}) = \Pr(Y=1\mid \textbf{X})$: This is the ideal model, i.e., $f(\textbf{X})=\bE[Y\mid \textbf{X}]$. Based on the form of the influence functions shown in Section \ref{sec:binary_no_covshift}, Assumption \ref{assmp:oracle_case} is satisfied for this choice of $f$ when estimating TPR, FPR, and AUC, as well as for the mean.
    \item $f(\textbf{X}) = \text{RF}(Y\sim \textbf{X})$: A random forest model predicting $Y$ from $\textbf{X}$.
    \item $f(\textbf{X}) = \text{RF}(Y\sim R+X_1+X_4+X_5)$: A random forest model predicting $Y$ from $R, X_1,X_4$, and $X_5$.
    \item $f(\textbf{X}) = \text{RF}(Y\sim X_4+X_5)$: A random forest model predicting $Y$ only from $X_4$ and $X_5$, which are independent of $Y$.
    \item $f(\textbf{X}) = \text{Unif}[0.01,0.99]$: Uniformly-distributed random noise. 
\end{itemize}
These prediction models are independent of the data for conducting inference. Specifically, the random forest models are trained on 100,000 independent observations.

We set $n=1000$ and vary $\lambda$, where $\lambda=\frac{n}{N}\in \cbr{0.01,0.1,0.25,0.5,0.8}$. We  generate the outcomes $Y$ for $n$ observations, but we generate $\textbf{X}$ for all $n+N$ observations. The simulation results are evaluated over 2500 replications, providing a reasonable bound on the Monte Carlo error of estimates of confidence interval coverage \citep{morris2019using}. Unless otherwise specified, we report mean values over the 2500 replications.

\subsubsection{Simulation results}\label{sec:simres_noncovshift}

\begin{figure}
    \centering
    \includegraphics[width=\linewidth]{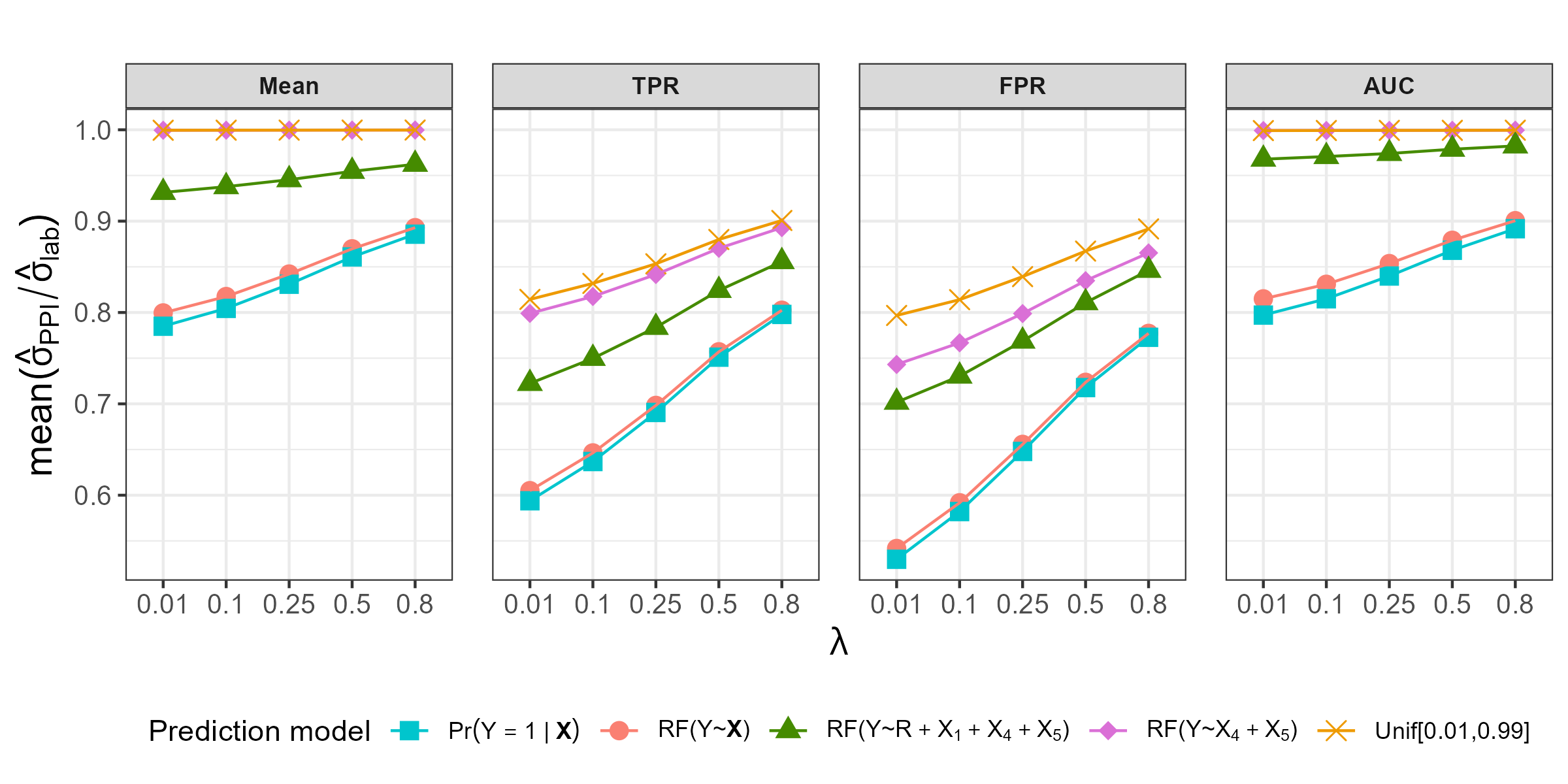}
    \caption{Ratio of estimated standard error for the PPI estimator to that of the estimator using only labeled data, averaged over 2500 replications, with $n=1000,$ $\lambda\in\cbr{0.01,0.1,0.25,0.5,0.8}$, for mean, TPR, FPR, and AUC estimation. Shapes and colors denote the different prediction models $f$ used in PPI. The standard errors of PPI estimators are no greater than those of the estimators using only the labeled data. The ideal model, $\Pr(Y=1\mid \textbf{X})$ (squares), provides the most improvement for every estimand. The noisy prediction models, $\text{RF}(Y\sim X_4+X_5)$ (diamonds) and $\text{Unif}[0.01,0.99]$ (crossed lines), have no improvement for mean or AUC estimation, but do result in a variance decrease in TPR and FPR estimation.}
    \label{fig:ratio_if}
\end{figure}
 Figure \ref{fig:ratio_if} compares the estimated standard error of our PPI estimators and that of the estimators using only the labeled data for the following four targets: the mean of $Y$, and the TPR, FPR, and AUC of $R$, as a biomarker or a predictor of $Y$. Both TPR and FPR are evaluated at a threshold value of 0.6. 
 
Though the ideal prediction model, $\Pr(Y=1\mid \textbf{X})$, exhibits the most improvement over the estimator that only uses the labeled observations, the model  $\text{RF}(Y\sim \textbf{X})$ performs almost as well. The model $\text{RF}(Y\sim R+X_1+X_4+X_5)$ performs a bit worse, since it does not contain two of the relevant predictors. 

Next, we turn to the noisy prediction models, $\text{RF}(Y\sim X_4+X_5)$ and $\text{Unif}[0.01,0.99]$. For these prediction models, $\Cov(f(\textbf{X}),Y)=0$. One might expect to see no improvement over the estimator that uses only the labeled observations. However, the PPI estimator does exhibit a variance decrease in the case of TPR and FPR, though not for mean and AUC. To understand this, recall that a PPI estimator provides a reduction in asymptotic variance if and only if $\IFtheta(X,Y)$ and $\IFtheta(X, f(X))$ are not orthogonal. In the case of the mean and AUC, $\Cov(f(\textbf{X}),Y)=0$ implies that $\Cov(\IFtheta^\text{Mean}(X, f(X)),\IFtheta^\text{Mean}(X, Y))$ and $\Cov(\IFtheta^\AUC(X, f(X)),\IFtheta^\AUC(X, Y))$ equal $0$, and so no variance reduction occurs. By contrast, for FPR and TPR, $\Cov(f(\textbf{X}),Y)=0$ does not imply that $\Cov(\IFtheta^\TPR(X, f(X)),\IFtheta^\TPR(X, Y))$ or $\Cov(\IFtheta^\FPR(X, f(X)),\IFtheta^\FPR(X, Y))$ equals $0$, and so there is a variance reduction (details are provided in Appendix \ref{sec:reducintprfpr}).
Furthermore,  the reduction in variance arising from  $\text{RF}(Y\sim X_4+X_5)$ is greater than that arising from $\text{Unif}[0.01,0.99]$. This is because the former captures some information about $Y$ --- for example, $\text{RF}(Y\sim X_4+X_5)$ approximates $\bE[Y]$ --- and thereby may lead to a greater $\Cov(\IFtheta^\TPR(X,Y),\IFtheta^\TPR(X,f(X)))^2$ and $\Cov(\IFtheta^\FPR(X,Y),\IFtheta^\FPR(X,f(X)))^2$ relative to the latter. 

Across all estimands, for all forms of $f(X)$ that exhibit improvement in variance for PPI estimators, smaller values of $\lambda$ lead to more improvement in the PPI estimators. This is because the variance decrease arising from PPI is proportional to $\frac{1}{1+\lambda}$, as demonstrated in Section \ref{sec:rectifier}.

 \begin{figure}[htbp]
    \centering
    \includegraphics[width=1\linewidth]{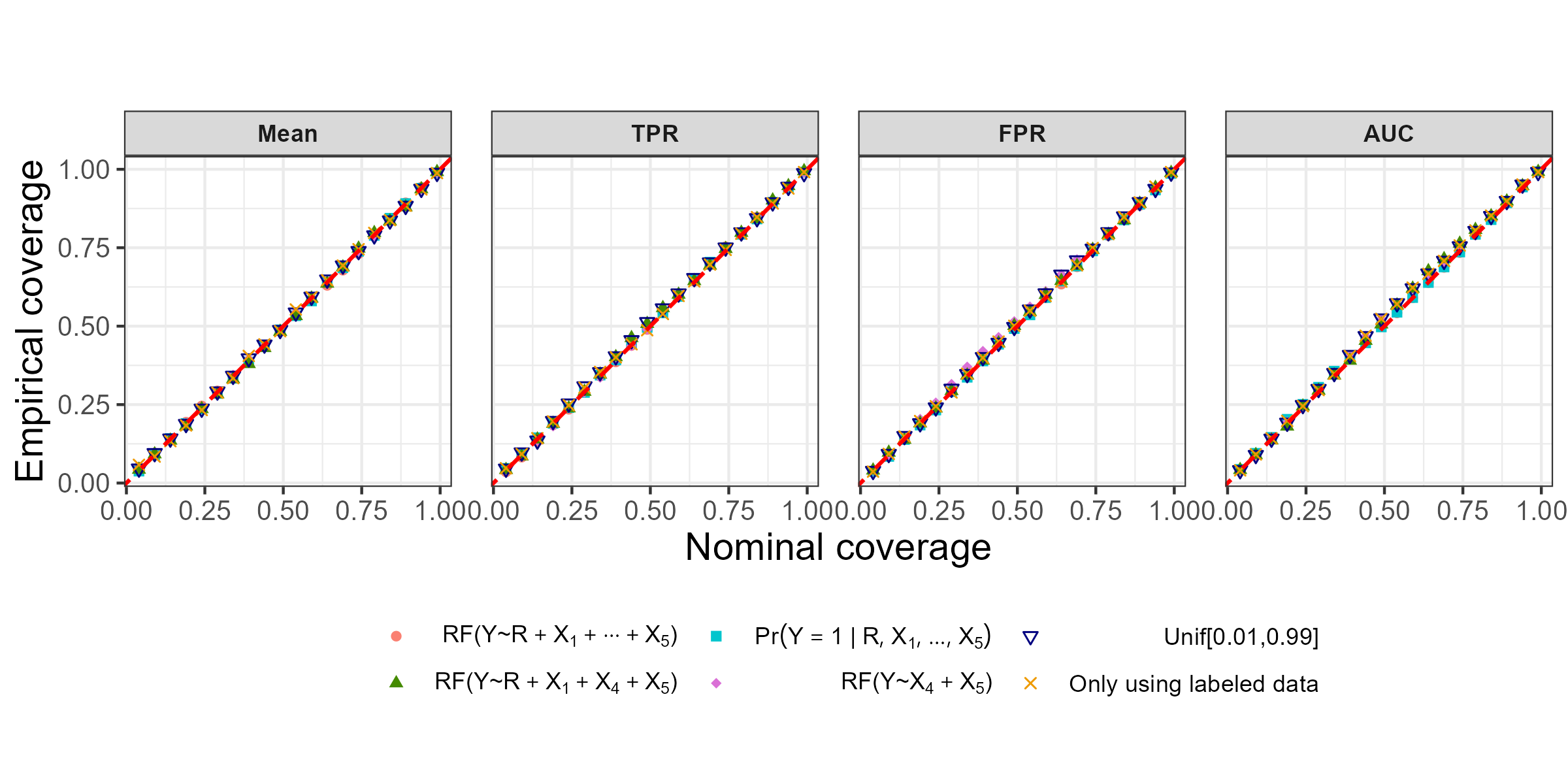}
    \caption{Empirical coverage versus nominal coverage, averaged over 2500 replications, with $n=1000$, $\lambda=0.1$, in the context of mean, TPR, FPR, and AUC estimation. Shapes denote different prediction models $f$ used in PPI. The red dashed line is the diagonal line. The empirical coverages are close to the nominal coverages. }
    \label{fig:coverage}
\end{figure}

Figure \ref{fig:coverage} compares empirical coverage to the corresponding nominal coverage for the five PPI estimators, along with the estimator using only the labeled data, in the context of mean, TPR, FPR, and AUC estimation with $n=1000$ and  $\lambda=0.1$. The empirical coverage is very close to the nominal coverage. We also ran simulations with $\lambda=0.1$ and $n=10000$ over $10000$ replications, and the coverage results again align well with the nominal coverage (see Figure \ref{fig:big_cov}).

\subsection{Simulations under covariate distribution shift}\label{sec:sim_covshift}
We now numerically investigate the proposed PPI estimators for TPR, FPR, and AUC of $R$, as a biomarker or a predictor of $Y$, under the setting of Section \ref{sec:covdistall}, where we consider a target of the full data distribution under covariate shift. We also present the results for estimation of $\bE[Y]$ for comparison.

\subsubsection{Data generation}
In this section, both $\textbf{X}:=\rbr{R,X_1,X_2,X_3,X_4,X_5}^\top\in \bR^6$ and  $Y \in \cbr{0,1}$ are generated exactly as in Section \ref{sec:datagen}.

We set  $\pi(\mathbf X)= 0.2 + 0.6\,\frac{\exp\!\big(R - 0.9X_{1} + 0.7X_{2}X_{3} - 0.5\big)}{1 + \exp\!\big(R - 0.9X_{1} + 0.7X_{2}X_{3} - 0.5\big)}$, where $\pi(\textbf{X}) = \Pr(C=1 \mid \textbf{X})$; 
thus, we generate the indicator $C\in \{0,1\}$ following $C\mid \textbf{X}\overset{\text{ind}}{\sim} \text{Bernoulli}(\pi(\textbf{X}))$. In this section, we assume that $\pi(\cdot)$ is known. However, $\pi(\cdot)$ can also be estimated, and provided that the estimator satisfies Assumption \ref{assmp:estpi}, results in Section \ref{sec:estpi} can be applied.

We investigate the same five candidate prediction models for $f$ as in Section \ref{sec:datagen}. We set the sample size $n+N$ for each simulated dataset to 50,000, and perform 2,500 replicates of the simulation.

\subsubsection{Simulation results}\label{sec:simres_covshift}

Figure \ref{fig:ratio_if_covshift} compares the estimated standard errors of our PPI estimators to that  of the estimator $\labmdest{\theta}$ defined in Section~\ref{sec:covdistall}, for the following four targets: the mean of $Y$, and the TPR, FPR, and AUC of $R$, as a biomarker or a predictor of $Y$. Both TPR and FPR are evaluated at a threshold value of $\alpha=0.6$.

\begin{figure}[htbp]
    \centering
    \includegraphics[width=\linewidth]{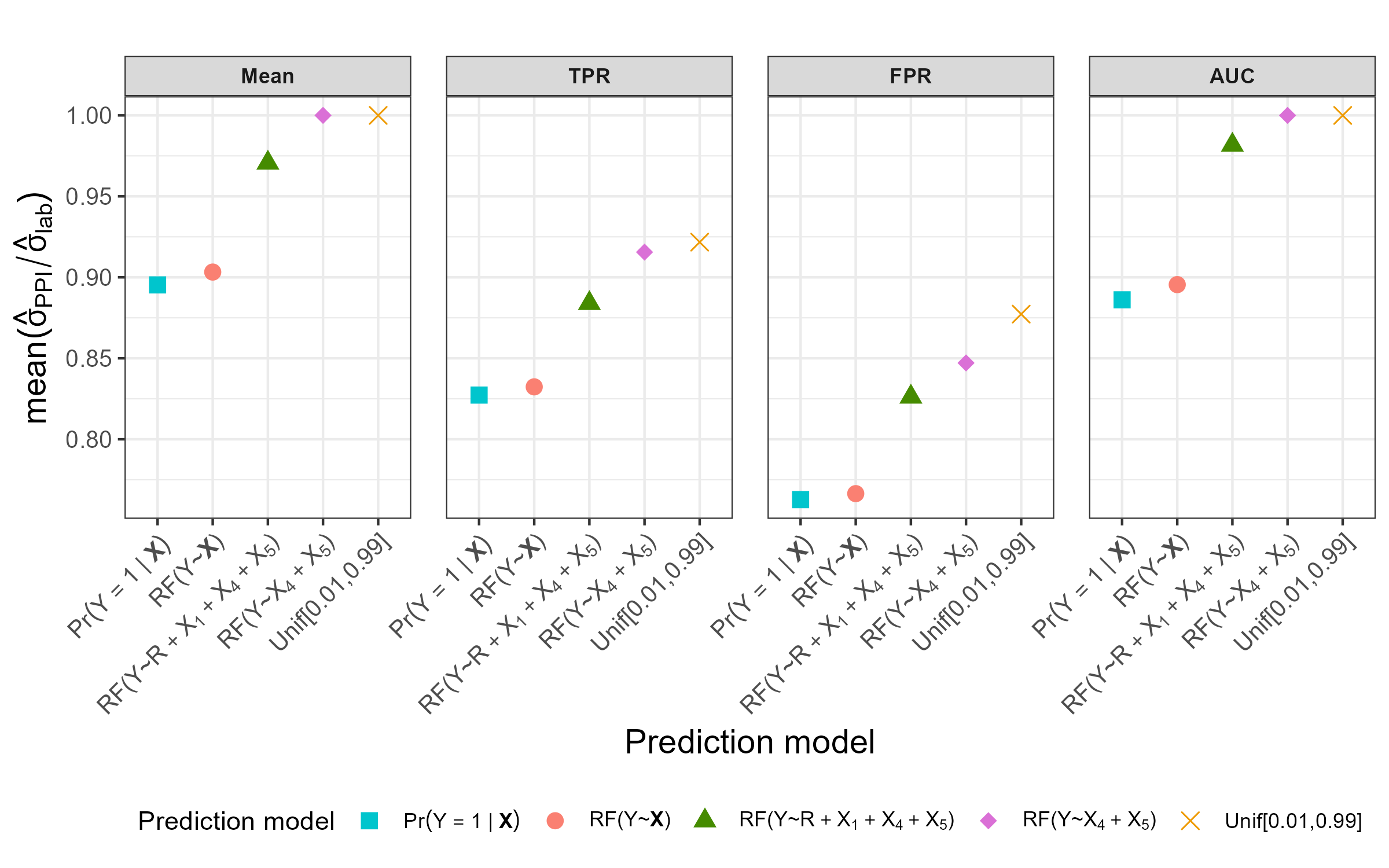}
    \caption{Ratio of estimated standard error for the PPI estimator to that of the estimator $\labmdest{\theta}$ defined in Section~\ref{sec:covdistall}, averaged over 2500 replications, with $n+N=50,000$, for mean, TPR, FPR, and AUC estimation, in the settings of Section \ref{sec:covdistall}.  Shapes denote different prediction models $f$ used in PPI. The standard errors of the PPI estimators are no greater than those of  $\labmdest{\theta}$. The ideal model, $\Pr(Y=1\mid \textbf{X})$ (squares), provides the most improvement for every estimand. The noisy prediction models, $\text{RF}(Y\sim X_4+X_5)$ (diamonds) and $\text{Unif}[0.01,0.99]$ (crossed lines), have no improvement for mean or AUC estimation, but do result in a variance decrease in TPR and FPR estimation.} 
    \label{fig:ratio_if_covshift}
\end{figure}

As expected, the PPI estimators that make use of the ideal model, $\Pr(Y=1\mid \textbf{X})$, exhibit the most improvement relative to $\labmdest{\theta}$, and the ones that use the model RF($Y\sim \textbf{X}$) perform almost as well. The PPI estimators that use the model RF($Y\sim R+X_1+X_4+X_5$) do not perform as well, since this model omits two of the relevant predictors. Similarly, the PPI estimators with the noisy prediction models, RF($Y\sim X_4+X_5$) and Unif[0.01, 0.99], exhibit no improvement over $\labmdest{\theta}$ for mean and AUC estimation, and improve on $\labmdest{\theta}$  for TPR and FPR estimation. This was discussed in Appendix \ref{sec:reducintprfpr} in the context of Figure \ref{fig:ratio_if}, under the assumption that $C\indep Y\mid X$; an analogous argument applies here.

\begin{figure}[htbp]
    \centering
    \includegraphics[width=\linewidth]{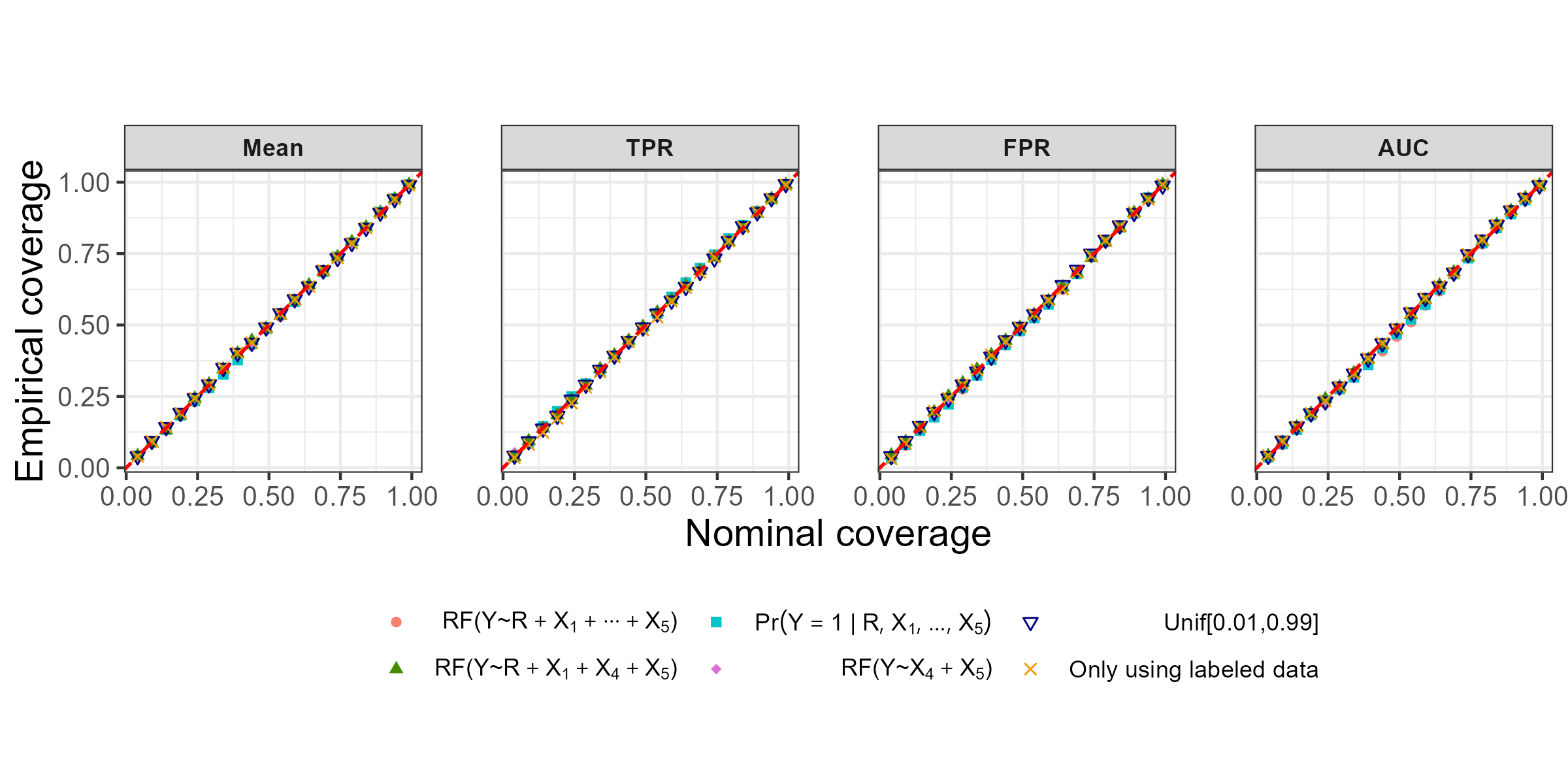}
    \caption{Empirical coverage versus nominal coverage, averaged over 2500 replications, with $n+N=50,000$, for mean, TPR, FPR, and AUC estimation, in the setting of Section \ref{sec:covdistall}. Shapes denote different prediction models $f$ used in PPI. The red dashed line is the diagonal line. The empirical coverages are close to the nominal coverage. }
    \label{fig:cov_covshift_mc}
\end{figure}
Figure \ref{fig:cov_covshift_mc}  displays the empirical coverage of the PPI estimators for mean, TPR, FPR, and AUC estimation. In each case, the empirical coverage is close to the nominal coverage. 

We present additional simulation results in Section~\ref{sec:xtrasims}. 
In Appendix \ref{sec:simforrem}, we provide numerical results showing that PPI can be used in the setting where $f(X) = R$ (as discussed in Section \ref{rem:fequalsR}). In Appendix \ref{sec:cordgp}, we run simulations under a different data-generating distribution. In Appendix \ref{sec:big_sim}, we show the coverage results at a much larger sample size. We also compute the bias of the PPI estimators in the settings of Sections \ref{sec:sim_nocovshift} and \ref{sec:sim_covshift} in Figures \ref{fig:bias_nocovshift} and \ref{fig:bias_covshift}, respectively, in Appendix \ref{sec:biasres}. In all cases, our procedure performs as expected from the theory. 

%% file: sec7realdata.tex
\section{Analyzing wine quality using PPI}\label{sec:realdata}
The two wine quality datasets from the UCI machine learning repository \citep{cortez2009modeling} relate to red and white variants of the Portuguese ``Vinho Verde" wine. We combine the two datasets to obtain a total of 1599 red wines  and 4898 white wines, along with 12 physicochemical features. More details about the data can be found in \cite{cortez2009modeling}. Our goal is to investigate the TPR, FPR, and AUC of  wine density (a continuous variable ranging from 0.987 to 1.039 g/cm$^3$) as a biomarker of wine type (red or white, coded as 1 or 0, respectively), and to determine whether the variance of these estimates can be reduced using PPI. 

We randomly sample 2000 observations from the combined dataset as an external model-building dataset to fit the prediction model $f(\cdot)$, which we obtain using logistic regression with all features. We leave all the remaining observations as our ``evaluation dataset''.  The AUC of  $f(\cdot)$ on the evaluation dataset is 0.996.
All further computations are performed on the evaluation dataset.

To determine which observations in the evaluation dataset are ``labeled", we use the ``quality" feature: if quality $\leq 6$ then the probability of labeling is 0.2, and otherwise it is 0.3.  As our interest lies in the TPR, FPR, and AUC of the full data distribution, we will make use of the PPI estimator from Section~\ref{sec:covdistall}.

Table \ref{tb:realdatares} displays the results of estimating TPR and FPR of density as a biomarker of wine type (both calculated with a threshold of $0.998$ g/cm$^3$), as well as AUC, for (i) the estimator that does not use the predictions in the unlabeled data, (ii) the PPI estimator, and (iii) the ideal estimator that we would be able to compute if we had labels for the unlabeled data (as opposed to just for the labeled data). Of course, (iii) is not possible in practice, but it enables us to see how much efficiency is lost by PPI relative to the ideal setting where all data were labeled. In this case, we see that the estimated standard error  of the PPI estimator (ii) is close to the estimated standard error of the  (unattainable) ideal estimator (iii), and is approximately half of the estimated standard error of the labeled-data estimator (i).
\begin{table}[ht] 
\centering
\begin{tabular}{l | ccc | ccc}
\hline
 & \multicolumn{3}{c|}{Point Estimates} & \multicolumn{3}{c}{Standard Error Estimates} \\
\cline{2-7}
 & (i) & (ii) & (iii) & (i) & (ii) & (iii) \\
\hline
TPR & 0.2141 & 0.2035 &  0.2018 & 0.02595 & 0.01244 &  0.01210\\
FPR & 0.0969 & 0.1096 &  0.1107 & 0.01110 & 0.00541 &  0.00538\\
AUC & 0.7903 & 0.7724 &  0.7750 & 0.01529 & 0.00817 & 0.00689 \\
\hline
\end{tabular}
\caption{Estimates of TPR, FPR, and AUC, along with estimated standard errors, for (i) the estimator that does not use the predictions in the unlabeled data, (ii) the PPI estimator, and (iii) the ideal estimator that we would be able to compute if we had labels for the unlabeled data.}\label{tb:realdatares}
\end{table}

%% file: sec8discussion.tex
\section{Discussion}\label{sec:discussion}  
In this paper, we proposed a generalized framework to construct and analyze PPI estimators derived from any  regular asymptotically linear estimator, a class that extends well beyond the relatively limited setting of M-estimation previously considered in the PPI literature \citep{angelopoulos2023prediction,angelopoulos2023ppi++}. 
We further generalized this framework to accommodate three  distinct scenarios for covariate distribution shift. Our framework yields PPI estimators that are guaranteed to have  asymptotic variance no greater than that of an estimator that ignores the unlabeled data.

We have also clarified the motivation for the form of the rectifier used in PPI estimators, and made explicit connections to the AIPW and missing data literature \citep{robins1994estimation,robins1995semiparametric,tsiatis2006semiparametric}. While the PPI estimators will in general not achieve the semi-parametric efficiency bound, there is a trade-off between optimal variance and simplicity: PPI does not require estimating the conditional expectation of the (efficient) influence function \citep{robins1994estimation,tsiatis2006semiparametric}, nor does it require either Donsker class conditions or cross-fitting for valid inference \citep{xu2025unified,testa2025semiparametric}. 


There remain several directions for future work. While the  PPI estimator will in general not achieve the semiparametric efficiency bound, the difference in finite-sample performance between the PPI estimator and estimators that can achieve this bound remains to be  investigated. 
We suspect that in finite samples, PPI may perform at least as well as the efficient estimators: for instance, when the sample size of the labeled data is small, or when not all of the features ($X$) in the labeled data are available at the individual level (e.g.,  due to privacy concerns). 
Second, our approach for accommodating covariate distribution shift required a sufficiently fast rate of estimation of the missing-data probability $\pi$. Alternate strategies,  such as entropy balancing \citep{chen2023entropy}, may avoid this requirement.

%% file: secAppendix.tex
\renewcommand{\thefigure}{S.\arabic{figure}}
\setcounter{figure}{0}
\newpage
\section*{Appendix}
\section{Proofs of main results}
\subsection{Proof of Theorem \ref{thm:recest}}
\label{sec:recestproof}
\begin{proof} 
Consider a rectified estimator with weight $\omega$ of the form
\begin{align*}
    \recomega{\theta}{\delta}=\labest{\theta}+\omega\rbr{\hat{\delta}_{n+N}-\labest{\delta}}.
\end{align*}
    From the asymptotic linearity of $\hat{\theta}_n$, $\hat{\delta}_n$, and $\hat{\delta}_{n+N}$, we have
    \begin{align*}
        \labest{\theta}-\target{\theta}&=\frac{1}{n}\sum_{i=1}^n\IFtheta(X_i,Y_i)+o_p\rbr{\frac{1}{\sqrt{n}}},\\
    \labest{\delta}-\target{\delta}&=\frac{1}{n}\sum_{i=1}^n\psi(X_i)+o_p\rbr{\frac{1}{\sqrt{n}}}, \text{ and }\\
    \hat{\delta}_{n+N}-\target{\delta}&=\frac{1}{n+N}\sum_{i=1}^{n+N}\psi(X_i)+o_p\rbr{\frac{1}{\sqrt{n+N}}}.
\end{align*} 
Thus, 
\begin{align*}
    \recomega{\theta}{\delta}-\target{\theta}&=\labest{\theta}+\omega\rbr{\hat{\delta}_{n+N}-\labest{\delta}}-\target{\theta}\\
    &=\frac{1}{n}\sum_{i=1}^n\IFtheta(X_i,Y_i)+\omega\sbr{\frac{1}{n+N}\rbr{\sum_{i=1}^n\IFdelta(X_i)+\sum_{i=n+1}^{n+N}\IFdelta(X_i)}-\frac{1}{n}\sum_{i=1}^n\IFdelta(X_i)}+o_p\rbr{\frac{1}{\sqrt{n}}}\\
    &=\frac{1}{n}\sum_{i=1}^n\IFtheta(X_i,Y_i)-\omega\frac{1}{1+\lambda}\frac{1}{n}\sum_{i=1}^n\IFdelta(X_i)+o_p(\frac{1}{\sqrt{n}})+\omega\frac{1}{1+\lambda}\frac{1}{N}\sum_{i=n+1}^{n+N}\IFdelta(X_i)+o_p\rbr{\frac{1}{\sqrt{N}}}\\
    &=\frac{1}{n}\sum_{i=1}^n\rbr{\IFtheta(X_i,Y_i)-\omega\frac{1}{1+\lambda}\IFdelta(X_i)}+o_p(\frac{1}{\sqrt{n}})+\frac{1}{N}\sum_{i=n+1}^{n+N}\omega\frac{1}{1+\lambda}\IFdelta(X_i)+o_p\rbr{\frac{1}{\sqrt{N}}}.
\end{align*}

By the central limit theorem, and the facts that $\bE[\IFtheta(X,Y)]=0$ and $\bE[\IFdelta(X)]=0$, we have that
\begin{align*}
    \sqrt{n}\rbr{\frac{1}{n}\sum_{i=1}^n\rbr{\IFtheta(X_i,Y_i)-\omega\frac{1}{1+\lambda}\IFdelta(X_i)}}&\xrightarrow{d}N(0,\sigma^2_1),\\
    \sqrt{N}\rbr{\frac{1}{N}\sum_{i=n+1}^{n+N}\omega\frac{1}{1+\lambda}\IFdelta(X_i)}&\xrightarrow{d}N(0,\sigma^2_2),
\end{align*}
where \begin{align*}
    &\sigma^2_1:=\Var\rbr{\IFtheta(X,Y)-\omega\frac{1}{1+\lambda}\IFdelta(X)}=\Var(\IFtheta(X,Y))-\frac{2\omega}{1+\lambda}\Cov(\IFtheta(X,Y),\IFdelta(X))+\frac{\omega^2}{(1+\lambda)^2}\Var(\IFdelta(X)),\\
    &\text{and }\sigma^2_2:=\Var\rbr{\omega\frac{1}{1+\lambda}\IFdelta(X)}=\frac{\omega^2}{(1+\lambda)^2}\Var(\IFdelta(X)).
\end{align*}
Therefore, by the independence of $\Lab_n$ and $\Unlab_N$, and recalling that $\frac{n}{N}\rightarrow\lambda$,
\begin{align*}
    \sqrt{n}\rbr{\recomega{\theta}{\delta}-\target{\theta}}&=\sqrt{n}\frac{1}{n}\sum_{i=1}^n\rbr{\IFtheta(X_i,Y_i)-\omega\frac{1}{1+\lambda}\IFdelta(X_i)}+\frac{\sqrt{n}}{\sqrt{N}}\sqrt{N}\frac{1}{N}\sum_{i=n+1}^{n+N}\omega\frac{1}{1+\lambda}\IFdelta(X_i)+o_p(1)\\
    &\xrightarrow{d}N\rbr{0,\Var(\IFtheta(X,Y))-\frac{2\omega}{1+\lambda}\Cov(\IFtheta(X,Y),\IFdelta(X))+\frac{\omega^2}{1+\lambda}\Var(\IFdelta(X))}.
\end{align*}

Now, recall the assumption that $\hat{\omega}\xrightarrow{p}\omega$, and the definition of $\recest{\theta}{\delta}$ in \eqref{eq:recest}. Then,
\begin{align*}
    \recest{\theta}{\delta}-\target{\theta}&=\labest{\theta}+\omega\rbr{\hat{\delta}_{n+N}-\labest{\delta}}-\target{\theta} + \rbr{\hat{\omega}-\omega}\rbr{\hat{\delta}_{n+N}-\labest{\delta}}\\
    &=\labest{\theta}+\omega\rbr{\hat{\delta}_{n+N}-\labest{\delta}}-\target{\theta} + \littleOp\rbr{\frac{1}{\sqrt{n}}}\\
    &=\recomega{\theta}{\delta}-\target{\theta}+\littleOp\rbr{\frac{1}{\sqrt{n}}},
\end{align*}
so it follows that 
\[
\sqrt{n}\rbr{\theta_{\hat{\omega}}^{\hat{\delta}}-\target{\theta}}\xrightarrow{d}N\rbr{0,\Var(\IFtheta(X,Y))-\frac{2\omega}{1+\lambda}\Cov(\IFtheta(X,Y),\IFdelta(X))+\frac{\omega^2}{1+\lambda}\Var(\IFdelta(X))}.
\]
The result in Theorem \ref{thm:recest} part (i) follows.

To prove (ii), we note that since the asymptotic variance $\rbr{\sigma_{\omega}^{\IFdelta}}^2$ in \eqref{eq:recest_var} is a quadratic function of $\omega$, it follows that $$\opt{\omega}:= \arg\min_{\omega}  \left\{ \bE[\IFtheta(X,Y)^2]-\frac{2\omega}{1+\lambda}\bE[\IFtheta(X,Y)\IFdelta(X)]+\frac{\omega^2}{1+\lambda}\bE[\IFdelta(X)^2] \right\}=\frac{\bE[\IFtheta(X,Y)\IFdelta(X)]}{\bE[\IFdelta(X)^2]}.$$
Plugging this into the expression for $\rbr{\sigma_{\omega}^{\IFdelta}}^2$ in \eqref{eq:recest_var} yields
$$  \rbr{\sigma_{\opt{\omega}}^{\IFdelta}}^2=\bE[\IFtheta(X,Y)^2]-\frac{\rbr{\bE[\IFtheta(X,Y)\IFdelta(X)]}^2}{(1+\lambda)\bE[\IFdelta(X)^2]},$$ 
which is  no greater than the asymptotic variance of $\hat{\theta}_n$, $\bE[\IFtheta(X,Y)^2]$.

Next, we turn to (iii). Note that minimizing $\sigma_{\opt{\omega}}^{\IFdelta}$
with respect to $\IFdelta$ is equivalent to maximizing $$\frac{\rbr{\bE[\IFtheta(X,Y)\IFdelta(X)]}^2}{\bE[\IFdelta(X)^2]}=\frac{\Cov(\IFtheta(X,Y),\IFdelta(X))^2}{\Var(\IFdelta(X))}$$ with respect to $\IFdelta$.

We define $\epsilon:= \IFtheta(X,Y)-\bE[\IFtheta(X,Y)\mid X]$. Then,
\begin{align*}
    \frac{\Cov(\IFtheta(X,Y),\IFdelta(X))^2}{\Var(\IFdelta(X))}&=\frac{\Cov(\bE[\IFtheta(X,Y)\mid X]+\epsilon,\IFdelta(X))^2}{\Var(\IFdelta(X))}\\
    &=\frac{\Cov(\bE[\IFtheta(X,Y)\mid X],\IFdelta(X))^2}{\Var(\IFdelta(X))}\\
    &\leq \Var(\bE[\IFtheta(X,Y)\mid X]),
\end{align*} where the final line follows from Cauchy-Schwarz and the second to last line follows because $\Cov(\epsilon,X)=0$.
The final line is an equality if and only if $\IFdelta(X)$ is linear in $\bE[\IFtheta(X,Y)\mid X]$.  Thus, $\rbr{\sigma_{\opt{\omega}}^{\IFdelta}}^2$ is minimized when 
 $\opt{\IFdelta}(X)=\bE[\IFtheta(X,Y)\mid X]$, and 
 $\rbr{\sigma_{\opt{\omega}}^{\opt{\IFdelta}}}^2=\bE[\IFtheta(X,Y)^2]-\frac{1}{1+\lambda}\Var(\bE[\IFtheta(X,Y)\mid X])$.
\end{proof}
\subsection{Proof of Proposition \ref{prop:PPest}}
\begin{proof}
    Under the regularity conditions described in the beginning of Section \ref{sec:rectifier}, $\labfest{\theta}$ and $\allfest{\theta}$ are regular ALEs  with influence function $x\rightarrow \IFtheta(x,f(x))$. Note that replacing the $\rbr{\hat{\delta}_{n+N}-\labest{\delta}}$ in \eqref{eq:recest} by $\rbr{\allfest{\theta}-\labfest{\theta}}$ gives $\ppest{\theta}$ in \eqref{eq:PPI}, and thus the results follow from (i) and (ii) in Theorem \ref{thm:recest}.
\end{proof}
\subsection{Proof of results in Proposition \ref{prop:PPest} with $\unlabfest{\theta}$ in the rectifier term}\label{sec:onlyN}
We here show that the theoretical results in Proposition \ref{prop:PPest} also hold for an alternative PPI estimator defined using $\unlabfest{\theta}$ (instead of $\allfest{\theta}$) in the rectifier term.
\begin{proposition}
    
\label{clm:only-N}
    Suppose that $\frac{n}{N}\rightarrow \lambda$ and that $\hat{\omega}_{\text{alt}}\xrightarrow{p}\omega_{\text{alt}}:=\frac{\bE[\IFtheta(X,Y)\IFtheta(X,f(X))]}{(1+\lambda)\bE[\IFtheta(X,f(X))^2]}$. Define $\ppest{\theta}_{\text{alt}}:=\labest{\theta}+\hat{\omega}_{\text{alt}}\rbr{\unlabfest{\theta}-\labfest{\theta}}.$ Then,
    \begin{equation*}
    \sqrt{n}\rbr{\ppest{\theta}_{\text{alt}}-\target{\theta}}\indist \Norm \rbr{0,\bE[\IFtheta(X,Y)^2]-\frac{\rbr{\bE[\IFtheta(X,Y)\IFtheta(X,f(X))]}^2}{(1+\lambda)\bE[\IFtheta(X,f(X))^2]}}.
    \end{equation*}
\end{proposition}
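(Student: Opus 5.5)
Following the proof of Theorem~\ref{thm:recest}, I would first work with a fixed weight $\omega_{\text{alt}}$ and then pass to $\hat{\omega}_{\text{alt}}$. Under the regularity conditions at the start of Section~\ref{sec:rectifier}, $\unlabfest{\theta}$ is a regular ALE of $\targetf{\theta}$ computed from the $N$ unlabeled points, with influence function $\IFtheta(x,f(x))$. The same holds for $\labfest{\theta}$ computed from the $n$ labeled points. Hence
\begin{align*}
\ppest{\theta}_{\text{alt}}-\target{\theta}
&=\frac{1}{n}\sum_{i=1}^n\Big(\IFtheta(X_i,Y_i)-\omega_{\text{alt}}\IFtheta(X_i,f(X_i))\Big)
+\omega_{\text{alt}}\frac{1}{N}\sum_{i=n+1}^{n+N}\IFtheta(X_i,f(X_i))\\
&\quad+o_p(n^{-1/2})+o_p(N^{-1/2})+(\hat{\omega}_{\text{alt}}-\omega_{\text{alt}})\rbr{\unlabfest{\theta}-\labfest{\theta}}.
\end{align*}
The term $\targetf{\theta}$ cancels between the two $f$-based estimators. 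Since $n/N\to\lambda$, the $o_p(N^{-1/2})$ term is $o_p(n^{-1/2})$.

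For the last term, $\unlabfest{\theta}-\labfest{\theta}=O_p(n^{-1/2})$ because both estimators are root-$n$ consistent for the same $\targetf{\theta}$ (using $N\gtrsim n$). Together with $\hat{\omega}_{\text{alt}}-\omega_{\text{alt}}=o_p(1)$, this makes the product $o_p(n^{-1/2})$. So it suffices to treat the fixed-weight version.

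Multiply by $\sqrt n$. The two sums come from the independent samples $\Lab_n$ and $\Unlab_N$, and each is asymptotically normal by the CLT, using $\bE[\IFtheta(X,f(X))]=0$. Writing $\sqrt{n}=\sqrt{n/N}\sqrt{N}$ with $n/N\to\lambda$, the limit is normal with variance
\[
\bE[\IFtheta(X,Y)^2]-2\omega_{\text{alt}}\bE[\IFtheta(X,Y)\IFtheta(X,f(X))]+(1+\lambda)\,\omega_{\text{alt}}^2\,\bE[\IFtheta(X,f(X))^2].
\]
Viewed as a function of $\omega_{\text{alt}}$, this quadratic is minimized at $\bE[\IFtheta(X,Y)\IFtheta(X,f(X))]/\{(1+\lambda)\bE[\IFtheta(X,f(X))^2]\}$, which is exactly the stated $\omega_{\text{alt}}$. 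Substituting this value gives $\bE[\IFtheta^2]-(\bE[\IFtheta(X,Y)\IFtheta(X,f(X))])^2/\{(1+\lambda)\bE[\IFtheta(X,f(X))^2]\}$, which matches Proposition~\ref{prop:PPest}.

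The only step needing care is the independence and rate bookkeeping. Unlike in Theorem~\ref{thm:recest}, the labeled sample does not appear inside the "full" estimator, so the cross term is $-2\omega_{\text{alt}}$ rather than $-2\omega/(1+\lambda)$. Similarly, the $\omega_{\text{alt}}^2$ coefficient collects $1$ from the labeled sample and $\lambda$ from the unlabeled one. Everything else is routine.
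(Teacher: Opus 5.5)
Your proposal is correct and follows essentially the same route as the paper: you linearize the three ALEs, combine the independent labeled and unlabeled sums via the CLT with $n/N\to\lambda$ to get the variance $\bE[\IFtheta(X,Y)^2]-2\omega\bE[\IFtheta(X,Y)\IFtheta(X,f(X))]+(1+\lambda)\omega^2\bE[\IFtheta(X,f(X))^2]$, identify $\omega_{\text{alt}}$ as its minimizer, and absorb $(\hat\omega_{\text{alt}}-\omega_{\text{alt}})$ times the rectifier into an $o_p(n^{-1/2})$ term. The differences are cosmetic. The paper handles the labeled sample through a bivariate CLT plus the delta method for $h_\omega(a,b)=a-\omega b$, whereas you expand directly, and you state explicitly the $O_p(n^{-1/2})$ bound on the rectifier that the paper uses only implicitly.
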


\begin{proof}
    From the asymptotic linearity of $\labest{\theta}$, $\labfest{\theta}$, and $\unlabfest{\theta}$, we have
    \begin{align*}
        \labest{\theta}-\target{\theta}&=\frac{1}{n}\sum_{i=1}^n\IFtheta(X_i,Y_i)+o_p\rbr{\frac{1}{\sqrt{n}}},\\
    \labfest{\theta}-\target{\theta}^f&=\frac{1}{n}\sum_{i=1}^n\IFtheta(X_i,f(X_i))+o_p\rbr{\frac{1}{\sqrt{n}}}, \text{ and }\\
    \unlabfest{\theta}-\target{\theta}^f&=\frac{1}{N}\sum_{i=1}^N\IFtheta(X_i,f(X_i))+o_p\rbr{\frac{1}{\sqrt{N}}}.
\end{align*} 
By the central limit theorem, and the facts that $\bE[\IFtheta(X,Y)]=0$, and $\bE[\IFtheta(X,f(X))]=0$, we have that
\begin{align*}
\sqrt{n}\left(\begin{pmatrix}
\labest{\theta}\\
\labfest{\theta}
\end{pmatrix}
-
\begin{pmatrix}
\target{\theta}\\
\target{\theta}^f
\end{pmatrix}\right)&\xrightarrow{d}N\left(\textbf{0},
\begin{pmatrix}
\bE[\IFtheta(X,Y)^2]  &\bE[\IFtheta(X,Y)\IFtheta(X,f(X))]\\
\bE[\IFtheta(X,Y)\IFtheta(X,f(X))] &\bE[\IFtheta(X,f(X))^2]
\end{pmatrix}\right),\\
\sqrt{N}\rbr{\unlabfest{\theta}-\target{\theta}^f}&\xrightarrow{d}N\rbr{0,\bE[\IFtheta(X,f(X))^2]}.
\end{align*}
Writing $h_\omega(a,b)=a-\omega b$ and $\nabla h_\omega=(1,-\omega)^\top$, by the delta method,
\[
\sqrt{n}\rbr{\labest{\theta}-\omega\labfest{\theta}-\target{\theta}+\omega\target{\theta}^f}\indist N(0,\bE[\IFtheta(X,Y)^2]-2\omega\bE[\IFtheta(X,Y)\IFtheta(X,f(X))]+\omega^2\bE[\IFtheta(X,f(X))^2]).
\]
Then, by the independence of $\Lab_n$ and $\Unlab_N$, and recalling that $\frac{n}{N}\rightarrow\lambda$, we have 
\begin{align*}
    &\sqrt{n}\rbr{\labest{\theta}+\omega(\unlabfest{\theta}-\labfest{\theta})-\target{\theta}}=\sqrt{n}\rbr{\labest{\theta}-\omega\labfest{\theta}-\target{\theta}+\omega\target{\theta}^f}+\frac{\sqrt{n}}{\sqrt{N}}\sqrt{N}\rbr{\omega\unlabfest{\theta}-\omega\target{\theta}^f}\\
    &\xrightarrow{d}N\rbr{0,\bE[\IFtheta(X,Y)^2]-2\omega\bE[\IFtheta(X,Y)\IFtheta(X,f(X))]+(1+\lambda)\omega^2\bE[\IFtheta(X,f(X))^2]}.
\end{align*}
Next, note that 
\begin{align*}
    \omega_{\text{alt}}&:= \frac{\bE[\IFtheta(X,Y)\IFtheta(X,f(X))]}{(1+\lambda)\bE[\IFtheta(X,f(X))^2]}\\
    &=\arg\min_{\omega}  \left\{ \bE[\IFtheta(X,Y)^2]-2\omega\bE[\IFtheta(X,Y)\IFtheta(X,f(X))]+(1+\lambda)\omega^2\bE[\IFtheta(X,f(X))^2]\right\}.
\end{align*}

Now, recall the assumption that $\hat{\omega}_{\text{alt}}\inprob \omega_{\text{alt}}$ and the definition of $\ppest{\theta}_{\text{alt}}$ in Claim \ref{clm:only-N}. Then,
\begin{align*}
    \ppest{\theta}_{\text{alt}}-\target{\theta}&=\labest{\theta}+\omega_{\text{alt}}\rbr{\unlabfest{\theta}-\labfest{\theta}}-\target{\theta} + \rbr{\hat{\omega}_{\text{alt}}-\omega_{\text{alt}}}\rbr{\unlabfest{\theta}-\labfest{\theta}}\\
    &=\labest{\theta}+\omega_{\text{alt}}\rbr{\unlabfest{\theta}-\labfest{\theta}}-\target{\theta}+\littleOp\left(\frac{1}{\sqrt{n}}\right).
\end{align*}
Plugging in $\omega_{\text{alt}}= \frac{\bE[\IFtheta(X,Y)\IFtheta(X,f(X))]}{(1+\lambda)\bE[\IFtheta(X,f(X))^2]}$, it follows that 
\begin{equation*}
    \sqrt{n}\rbr{\ppest{\theta}_{\text{alt}}-\target{\theta}}\indist \Norm \rbr{0,\bE[\IFtheta(X,Y)^2]-\frac{\rbr{\bE[\IFtheta(X,Y)\IFtheta(X,f(X))]}^2}{(1+\lambda)\bE[\IFtheta(X,f(X))^2]}}.
\end{equation*}
\end{proof}

\subsection{Proof of Proposition \ref{prop:aipw-f} }\label{sec:proofprop2.3}
To begin, we note that $C\indep Y, X$ and $\bE[\IFtheta(X,Y)]=0$ and $\bE[\IFtheta(X,f(X))]=0$. Therefore, we have $\bE\sbr{C\IFtheta(X,Y)}=0$, and $\bE\sbr{C\IFtheta(X,f(X))}=0$.

Next, recalling \eqref{eq:aipw}, we have that
\begin{align*}
    \mdppest{\theta}-\target{\theta}&=\frac{1}{n+N}\sum_{i=1}^{n+N}\rbr{\frac{C_i}{\hat{\pi}}\IFtheta(X_i,Y_i)+\hat{\omega}_{\text{MD}} \left(1-\frac{C_i}{\hat{\pi}}\right)\IFtheta(X_i,f(X_i))}\\
    &=\frac{1}{n+N}\sum_{i=1}^{n+N}\rbr{\frac{C_i}{\frac{\lambda}{1+\lambda}}\IFtheta(X_i,Y_i)+\omega\left(1-\frac{C_i}{\frac{\lambda}{1+\lambda}}\right)\IFtheta(X_i,f(X_i))}+o_p\rbr{\frac{1}{\sqrt{n+N}}},
\end{align*}
where the second equality follows from noting that $\hat{\pi}\xrightarrow{p}\Pr(C=1)=\frac{\lambda}{1+\lambda}$.

Therefore,
\[
\sqrt{n+N}\rbr{\mdppest{\theta}-\target{\theta}}\xrightarrow{d}N\rbr{0,\sigma^2_{\text{MD}}},
\]where $\sigma^2_{\text{MD}}=\frac{1+\lambda}{\lambda}\bE\sbr{\IFtheta(X,Y)^2}-\frac{2\omega}{\lambda}\bE\sbr{\IFtheta(X,Y)\IFtheta(X,f(X))}+\frac{\omega^2}{\lambda}\bE\sbr{\IFtheta(X,f(X))^2}$.

Rescaling,
\[
\sqrt{n}\rbr{\mdppest{\theta}-\target{\theta}}\xrightarrow{d}N\rbr{0,\bE\sbr{\IFtheta(X,Y)^2}-\frac{2\omega}{1+\lambda}\bE\sbr{\IFtheta(X,Y)\IFtheta(X,f(X))}+\frac{\omega^2}{1+\lambda}\bE\sbr{\IFtheta(X,f(X))^2}}
\] matches the asymptotic results for $\ppest{\theta}$ \eqref{eq:PPI}.

\subsection{Proofs of Remarks \ref{rem:full_fdm}, \ref{rem:unlab_fdm}, and \ref{rem:lab_fdm} }\label{app:fdm-ale}

\subsubsection{Proof of Remark \ref{rem:full_fdm}}\label{sec:appremfull}
\begin{proof}
Recall that $(X_i,Y_i)\iidsim\jointcov_{X,Y}$, 
$\target{\theta}:=\Phi(\target{F})$,
 where $\target{F}(x,y)=\jointcov_{X,Y}(X\leq x,Y\leq y)$. \\
Define $F_\labfull(x,y):=\frac{\sum_{i=1}^{n+N}\frac{C_i}{\pi(X_i)}I(X_i\leq x,Y_i\leq y)}{\sum_{i=1}^{n+N}\frac{C_i}{\pi(X_i)}}$. Then,
\begin{align*}
    F_\labfull(x,y)-\target{F}(x,y)&=\frac{\sum_{i=1}^{n+N}\frac{C_i}{\pi(X_i)}\rbr{I(X_i\leq x, Y_i \leq y)-\target{F}(x,y)}}{\sum_{i=1}^{n+N}\frac{C_i}{\pi(X_i)}}\\
    &=\frac{\frac{1}{n+N}\sum_{i=1}^{n+N}\frac{C_i}{\pi(X_i)}(I(X_i\leq x, Y_i \leq y)-\target{F}(x,y))}{\frac{1}{n+N}\sum_{i=1}^{n+N}\frac{C_i}{\pi(X_i)}}\\
    &=\frac{1}{n+N}\sum_{i=1}^{n+N}\frac{C_i}{\pi(X_i)}(I(X_i\leq x, Y_i \leq y)-\target{F}(x,y))\\
    &\qquad+\rbr{\frac{1}{\frac{1}{n+N}\sum_{i=1}^{n+N}\frac{C_i}{\pi(X_i)}}-1}\frac{1}{n+N}\sum_{i=1}^{n+N}\frac{C_i}{\pi(X_i)}(I(X_i\leq x, Y_i \leq y)-\target{F}(x,y))\\
    &=\frac{1}{n+N}\sum_{i=1}^{n+N}\frac{C_i}{\pi(X_i)}(I(X_i\leq x, Y_i \leq y)-\target{F}(x,y))+\littleOp\rbr{\frac{1}{\sqrt{n+N}}}.
\end{align*}
Note that given $C\indep Y\mid X$, it follows that $\bE[\frac{C}{\pi(X)}(I(X\leq x, Y \leq y)-\target{F}(x,y))]=0$. 

Suppose $\Phi(\cdot)$ is Hadamard differentiable with respect to $||\cdot||_\infty$ with Gateaux derivative $\dot{\Phi}(\target{F};h)$ that is linear in $h$. By the functional delta method,
\begin{align*}
\Phi\rbr{F_\labfull}-\Phi(\target{F})&=\dot{\Phi}\rbr{\target{F};\frac{1}{n+N}\sum_{i=1}^{n+N}\frac{C_i}{\pi(X_i)}\rbr{I(X_i\leq x, Y_i\leq y)-\target{F}(x,y)}}+\littleOp\rbr{\frac{1}{\sqrt{n+N}}}\\
&=\frac{1}{n+N}\sum_{i=1}^{n+N}\frac{C_i}{\pi(X_i)}\dot{\Phi}(\target{F};I(X_i\leq x, Y_i\leq y)-\target{F}(x,y))+\littleOp\rbr{\frac{1}{\sqrt{n+N}}}\\
&=\frac{1}{n+N}\sum_{i=1}^{n+N}\frac{C_i}{\pi(X_i)}\IFtheta(X_i,Y_i)+\littleOp\rbr{\frac{1}{\sqrt{n+N}}},
\end{align*}
where the last line follows from the definition in \eqref{eq:generalif_fdm}. Therefore, $\fulllabmdest{\theta}:=\Phi\rbr{F_\labfull}$ has influence function $\varphi_\lab(x,y,c)=\frac{c}{\pi(x)}\IFtheta(x,y)$.

Following similar arguments, one can show $\fulllabmdestf{\theta}:=\Phi\rbr{F^{f}_\labfull}$ is a regular ALE with  influence function $\varphi_\lab(x,f(x),c)=\frac{c}{\pi(x)}\IFtheta(x,f(x))$; and $\fullmdestf{\theta}:=\Phi\rbr{F^{f}}$ is a regular ALE  with influence function evaluated $\varphi(x,f(x),c):=\IFtheta(x,f(x))$.
\end{proof}

\subsubsection{Proofs of Remarks \ref{rem:unlab_fdm} and \ref{rem:lab_fdm} }
The proofs for these two remarks follow exactly as shown in Section \ref{sec:appremfull} but with weights adjusted according to the estimators defined in Sections \ref{sec:covdistunlab} and \ref{sec:covdistlab}.
\subsection{Proofs of Theorem \ref{thm:ppiipw}, Theorem \ref{thm:ppiiow}, and Theorem \ref{thm:ppiow}}\label{app:covshift}
Based on Remarks \ref{rem:full_fdm},  \ref{rem:unlab_fdm}, and \ref{rem:lab_fdm}, estimators defined in \eqref{eq:estfull}, \eqref{eq:unlabest}, and \eqref{eq:labest} are regular ALEs. The proofs follow the same logic as in Section  \ref{sec:recestproof}.

\subsection{Proof of Remark \ref{prop:covshiftestpifull}} \label{app:proofrem4.1}
Let $\bQ_{n+N}$ denote the empirical law of $(C_i,X_i,Y_i)_{i=1}^{n+N}$, and let $\jointcov$ denote the true law. Recall that $T:\sQ \times \Pi\rightarrow \bR$ is jointly Hadamard differentiable tangentially to $L_0^2(\jointcov)\times L^2(\jointcov_X)$ at $(\jointcov,\pi)$, so that $\target{\theta}$, $\fulllabmdest{\theta}$, and $\fulllabmdestpi{\theta}$ can be viewed as derived from the mapping $T$. That is,
\begin{align*}
\target{\theta}&=T(\jointcov,\pi),\\
    \fulllabmdest{\theta}&= T(\bQ_{n+N},\pi),\\
    \fulllabmdestpi{\theta} &= T(\bQ_{n+N},\hat{\pi}).
\end{align*}
Furthermore, from Remark \ref{rem:full_fdm} we know $\fulllabmdest{\theta}$ is a regular ALE of $\target{\theta}$ with the influence function evaluated at a single observation $\frac{C_i}{\pi(X_i)}\IFtheta(X_i,Y_i)$. Therefore, we have that
\begin{align*}
    T(\bQ_{n+N},\pi)-T(\jointcov,\pi)&=\fulllabmdest{\theta}-\target{\theta}\\
    &=\frac{1}{n+N}\sum_{i=1}^{n+N}\frac{C_i}{\pi(X_i)}\IFtheta(X_i,Y_i)+o_p(1/\sqrt{n+N}).
\end{align*}


Define $\bQ_{\epsilon_1}=\jointcov+\epsilon_1(\bQ_{n+N}-\jointcov)$ and $\pi_{\epsilon_2}=\pi+\epsilon_2(\hat{\pi}-\pi)$. Then, by joint Hadamard differentiability,
\begin{align*}
    \labmdestpi{\theta} - \target{\theta} &= T(\bQ_{n+N},\hat{\pi})-T(\jointcov,\pi)\\
    &=\frac{d}{d\epsilon_1}T(\bQ_{\epsilon_1},\pi)\bigg|_{\epsilon_1=0}+\frac{d}{d\epsilon_2}T(\jointcov,\pi_{\epsilon_2})\bigg|_{\epsilon_2=0}+o\rbr{||\bQ_{n+N}-\jointcov||_{L_0^2(\jointcov)}+||\hat{\pi}-\pi||_{L^2(\jointcov_X)}}.
\end{align*}
Algebraic manipulations and Assumption \ref{assmp:estpi} then yield that
\begin{align*}
\labmdestpi{\theta} - \target{\theta} &=\frac{1}{n+N}\sum_{i=1}^{n+N}\varphi_{\text{lab}}(C_i,X_i,Y_i)+o_p(1/\sqrt{n+N}),
\end{align*}
where $\varphi_{\text{lab}}(c,x,y)=\frac{c}{\pi(x)}\phi(x,y)-\bE_{\jointcov}\sbr{\frac{C\phi(X,Y)}{\pi(X)^2}\phi_\pi(c,x;X)}$.

Define $\bQ_{n+N}^f$ as the empirical law of $(C_i,X_i,f(X_i))_{i=1}^{n+N}$ and thus $\labmdestpif{\theta}=T(\bQ_{n+N}^f,\hat{\pi})$. A similar argument establishes that under regularity conditions, $\labmdestpif{\theta}$ is a regular ALE with the influence function $\varphi_{\text{lab}}(c,x,f(x))=\frac{c}{\pi(x)}\phi(x,f(x))-\bE_{\jointcov}\sbr{\frac{C\phi(X,f(X))}{\pi(X)^2}\phi_\pi(c,x;X)}$.

\section{Supplementary results}
\subsection{Additional discussion of theoretical results}
\subsubsection{PPI with covariate distribution shift for general regular ALEs}\label{sec:covdistgeneral}

We showed in in Remarks \ref{rem:full_fdm}, \ref{rem:unlab_fdm}, and \ref{rem:lab_fdm} that the components of the PPI estimators in Sections \ref{sec:covdistall}, \ref{sec:covdistunlab}, and \ref{sec:covdistlab} all have influence functions that are some linear transformation of $\IFtheta(x,y)$ in \eqref{eq:generalif_fdm}. 

Here, in the scenario of Section \ref{sec:covdistall}, we show that a result similar to Theorem~\ref{thm:ppiipw} can be obtained regardless of the form of the influence functions of the components of the PPI estimator.   Similar results can be derived for the scenarios in Sections \ref{sec:covdistunlab} and \ref{sec:covdistlab}.

Recall that our PPI estimator in Section \ref{sec:covdistall} is defined as 
\begin{equation}\label{eq:estfullgeneral}
    \fullest{\theta}:= \fulllabmdest{\theta}+\hat{\omega}\rbr{\fullmdestf{\theta}-\fulllabmdestf{\theta}}.
\end{equation}
However, $\fulllabmdest{\theta}$, $\fullmdestf{\theta}$, and $\fulllabmdestf{\theta}$ do not have to follow the definitions in Section  \ref{sec:covdistall}. We instead assume only that $\fulllabmdest{\theta}$, $\fulllabmdestf{\theta}$, and $\fullmdestf{\theta}$ are regular ALEs with influence functions $\varphi_\lab(X_i,Y_i,C_i)$, $\varphi_\lab(X_i,f(X_i),C_i)$, and $\varphi(X_i,f(X_i),C_i)$. Thus, the asymptotic variance of $\fulllabmdest{\theta}$ is
$\Var\rbr{\varphi_\lab(X_i,Y_i,C_i)}$. The next result shows that the asymptotic variance of $\fullest{\theta}$ \eqref{eq:estfullgeneral} will never exceed that.

\begin{proposition}[PPI for targets of the full data distribution for more general regular ALEs]\label{thm:ppiipwgeneral}
Under Assumption \ref{assmp:covdistshift} and the further assumption that  $\hat{\omega}\inprob \omega$, where
    $$\omega:=\frac{\Cov\Big(\varphi_\lab(X_i,Y_i,C_i),\varphi_\lab(X_i,f(X_i),C_i)-\varphi(X_i,f(X_i),C_i)\Big)}{\Var\Big(\varphi(X_i,f(X_i),C_i)-\varphi_\lab(X_i,f(X_i),C_i)\Big)},$$  it follows that
$$\sqrt{n+N}\rbr{\fullest{\theta}-\target{\theta}}\indist N(0,\sigma^2),$$ where $\sigma^2=\Var\left(\varphi_\lab(X_i,Y_i,C_i)\right)-\frac{\Cov\left(\varphi_\lab(X_i,Y_i,C_i),\varphi_\lab(X_i,f(X_i),C_i)-\varphi(X_i,f(X_i),C_i)\right)^2}{\Var\left(\varphi(X_i,f(X_i),C_i)-\varphi_\lab(X_i,f(X_i),C_i)\right)}$.
\end{proposition}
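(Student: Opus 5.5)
Since $\fulllabmdest{\theta}$, $\fulllabmdestf{\theta}$ and $\fullmdestf{\theta}$ are all computed on the same i.i.d.\ sample $(X_i,C_i,C_iY_i)_{i=1}^{n+N}$, I would reduce everything to a single sample average and apply the multivariate CLT; no independence between labeled and unlabeled pieces is needed. Write $A_i:=\varphi_\lab(X_i,Y_i,C_i)$ and $B_i:=\varphi(X_i,f(X_i),C_i)-\varphi_\lab(X_i,f(X_i),C_i)$.

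\textbf{Step 1: an expansion at fixed $\omega$.} By the assumed asymptotic linearity,
\begin{equation*}
\fulllabmdest{\theta}+\omega\rbr{\fullmdestf{\theta}-\fulllabmdestf{\theta}}-\target{\theta}=\frac{1}{n+N}\sum_{i=1}^{n+N}\rbr{A_i+\omega B_i}+\littleOp\rbr{\frac{1}{\sqrt{n+N}}}.
\end{equation*}
This requires the two $f$-based estimators to centre at a \emph{common} limit (the $\theta^f$ analogue, $\Phi(\target{F}^f)$ in the concrete construction), so that the centring constants cancel in the rectifier. I would state this explicitly as part of ``regular ALE'' for the two $f$-estimators; it is what holds in Remark~\ref{rem:full_fdm}, where $C\indep Y\mid X$ ensures both weighted and unweighted CDFs target $\target{F}^f$.

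\textbf{Step 2: CLT and plug-in weight.} The summands $A_i+\omega B_i$ are i.i.d.\ with mean zero, since influence functions are centred, and finite variance. The overlap condition in Assumption~\ref{assmp:covdistshift} keeps the $1/\pi(X)$ weights bounded, so the variance is finite. The CLT then gives
\begin{equation*}
\sqrt{n+N}\rbr{\cdot}\indist N\rbr{0,\Var(A)+2\omega\Cov(A,B)+\omega^2\Var(B)}.
\end{equation*}
Replacing $\omega$ by $\hat\omega$ adds $(\hat\omega-\omega)\rbr{\fullmdestf{\theta}-\fulllabmdestf{\theta}}$. This term is $\littleOp(1)\cdot O_p((n+N)^{-1/2})$, because the rectifier is itself asymptotically linear with mean-zero influence function $B$. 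Slutsky's lemma then transfers the limit, exactly as in the proof of Theorem~\ref{thm:recest}.

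\textbf{Step 3: optimise the quadratic.} The variance is a quadratic in $\omega$, minimised at $\omega=-\Cov(A,B)/\Var(B)$. This equals the stated $\omega$, since $\Cov(A,-B)=\Cov\rbr{\varphi_\lab(X,Y,C),\varphi_\lab(X,f(X),C)-\varphi(X,f(X),C)}$. Substituting gives $\Var(A)-\Cov(A,B)^2/\Var(B)$, which matches $\sigma^2$ because the squared covariance is sign-invariant. This is clearly at most $\Var(A)$.

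\textbf{Main obstacle.} The only real subtlety is Step~1: the centring cancellation, together with the nondegeneracy assumption $\Var(B)>0$ that makes $\omega$ well defined. I would state both explicitly as standing conditions rather than derive them.
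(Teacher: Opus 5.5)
Your proposal is correct and is essentially the paper's argument. The paper gives no separate proof of this proposition beyond saying that such results follow the logic of the proof of Theorem~\ref{thm:recest}: linearize each component through its influence function, apply the CLT to the combined sum (here a single i.i.d.\ sample, as you note), absorb $\hat\omega-\omega$ by Slutsky, and evaluate the quadratic variance at the stated $\omega$.

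The common-centring and $\Var(B)>0$ conditions you flag are indeed implicit in the paper. One small correction: in Remark~\ref{rem:full_fdm}, the common centring of the two $f$-based estimators needs only $\bE[C/\pi(X)\mid X]=1$, since $f(X)$ is a function of $X$. It is $C\indep Y\mid X$ that makes $\fulllabmdest{\theta}$ centre at $\target{\theta}$.
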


\subsubsection{Variance comparison to \cite{kluger2025prediction} under Assumption \ref{assmp:oracle_case}}\label{sec:ptd_comparison}
The variance of the tuned Predict-Then-Debias (TPTD) estimator proposed by \cite{kluger2025prediction} (in their Proposition 2.2) in the setting of Section \ref{sec:covdistall} is 
\begin{equation*}
    \sigma^2_{\text{TPTD}}=\Var\rbr{\frac{C}{\pi(X)}\phi(X,Y)}-\frac{\Cov\rbr{\frac{C}{\pi(X)}\phi(X,Y),\frac{C}{\pi(X)}\phi(X,f(X))}^2}{\Var\rbr{\frac{C}{\pi(X)}\phi(X,f(X))}+\Var\rbr{\frac{1-C}{1-\pi(X)}\IFtheta(X,f(X))}}.
\end{equation*}
Recall that under the (highly unrealistic) Assumption \ref{assmp:oracle_case}, we have that  $\phi(X,f(X))=\bE[\phi(X,Y)\mid X]$. In this setting, 
\begin{equation}\label{eq:ptdvar}
    \sigma^2_{\text{TPTD}}=\Var\rbr{\frac{C}{\pi(X)}\phi(X,Y)}-\frac{\rbr{\bE\sbr{\frac{1}{\pi(X)}\rbr{\bE[\phi(X,Y)\mid X]}^2}}^2}{\bE\sbr{\frac{1}{\pi(X)(1-\pi(X))}\rbr{\bE[\phi(X,Y)\mid X]}^2}}.
\end{equation}
Furthermore, under Assumption \ref{assmp:oracle_case}, the variance of our PPI estimator is (see Equation \eqref{eq:oracle_full}) \[\sigma^2=\Var\left(\frac{C}{\pi(X)}\phi(X,Y)\right)-\bE\sbr{\rbr{\frac{1}{\pi(X)}-1}\bE[\phi(X,Y)\mid X]^2}. \]
To compare $\sigma^2_{\text{TPTD}}$ \eqref{eq:ptdvar} and $\sigma^2$ in \eqref{eq:oracle_full}, we only need to compare $\frac{\rbr{\bE\sbr{\frac{1}{\pi(X)}\rbr{\bE[\phi(X,Y)\mid X]}^2}}^2}{\bE\sbr{\frac{1}{\pi(X)(1-\pi(X))}\rbr{\bE[\phi(X,Y)\mid X]}^2}}$ and $\bE\sbr{\rbr{\frac{1}{\pi(X_i)}-1}\bE[\phi(X_i,Y_i)\mid X_i]^2}.$

Note that\[
\frac{1}{\pi(X)}\rbr{\bE[\phi(X,Y)\mid X]}^2=\sqrt{\frac{1}{\pi(X)(1-\pi(X))}\rbr{\bE[\phi(X,Y)\mid X]}^2}  \sqrt{\frac{1-\pi(X)}{\pi(X)}\rbr{\bE[\phi(X,Y)\mid X]}^2}.
\]
By Cauchy-Schwarz, 
\begin{align*}
    &\rbr{\bE\sbr{\frac{1}{\pi(X)}\rbr{\bE[\phi(X,Y)\mid X]}^2}}^2\\
    &=\rbr{\bE\sbr{\sqrt{\frac{1}{\pi(X)(1-\pi(X))}\rbr{\bE[\phi(X,Y)\mid X]}^2} \sqrt{\frac{1-\pi(X)}{\pi(X)}\rbr{\bE[\phi(X,Y)\mid X]}^2}}}^2 \nonumber\\
    &\leq \bE\sbr{\frac{1}{\pi(X)(1-\pi(X))}\rbr{\bE[\phi(X,Y)\mid X]}^2}\bE\sbr{\frac{1-\pi(X)}{\pi(X)}\rbr{\bE[\phi(X,Y)\mid X]}^2}. 
\end{align*}
Provided $\bE[\phi(X,Y)\mid X]\neq 0$,  equality requires that the following holds for some constant $b$:
\begin{align*}
    \sqrt{\frac{1}{\pi(X)(1-\pi(X))}\rbr{\bE[\phi(X,Y)\mid X]}^2}&=b\sqrt{\frac{1-\pi(X)}{\pi(X)}\rbr{\bE[\phi(X,Y)\mid X]}^2} \\
    \Leftrightarrow \frac{1}{\pi(X)(1-\pi(X))} &=b^2\frac{1-\pi(X)}{\pi(X)}\\
    \Leftrightarrow 1-\pi(X)&=\sqrt{\frac{1}{b^2}}.
\end{align*}
This condition requires $\pi(X)$ to be a constant, which violates the setup in Section \ref{sec:covdistall}. Therefore, when $\pi(X)$ is not a constant, we have the strict inequality,
\begin{align}\label{eq:ineqtptd}
    &\rbr{\bE\sbr{\frac{1}{\pi(X)}\rbr{\bE[\phi(X,Y)\mid X]}^2}}^2\\ \nonumber
    &< \bE\sbr{\frac{1}{\pi(X)(1-\pi(X))}\rbr{\bE[\phi(X,Y)\mid X]}^2}\bE\sbr{\frac{1-\pi(X)}{\pi(X)}\rbr{\bE[\phi(X,Y)\mid X]}^2}.
\end{align}

We now return to the comparison of \eqref{eq:ptdvar} and \eqref{eq:oracle_full}. 

Dividing
both sides of \eqref{eq:ineqtptd} by $\bE\sbr{\frac{1}{\pi(X)(1-\pi(X))}\rbr{\bE[\phi(X,Y)\mid X]}^2}$, we see that
\[
\frac{\rbr{\bE\sbr{\frac{1}{\pi(X)}\rbr{\bE[\phi(X,Y)\mid X]}^2}}^2}{\bE\sbr{\frac{1}{\pi(X)(1-\pi(X))}\rbr{\bE[\phi(X,Y)\mid X]}^2}} < \bE\sbr{\rbr{\frac{1}{\pi(X_i)}-1}\bE[\phi(X_i,Y_i)\mid X_i]^2}.
\]
Therefore,  $\sigma^2_{\text{TPTD}}$ \eqref{eq:ptdvar}  is greater than $\sigma^2$ \eqref{eq:oracle_full}.

\subsubsection{Comparison to debiased AIPW in the setting of Section \ref{sec:estpi}}\label{sec:appcomaipw}
The extra first-order term in the influence function (in Remark \ref{prop:covshiftestpifull}) introduced by estimating $\pi$ complicates the derivation of the influence function, and thereby the estimation of the variance. A natural question is: is it possible for the influence function to have a simple form while using a simple PPI estimator of the form in \eqref{eq:ppiestpi} with estimated $\pi$? The answer is yes, but there is a trade-off. 

To eliminate the extra first-order term in the influence function in Remark \ref{prop:covshiftestpifull}, we can impose assumptions on the quality of $f(X)$. Therefore,  we will view it as $\hat{f}_M(X)$, to emphasize that it is trained on $M$ external observations that are independent of the $n + N$ observations used for inference. Instead of Assumption \ref{assmp:estpi}, we assume
\begin{itemize}
    \item  $||\hat{\pi}-\pi||_{L^2(\jointcov)}\rightarrow 0$ as $n+N \rightarrow \infty$, $||\IFtheta(X,\hat{f}_M(X))-\bE[\IFtheta(X,Y)\mid X]||_{L^2(\jointcov)}\rightarrow 0$ as $M \rightarrow \infty$,
    \item  and $||\hat{\pi}-\pi||_{L^2(\jointcov)}\cdot||\IFtheta(X,\hat{f}_M(X))-\bE[\IFtheta(X,Y)\mid X]||_{L^2(\jointcov)}=o_p\rbr{\frac{1}{\sqrt{n+N}}}$.
\end{itemize}

Then, in this case, provided that $\IFtheta(x,y)$ is the efficient influence function, the asymptotic variance in Theorem \ref{thm:covshiftallestpi} will reduce to \eqref{eq:oracle_full}, achieving the asymptotic semiparametric efficiency bound. Compared to Assumption \ref{assmp:estpi}, these are slightly weaker assumptions on the quality of $\hat{\pi}$, but are stronger assumptions on the quality of $f$, and they enable gains in efficiency and neatness in the characterization of the influence function (eliminating the extra term we had in Remark \ref{prop:covshiftestpifull}).

Instead of relying on $\IFtheta(X,\hat{f}_M(X))$ as a sufficiently good estimator of $\bE[\IFtheta(X,Y)\mid X]$, suppose that one is willing to pursue a more refined approach, by appropriately estimating $\bE[\IFtheta(X,Y)\mid X]$ at a desired rate. Then, one can construct the AIPW estimator with this estimate, $\hat{\bE}[\IFtheta(X,Y)\mid X]$ \citep{chernozhukov2018double}. This will also achieve the semiparametric efficiency bound \citep{robins1994estimation,robins1995semiparametric,tsiatis2006semiparametric}, provided that $\IFtheta(x,y)$ is the efficient influence function.

In a practical finite-sample setting, we believe that the approach in Section \ref{sec:estpi} provides a good balance between ease of use and performance.

\subsection{Additional simulation results}\label{sec:xtrasims}
\subsubsection{Variance decrease in TPR and FPR estimation with noisy predictors}\label{sec:reducintprfpr}
From Section \ref{sec:rectifier}, we know the variance decrease arising from PPI depends on $\Cov(\IFtheta(X,Y), \IFtheta(X,f(X)))$. This value depends on the actual form of $\IFtheta$. From Corollary \ref{cor:auc_ppi},  $\Cov(\IFtheta^\AUC(X,Y), \IFtheta^\AUC(X,f(X)))$ can be simplified to $c_\AUC\Cov(Y,f(X))$, where $c_\AUC$ is some constant. Therefore, when $\Cov(f(X),Y)=0$, the PPI estimator does not improve AUC estimation in terms of asymptotic variance.
By contrast, based on the forms of $\IFtheta^\TPR$ and $\IFtheta^\FPR$ in 
Corollaries \ref{cor:tpr_ppi} and \ref{cor:fpr_ppi}, neither  $\Cov(\IFtheta^\TPR(X,Y),\\ \IFtheta^\TPR(X,f(X)))$ nor $\Cov(\IFtheta^\FPR(X,Y), \IFtheta^\FPR(X,f(X)))$ equals 0 when $\Cov(f(X),Y)=0$. Hence, even if $f(X)$ has no association with $Y$, it still leads to variance reduction in TPR and FPR estimation.

\subsubsection{Simulations illustrating Section \ref{rem:fequalsR}}\label{sec:simforrem}
We now consider the setting of Section \ref{rem:fequalsR}, where $f(X)=R$. This setting could arise when the goal is to evaluate TPR, FPR, and AUC for a prediction model $f$. We will now numerically show that PPI estimators  have smaller variances than estimators that use only the labeled data, provided that $\Cov(\IFtheta(X,Y),\IFtheta(X,f(X)))\neq 0$.

We generate data as described in Section \ref{sec:datagen}. Since $f(X)=R$, the targets are 
\begin{align*}
    \target{\theta}^\TPR &:= \Pr(f(\textbf{X})>\alpha\mid Y=1)=\frac{\bE[I(f(\textbf{X})>\alpha)Y]}{\bE[Y]},  \\
    \target{\theta}^\FPR &:= \Pr(f(\textbf{X})>\alpha\mid Y=0)=\frac{\bE[I(f(\textbf{X})>\alpha)(1-Y)]}{\bE[(1-Y)]}, \text{\;\;\;and}  \\
    \target{\theta}^\AUC &:= \int \Pr(f(\textbf{X})>\alpha\mid Y=1)d\Pr(f(\textbf{X})\leq\alpha\mid Y=0).
\end{align*}

We investigate two forms of $f(X)$, as follows:
\begin{itemize}
    \item $\Pr(Y=1\mid \textbf{X})$: This is the ideal model, i.e., $f(\textbf{X})=\bE[Y\mid \textbf{X}]$. Based on the form of the influence functions shown in Section \ref{sec:binary_no_covshift}, Assumption \ref{assmp:oracle_case} is satisfied for estimating TPR, FPR, and AUC.
    \item $\text{RF}(Y\sim \textbf{X})$: A random forest model predicting $Y$ from $\textbf{X}$.
\end{itemize}
These prediction models are independent of the data for conducting inference. The random forest model is trained on 100,000 independent observations. We set $n=1000$ and $\lambda=0.1$. We only generate outcomes $Y$ for the $n$ observations, and generate $f(\textbf{X})$ for all $n+N$ observations. The simulation results are evaluated over 2500 replications.

\begin{figure}[htbp]
  \centering

  \begin{subcaptionbox}{$f(\textbf{X}):\Pr(Y=1\mid \textbf{X})$\label{fig:rem_ideal_ratio}}[0.48\textwidth]
    {\includegraphics[width=\linewidth]{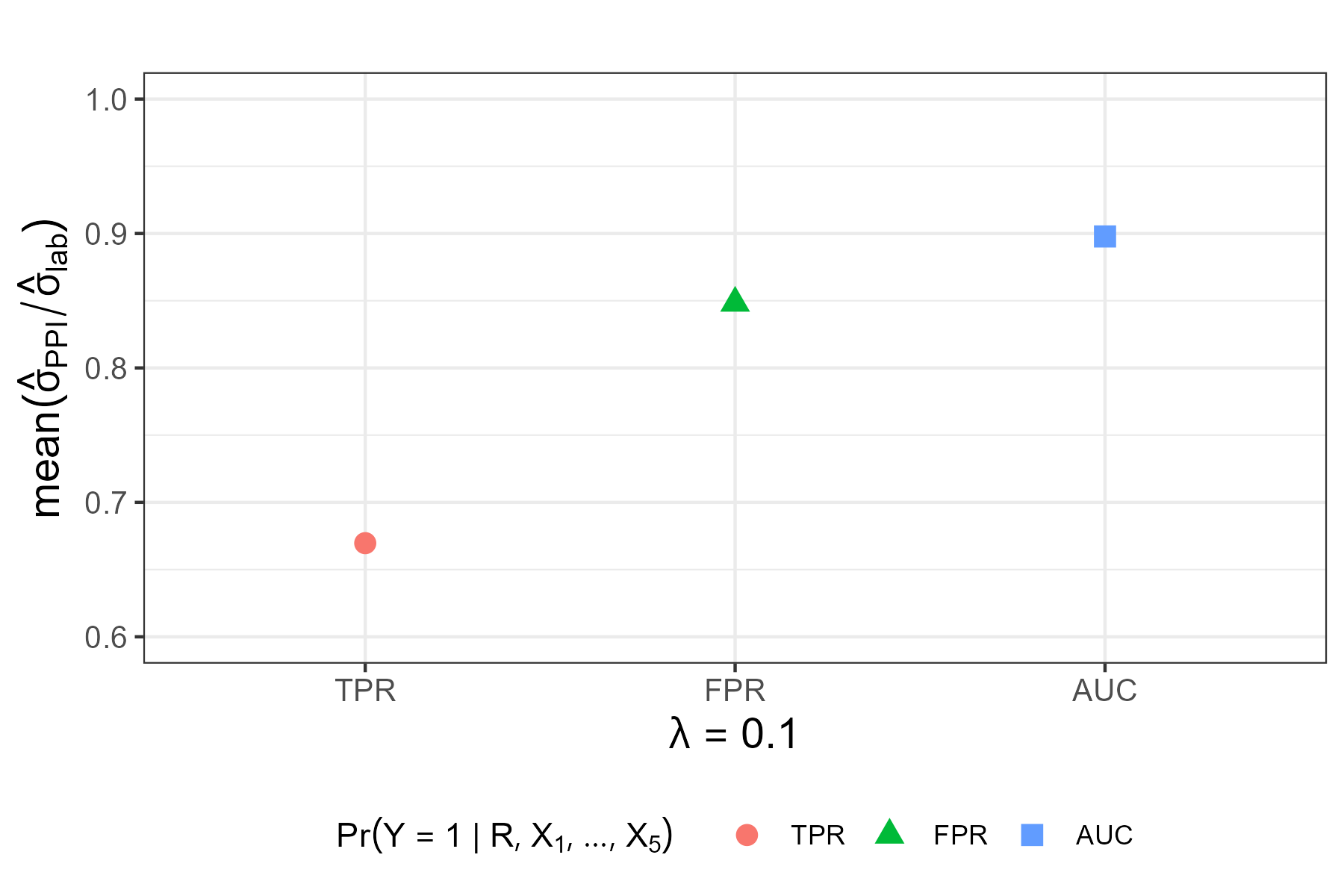}}
  \end{subcaptionbox}
  \begin{subcaptionbox}{$f(\textbf{X}):\text{RF}(Y\sim \textbf{X})$\label{fig:rem_all_ratio}}[0.48\textwidth]
    {\includegraphics[width=\linewidth]{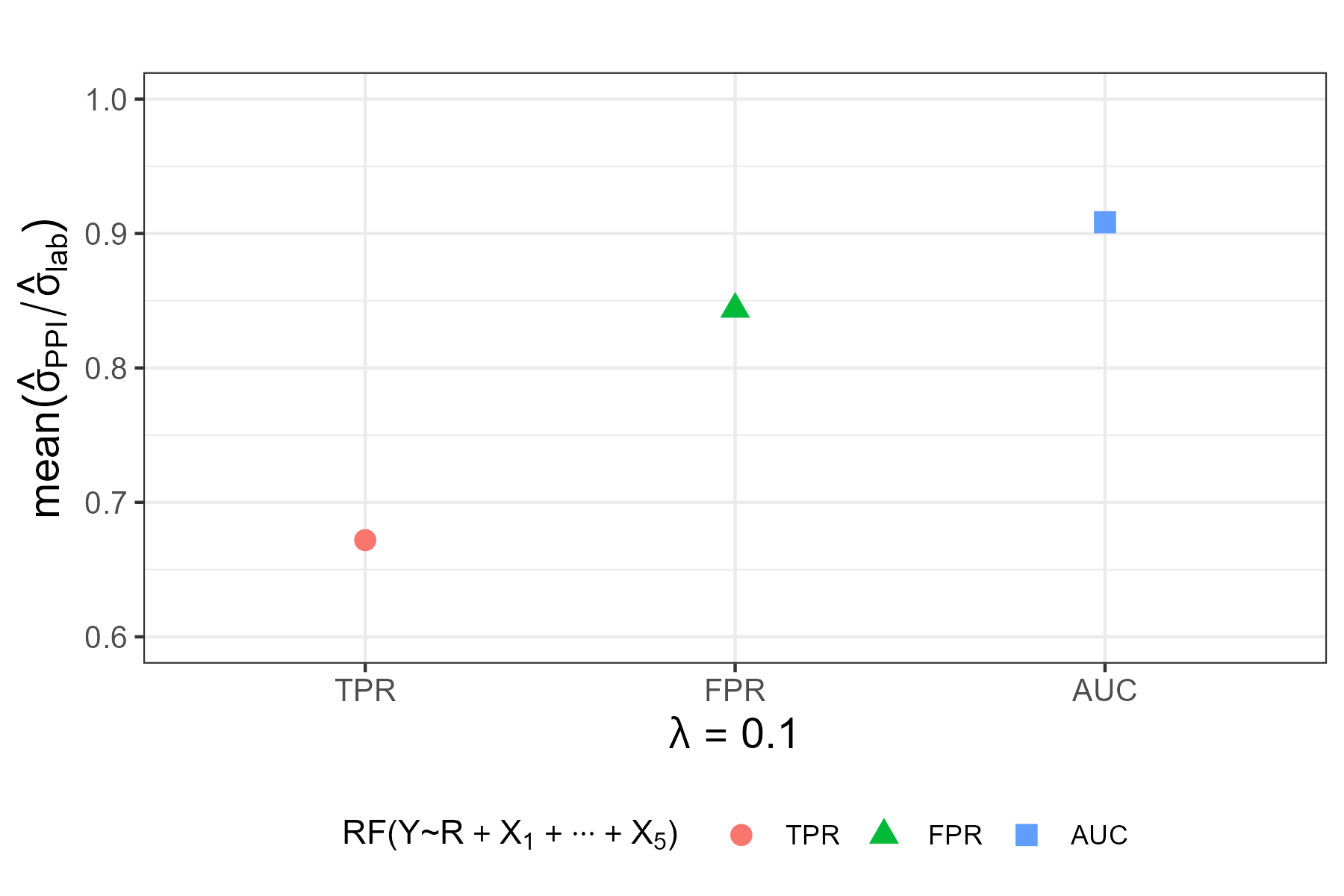}}
  \end{subcaptionbox}
  \caption{Ratio of estimated standard error for the PPI estimator to that of the estimator using only labeled data, averaged over 2500 replications, with $n=1000$, $\lambda=0.1$, for TPR, FPR, and AUC estimation (shapes), in a setting where $R=f(X)$.}
  \label{fig:rem_ratio}
\end{figure}

Figure \ref{fig:rem_ratio} compares the estimated standard errors of the PPI estimators to those of the estimators using only the labeled data for TPR, FPR, and AUC estimation. TPR and FPR are evaluated at a threshold value of 0.6. 
For both $f(\textbf{X})=\Pr(Y=1\mid \textbf{X})$ and $f(\textbf{X})=\text{RF}(Y\sim \textbf{X})$, none of $\Cov(\IFtheta^\TPR(f(\textbf{X}),Y),\IFtheta^\TPR(f(\textbf{X}),f(\textbf{X})))$, $\Cov(\IFtheta^\FPR(f(\textbf{X}),Y),\IFtheta^\FPR(f(\textbf{X}),f(\textbf{X})))$, or $\Cov(\IFtheta^\AUC(f(\textbf{X}),Y),\IFtheta^\AUC(f(\textbf{X}),f(\textbf{X})))$ equals 0. The simulation results align with our theoretical results in Remark \ref{rem:fequalsR}:   the variances of the PPI estimators are smaller than those of the estimators using just the labeled data. 
\begin{figure}[htbp]
    \centering
    \includegraphics[width=\linewidth]{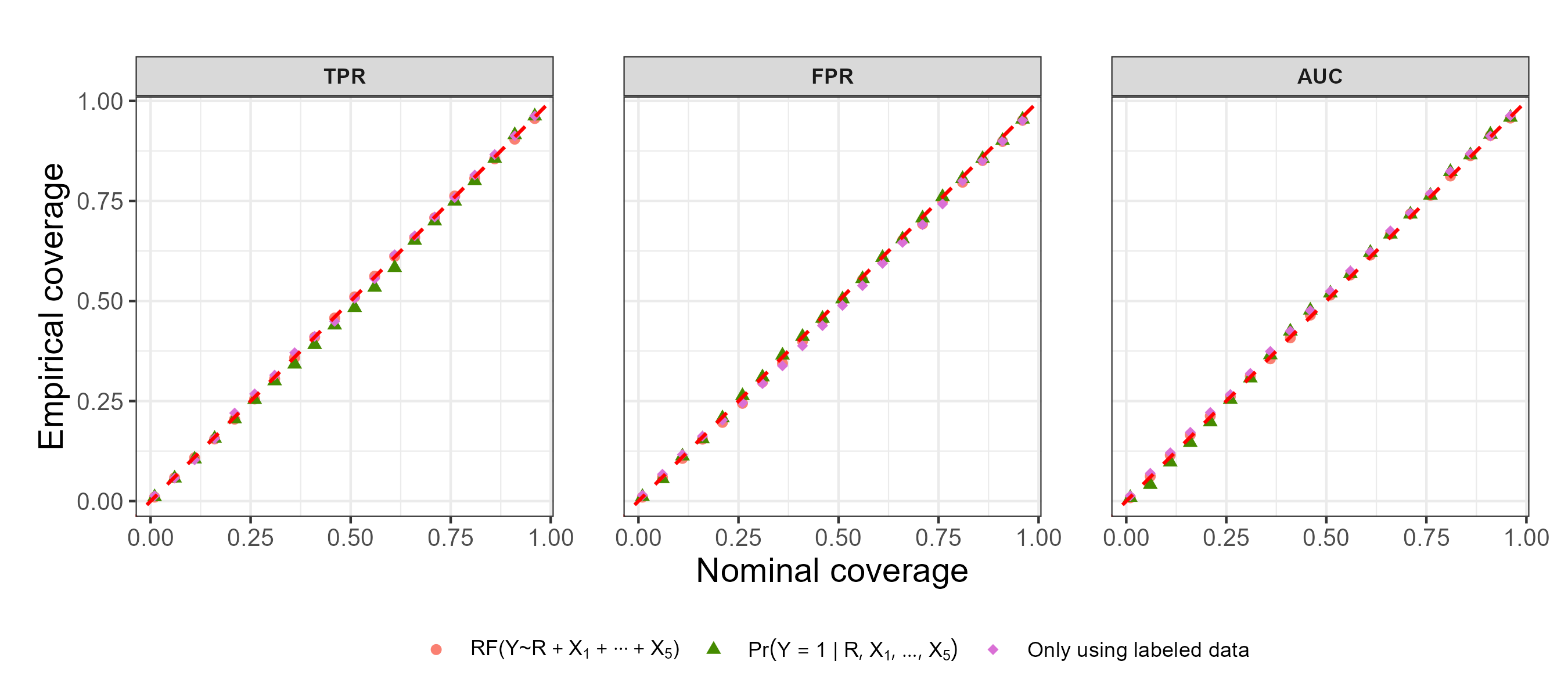}
    \caption{Empirical coverage versus nominal coverage, averaged over 2500 replications, with $n=1000$, $\lambda=0.1$, for TPR, FPR, and AUC estimation, in a setting where $f(\textbf{X})=R$. Shapes denote different prediction models $f$ used in PPI. The red dashed line is the diagonal line. The empirical coverages are close to the nominal coverages.}
    \label{fig:rem_cov}
\end{figure}
Figure \ref{fig:rem_cov} shows that the empirical coverages are close to the nominal coverages.

\subsubsection{Simulations results under a different data generation process}\label{sec:cordgp}
We now generate data with covariates $\textbf{X}:=\rbr{R,X_1,X_2}^\text{T}\in \bR^3$ and outcome $Y \in \cbr{0,1}$ following\\
$\begin{pmatrix}
R  \\
X_1\\
X_2
\end{pmatrix}\iidsim\text{MVN}\rbr{\begin{pmatrix}
0 \\
2\\
0
\end{pmatrix},\Sigma}$, where $\Sigma=\begin{bmatrix}
1 & 0.9 & 0.82 \\
0.9 & 1 & 0.49  \\
0.82 & 0.49 & 1 
\end{bmatrix}$, 
$Y\mid \textbf{X}\overset{\text{ind}}{\sim} \text{Bernoulli}(\Pr(Y=1\mid \textbf{X}))$, and
logit$(\Pr(Y=1\mid \textbf{X}))=R$.

We investigate three candidate prediction models for $f$, as follows:
\begin{itemize}
    \item $\Pr(Y=1\mid \textbf{X})$: This is the ideal model, i.e., $f(\textbf{X})=\bE[Y\mid \textbf{X}]$. Based on the form of the influence functions shown in Section \ref{sec:binary_no_covshift}, Assumption \ref{assmp:oracle_case} is satisfied for estimating TPR, FPR, and AUC, as well as for the mean.
    \item $\text{RF}(Y\sim X_1+X_2)$: A random forest model predicting $Y$ from $X_1$ and $X_2$.
    \item $\text{Unif}[0.01,0.99]$: Uniformly distributed random noise. 
\end{itemize}
These prediction models are independent of the data for conducting inference. Specifically, the random forest models are trained on 100,000 independent observations.

We set $n=1000$ and $\lambda=0.1$. We only generate the outcomes $Y$ for $n$ observations. We generate $\textbf{X}$ over all $n+N$ observations. The simulation results are evaluated over 2500 replications.

\begin{figure}[ht]
    \centering
    \includegraphics[width=1\linewidth]{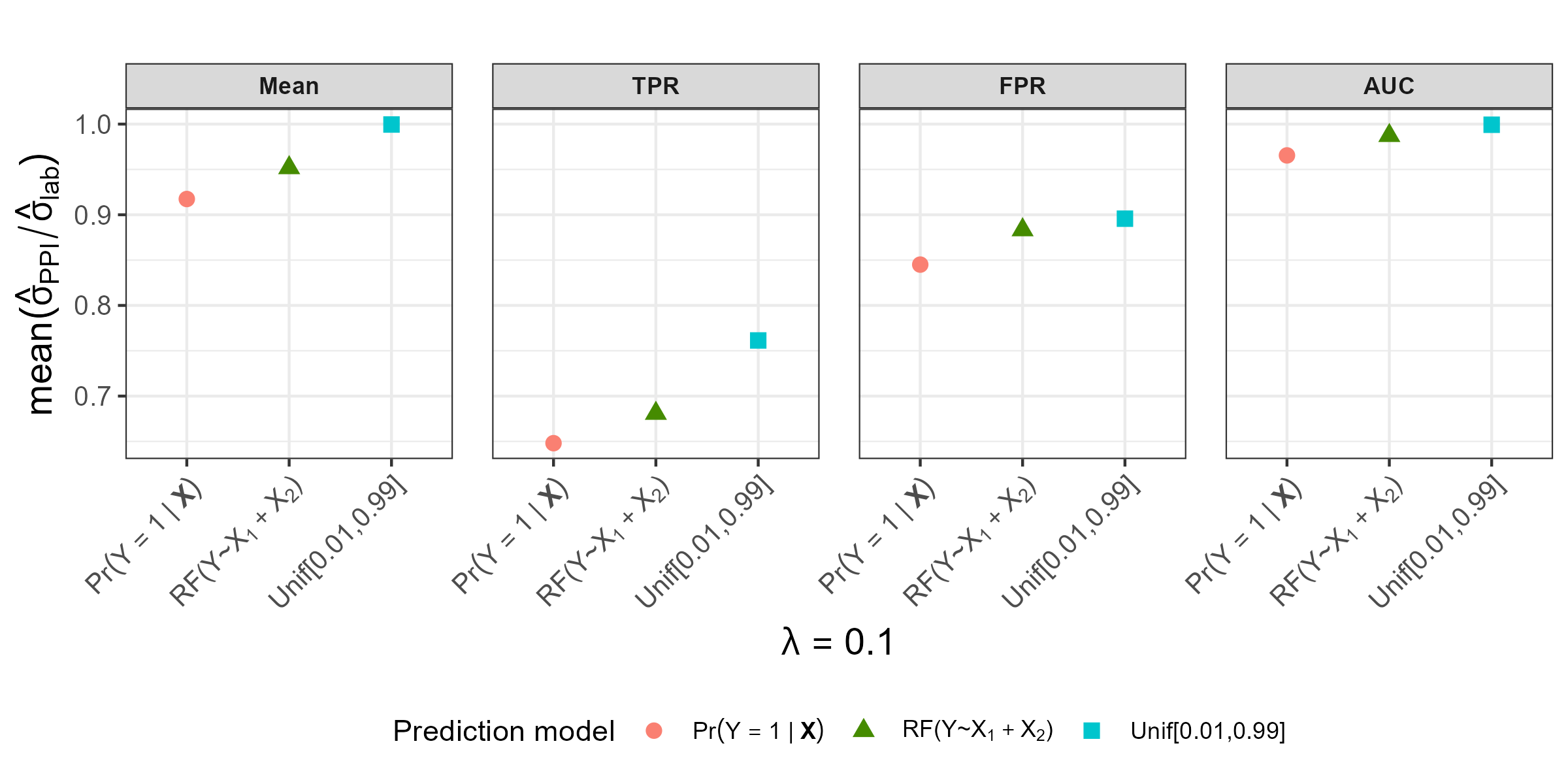}
    \caption{Ratio of estimated standard error for the PPI estimator and that of the estimator using only labeled data, averaged over 2500 replications, in the setting of Section \ref{sec:cordgp} with $n=1000$, $\lambda=0.1$, for mean, TPR, FPR, and AUC estimation. Shapes denote different prediction models $f$ used in PPI. The standard errors of PPI estimators are no greater that those of the estimators using only the labeled data. $\text{RF}(Y\sim X_1+X_2)$ does not use the feature involved in the data generation process, but still results in improvement.}
    \label{fig:cor_ratio}
\end{figure}
Figure \ref{fig:cor_ratio} compares the estimated standard errors of the PPI estimators to those of the estimators using only the labeled data for mean, TPR, FPR, and AUC estimation. TPR and FPR are evaluated at a threshold value of 0.6. Note that $\text{RF}(Y\sim X_1+X_2)$ does not contain the feature involved in the data generation process, but still results in improvement in variance across every estimand due to correlation with that feature.
\begin{figure}[ht]
    \centering
    \includegraphics[width=0.95\linewidth]{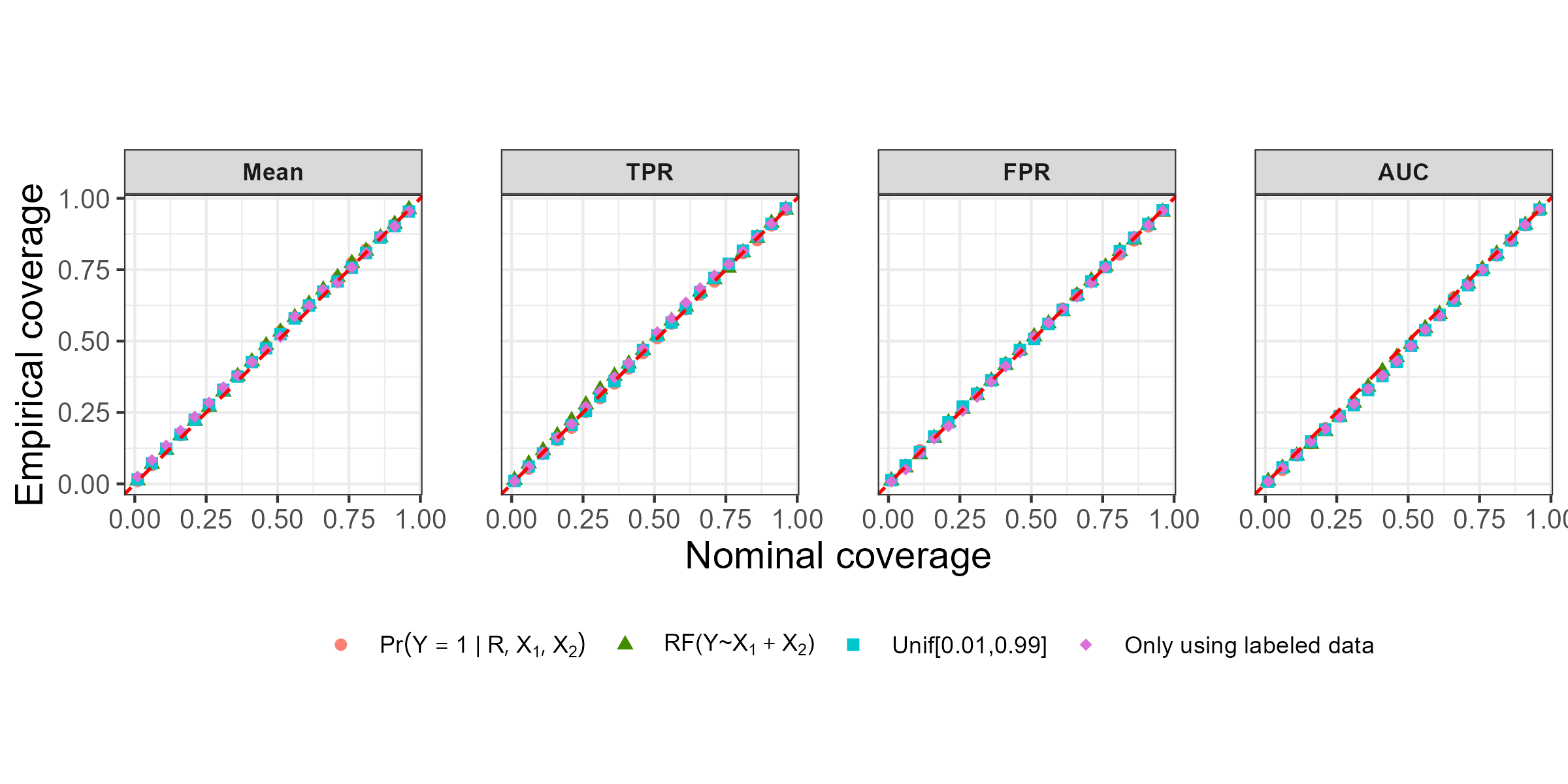}
    \caption{Empirical coverage versus nominal coverage, averaged over 2500 replications, in the setting of Section \ref{sec:cordgp} with $n=1000$, $\lambda=0.1$, for mean, TPR, FPR, and AUC estimation. The red dashed line is the diagonal line. The empirical coverages are close to the nominal coverages.}
    \label{fig:cor_cov}
\end{figure}
Figure \ref{fig:cor_cov} shows that the empirical coverages of the PPI estimators are close to the nominal coverages.
\subsubsection{Coverage results for setting $n=10,000$, $\lambda=0.1$, averaged over 10,000 replications}\label{sec:big_sim}
\begin{figure}[htbp]
    \centering
    \includegraphics[width=0.95\linewidth]{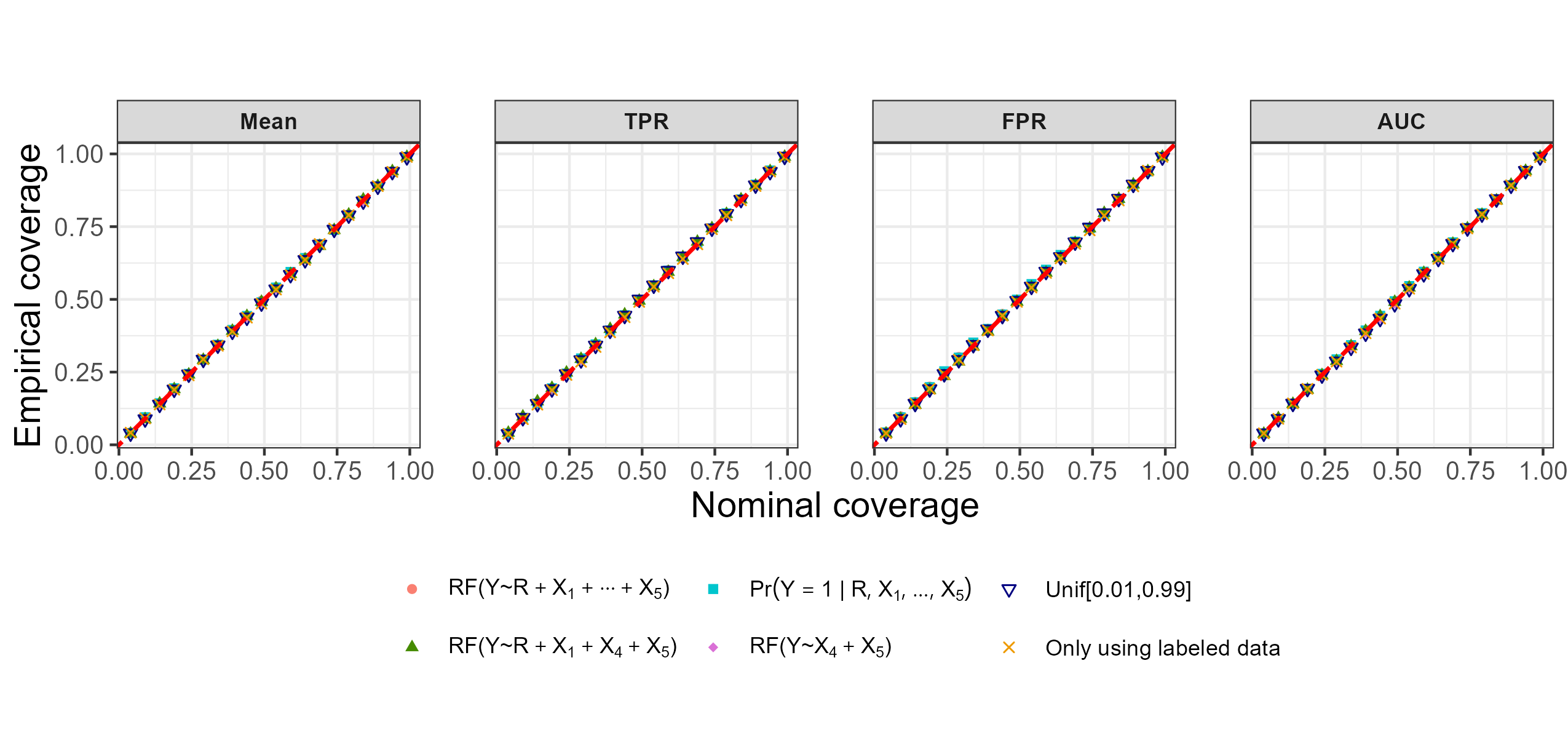}
    \caption{Empirical coverage versus nominal coverage, averaged over 10,000 replications, corresponding to Section \ref{sec:big_sim}, with $n=10,000$, $\lambda=0.1$, for TPR, FPR, and AUC estimation. The red dashed line is the diagonal line. The empirical coverages are close to the nominal coverages.}
    \label{fig:big_cov}
\end{figure}
Figure \ref{fig:big_cov} shows that the empirical coverages align very well with the nominal coverages, in the setting of Figure \ref{fig:coverage} in Section \ref{sec:simres_noncovshift}, but with a larger sample size and more simulated datasets.

\subsubsection{Bias results for Sections \ref{sec:sim_nocovshift} and \ref{sec:sim_covshift}}\label{sec:biasres}

Figures \ref{fig:bias_nocovshift} and \ref{fig:bias_covshift} show that the bias of PPI estimators in the settings of Sections \ref{sec:sim_nocovshift} and \ref{sec:sim_covshift} is close to 0.
\begin{figure}[ht]
    \centering
    \includegraphics[width=1\linewidth]{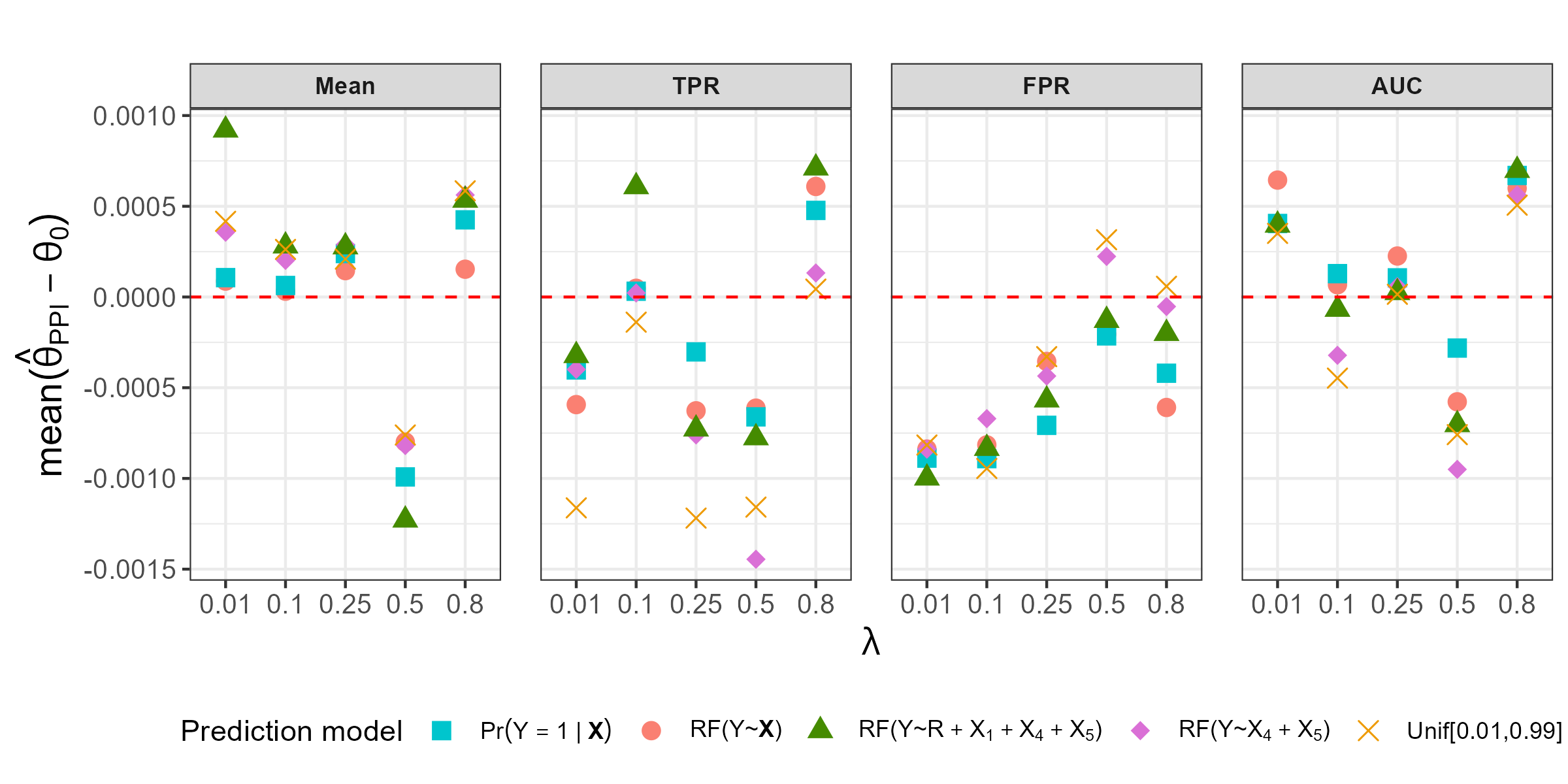}
    \caption{Bias of the PPI estimators, averaged over 2500 replications, in the setting of Section \ref{sec:sim_nocovshift} with $n=1000$, $\lambda\in \{0.01,0.1,0.25,0.5,0.8\}$, for mean, TPR, FPR, and AUC estimation. Shapes denote different prediction models $f$ used in PPI. In all cases, the bias is close to 0.}
    \label{fig:bias_nocovshift}
\end{figure}

\begin{figure}[ht]
    \centering
    \includegraphics[width=1\linewidth]{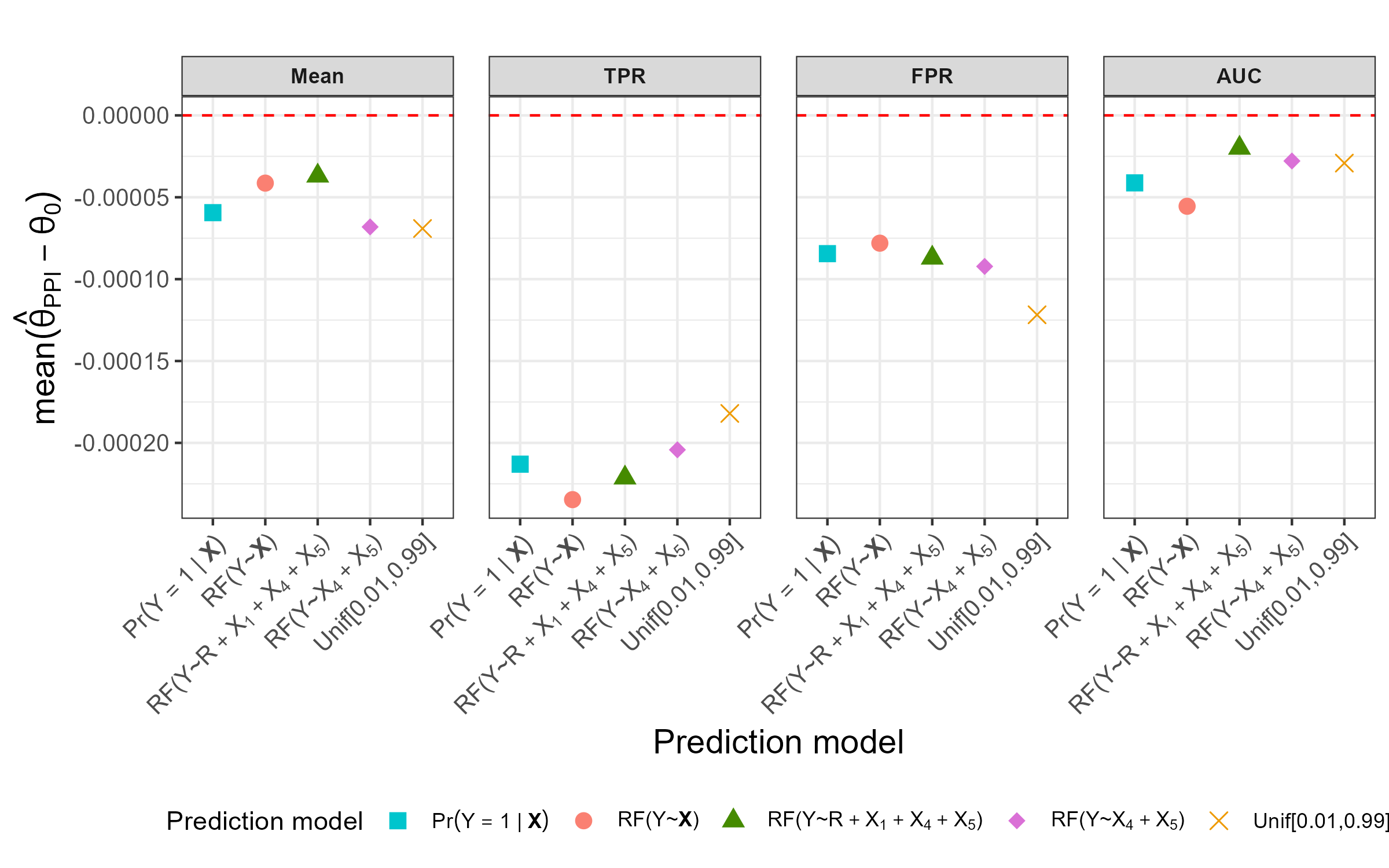}
    \caption{Bias of the PPI estimators, averaged over 2500 replications, in the setting of Section \ref{sec:sim_covshift} with $n+N=50,000$, for mean, TPR, FPR, and AUC estimation. Shapes denote different prediction models $f$ used in PPI. In all cases, the bias is close to 0.}
    \label{fig:bias_covshift}
\end{figure}